\providecommand{\tabularnewline}{\\}
\numberwithin{equation}{section}
\numberwithin{figure}{section}
\theoremstyle{plain}
\newtheorem{thm}{\protect\theoremname}
\theoremstyle{definition}
\newtheorem{example}[thm]{\protect\examplename}
\theoremstyle{remark}
\newtheorem{rem}[thm]{\protect\remarkname}
\theoremstyle{plain}
\newtheorem{lem}[thm]{\protect\lemmaname}
\numberwithin{equation}{section}
\numberwithin{figure}{section}
\theoremstyle{plain}
\newtheorem{assumption}{Assumption}
\newcommand{\PSI}{\mathbf{\Psi}}
\providecommand{\examplename}{Example}
\providecommand{\lemmaname}{Lemma}
\providecommand{\remarkname}{Remark}
\providecommand{\theoremname}{Theorem}
\begin{document}

\title[Nonparametric for Economic Experiments]{Nonparametric Estimation and Inference in Economic and Psychological
Experiments}

\author{Raffaello Seri, Samuele Centorrino, Michele Bernasconi}

\date{Last Version: \today; First Version September 1, 2013}

\address{Università dell'Insubria, Stony Brook University, Università Ca'
Foscari}
\begin{abstract}
The goal of this paper is to provide some tools for nonparametric
estimation and inference in psychological and economic experiments.
We consider an experimental framework in which each of $n$subjects
provides $T$ responses to a vector of $T$ stimuli. We propose to
estimate the unknown function $f$ linking stimuli to responses through
a nonparametric sieve estimator. We give conditions for consistency
when either $n$ or $T$or both diverge. The rate of convergence depends
upon the error covariance structure, that is allowed to differ across
subjects. With these results we derive the optimal divergence rate
of the dimension of the sieve basis with both $n$ and $T$. We provide
guidance about the optimal balance between the number of subjects
and questions in a laboratory experiment and argue that a large $n$ is
often better than a large $T$. We derive conditions for asymptotic
normality of functionals of the estimator of $T$and apply them to
obtain the asymptotic distribution of the Wald test when the number
of constraints under the null is finite and when it diverges along
with other asymptotic parameters. Lastly, we investigate the previous
properties when the conditional covariance matrix is replaced by an
estimator.
\end{abstract}

\thanks{This work was supported by a grant from MIUR (Ministero dell'Istruzione,
dell'Università e della Ricerca). Authors would like to thank seminar
participants at Stony Brook University, University of California Riverside
and the 2016 UK Econometrics Study Group, especially Xavier D'Haultfoeuille,
for useful comments and remarks.}
\maketitle

\section{Introduction}

The aim of this paper is to provide a statistical theory useful for
the nonparametric analysis of laboratory experiments in economics
and psychology.

In the typical experiment we have in mind, there are $n$ subjects
who are administered $T$ tasks. Task $t$ is characterized by $X_{t}$,
a $d$-dimensional stimuli-vector that is the same for each subject
$i$, for $i=1,\dots,n$. The response or choice of subject $i$ in
task $t$ is denoted by $Y_{it}$. We suppose that the \emph{Data
Generating Process} (DGP) of the set of answers $Y_{it}$ ($i=1,...,n$,
$t=1,...,T$) has a deterministic component represented by a nonparametric
function $f\left(X_{t}\right)$ of the stimuli, and a stochastic component
$\varepsilon_{it}$ arising from individual error terms. Hence, we
have the system: 
\begin{equation}
Y_{it}=f(X_{t})+\varepsilon_{it},\text{ \ \ \ \ \ \ }i=1,\dots,n\text{ \ \ }t=1,\dots,T.\label{EQ:Basic}
\end{equation}
The function $f:\mathcal{X}\mapsto\mathbb{R}$ can be interpreted
as a deterministic theory that maps every value of the vector of stimuli
$X$ to a space of real valued responses.

Equation \eqref{EQ:Basic} resembles a framework already discussed
in experimental economics by \citet{hey2005}, which involves a nonstandard
econometric system, and encompasses several models arising in economics
and psychologica experiments. To the best of our knowledge, however,
a complete statistical analysis of this system has yet to be conducted.
For instance, the vector $X_{t}$ can represent the prizes and the
probabilities of a lottery presented in an economic experiment in
which subjects are asked to give the certainty equivalent of the lottery.
In this case, $f\left(\cdot\right)$ is the function used by subjects
to evaluate the lotteries.\footnote{The present approach can be equally applied, though it may not be
the most efficient, to situations in which $X_{t}$ is defined as
$X_{t}=\left(a_{t},b_{t}\right)$, for two lotteries $a_{t}$ and
$b_{t}$ presented to subjects in pairwise choice experiments, and
$Y_{it}$ is simply the choice (coded in some way) of subject $i$.} In psychophysical experiments, the vector $X_{t}$ can be thought
to represent stimuli such as light, sound, weight, distance for which
subjects are asked to assess in pairwise comparisons the relative
magnitude. In this case, $f\left(\cdot\right)$ is the scale used
by subjects to measure the stimuli. The fact that the explanatory
variables are the same across individuals has a double justification:
first of all, in many empirical studies the choice of the stimuli
is so difficult that it is not possible to conduct it for each individual;
second, when the regressors are the same across individuals the estimation
problem is more difficult (in the sense that we have less information
from the variation in the independent variables). 

In statistics and econometrics, this model can be cast in the well-known
and extensively studied framework of the nonparametric regression
model \citep{liracine2007,tsybakov2009}. There are, however, several
distinctive features of this model that make its analysis different
and, in some respects, more challenging than the standard nonparametric
regression. First, the realizations of the stimuli are not random
observations from an underlying statistical distribution, but are
chosen by the experimenter: the more complex the function one wishes
to estimate, the richer the support of the data one needs to achieve
consistency. Second, the statistical approach proposed in this work
does not impose any specific restrictions on the structure of the
error terms. In particular, it seems important that even if \eqref{EQ:Basic}
holds true, the error terms for different individuals should be allowed
to be differently distributed, provided $\mathbb{E}\left(\varepsilon_{it}\right)=0$.
Among other things, this means that we allow different individuals
to have different degrees of precision. This seems especially important
in economics and psychology, when a researcher may have little to
no knowledge about a theory that explains randomness in the responses.
Also, individual variances may contain very persistent components,
and, therefore, consistent estimation of an individual-specific response
function may be unfeasible.\footnote{A similar statistical framework has been studied by \citet{stanis1998}.
While they also allow for the stimuli-vector to be time-varying, they
suppose that the error term is a white noise. }

Our goal in this paper is to study the nonparametric estimation of
the function $f\left(\cdot\right)$ in \eqref{EQ:Basic} using the
method of sieves \citep[see][among others]{newey1997,jong2002,chen2007,belloni2014,chenchristensen2013}.
The function of interest is approximated by a finite linear combination
of some known basis functions (e.g., power series, regression splines,
trigonometric polynomials), which effectively reduce the estimation
problem to a finite number of parameters. The weights in the function
approximation can be estimated through linear regression supposing
that individuals and answers across individuals are independent. The
number of approximating terms diverges to infinity with the sample
size. We show that this estimator of the function $f\left(\cdot\right)$
is consistent, and we provide the convergence rate for this nonparametric
estimator. We show that the convergence rate depends on the number
of tasks ($T$), the number of individuals ($n$), and the number
of basis functions used to approximate $f\left(\cdot\right)$. Our
convergence rate, however, also depends on the properties of the average
covariance matrix across individuals for a given task $t$. Heuristically,
it thus explicitly takes into account the precision of the subjects
in answering the questions and/or selecting specific choices.

We also provide asymptotic normality results for both linear and nonlinear
functionals of the nonparametric estimator which are useful to obtain
the asymptotic properties of the Wald test in this framework \citep{chen2012b}.
We derive the properties of the latter both in the case where the
number of constraints is finite (parametric restrictions), which gives
the standard $\chi^{2}$ distribution under the null; and when the
number of constraints diverges to infinity along with other asymptotic
parameters (a normal distribution). Lastly, we investigate what happens
when the average variance matrix appearing in the previous tests is
replaced by an estimator. We believe inference is an essential part
of our statistical theory, as it allows us to test specific behavioral
assumptions.

\citet{hey2005} points out that underlying system \eqref{EQ:Basic}
is the idea that the theory under investigation is deterministic,
but that people apply the theory with noise. Such an approach, which
is sometimes referred to as Thurstonian or Fechnerian, underlies for
example the investigations conducted by \citet{falmagne1976}, \citet{orme1994},
\citet{buschena2000}, and \citet{blavatskyy2007}. Alternatively,
other authors (including \citealp{camerer1994}, \citealp{loomes1995},
\citealp{loomes2002}, \citealp{myung2005}) tend to interpret individual
behavior in experiments (and possibly in the real world) as inherently
stochastic, in the sense that while the theories remain deterministic,
their predictions are not because of the imprecision of people to
know and to use the same specification of the theory every time it
is required.\footnote{For example, describing the philosophy behind the approach with reference
to preference theories, \citet{loomes2005} argues that the approach
``rather than supposing an individual to have a single true preference
function to which white noise is added, ... treats imprecision \emph{as
if} an individual's preference consist of a \emph{set} of such functions.
Thus to say that a particular individual behaves according to a certain
`core' theory is to say that the individuals' preferences can be represented
by some functions, all of which are consistent with the theory; but
that on any particular occasion, the individual act \emph{as if} she
picks one of those functions at random from the set, applies it to
the decision at hand, then replaces that function before picking again
at random in order to address the next decision'' (p. 306). Antecedents
of this approach can be found in \citet{becker1963}. It is also important
to emphasize that this approach is still very different from the case
in which the `core' theory itself would be made inherently stochastic
--- as for example advocated by \citet{luce1997}.} The distinction between the two approaches, however, though quite
interesting philosophically, is of practical relevance only when either
of the following two circumstances applies. The first is when the
dependence of the answers $Y_{it}$ from the stimuli $X_{t}$ (according
to function $f\left(\cdot\right)$) is parametric. In this case, the
question of the two approaches turns into a fundamental question about
whether the parameters to be estimated can be interpreted in deterministic
terms or as random variables (as for example in the Bayesian approach
pursued by \citealp{karabatsos2005}, and \citealp{myung2005}). The
second applies to the specific restrictions on the errors terms which
may be required by the statistical procedure used to analyze the experimental
data (see for example \citealp{ballinger1997}, for the discussions
of several restrictions often imposed for the empirical analyses of
data from decision theory experiments).

The present statistical approach is unaffected by both circumstances,
so that it can be viewed to encompass both philosophies. First of
all, a notable feature of the present approach is that the dependence
of the answers $Y_{it}$ from the stimuli $X_{t}$ is left nonparametric.
Nonparametric dependence has the advantage that theoretical and/or
behavioral properties of interest can be estimated and tested without
the mediation of parametric restrictions, which (for the reasons just
exposed) may not be unambiguously interpreted. Furthermore, nonparametric
dependence is natural if one wishes to fit the experimental data without
imposing any restrictions on behavior. Such `unrestricted' model could
be a useful benchmark against which to compare any structural model
indicated by specific theories.\footnote{See \citet{bernasconi2008,bernasconi2010a}, for applications of such
an approach in regards to experimental investigations of, respectively,
psychophysical measurement theories and decisions theories.} 

As mentioned above, we also allow for the possibility that the precision
of answers of the same individual varies across different questions.
This is important because various previous studies have emphasized
forms of heterogeneity occurring both at levels of individuals and
of different experimental tasks \citep[e.g.,][]{ballinger1997,buschena2000,carbone2000,blavatskyy2007,butler2007}.

Finally, we should note that a long debated dispute in economic and
psychological experiments is whether the analysis of the individual
responses $Y_{it}$ should be conducted for the aggregate of the individuals
or individual by individual. The analysis presented here is primarily
thought for the former case. In particular, depending on the degree
of heterogeneity and precision of the experimental sample and of the
theory one would like to test, large values of $n$ and/or of $T$
may have different impacts on the consistency of our estimator of
the function $f\left(\cdot\right)$ and its derivatives.\footnote{An additional reason to prefer an aggregate analysis is that an experimenter
may decide to assign different values of the stimuli to different
subjects. Assuming that the assignment mechanism is random, the aggregate
model would allow to approximate the function $f\left(\cdot\right)$
on a richer support. In this situation, the convergence rates presented
in this paper constitute a worst-case scenario.} If, however, one believes that the aggregate analysis cannot be carried
forward because all individuals are characterized by different functions
$f_{i}\left(X_{t}\right)$,\footnote{In psychology, the risks of averaging across individuals when they
are characterized by different functions has been stressed, e.g.,
in \citet[p. 99]{skinner_reinforcement_1958}, \citet[p. 212]{yost_variability_1981}
and \citet{Bernasconi-Seri-JMP-2016}.} then our results apply verbatim, simply taking $n=1$ and letting
the number of tasks $T$ to diverge. In this case, our analysis can
be seen as an extension of the results in \citet{newey1997} and \citet{belloni2014}
to the case of deterministic regressors. Consistency is then guaranteed
only under more stringent conditions on the variance of each individual
error term.

The model in \eqref{EQ:Basic} can also be interpreted as a panel
data specification in which the covariates vary only with $t$. We
do not pursue this interpretation further in this work, but we notice
that \citet{su2012} have considered a panel data model with factor
structure in the error term in which the function $f\left(\cdot\right)$
is allowed to vary across individuals, with $n,T\rightarrow\infty$.
Notice, however, that time series data have a natural ordering that
can be used in the asymptotic analysis, while our data do not possess
such ordering. We show below that our convergence rates are amenable
to some of their results, upon additional restrictions on the variance
of the error term.

The present statistical approach is also suitable for various extensions
which we indicate in the conclusions.

\section{The Statistical Model\label{Sect - Model}}

We recall that the data generating process is modeled as follows:
\begin{equation}
Y_{it}=f(X_{t})+\varepsilon_{it},\label{EQ:Basic2}
\end{equation}
where $i=1,\dots,n$ denotes the individual (or respondent) and $t=1,\dots,T$
denotes a specific task. The dependent variable $Y\in\mathbb{R}$
and the vector of independent variables $X\in\mathcal{X}$, where
$\mathcal{X}$ is taken to be a compact subset of $\mathbb{R}^{d}$.\footnote{Taking $\mathcal{X}$ to be compact does not appear to be a strong
restriction in this setting, as points are chosen by the experimenter
possibly within a bounded interval. In the development of our theory,
this assumption could be relaxed by substantially modifying our method
of proof \citep[see][]{chenchristensen2013}.} In the following, we will suppose that the function $f\left(\cdot\right)$
belongs to a space $\mathcal{F}$ that will not be specified explicitly:
when in the discussion of the results we will suppose that $f\left(\cdot\right)$
has at least $s$ continuous derivatives, it is intended that $\mathcal{F}$
will coincide with the Sobolev space $\mathcal{W}_{s,\infty}=\left\{ f:\left|f\right|_{s}<\infty\right\} $.

This statistical models fits several experimental and quasi-experimental
frameworks.
\begin{example}
{[}Cumulative Prospect Theory{]}\label{Ex - CPT - 1} In the following,
$t$ denotes the given lottery and $i$ the individual. Consider a
gamble $\left(X_{t},p_{t},0\right)$. Let $CE_{it}$ be the \emph{certainty
equivalent} that the individual associates with the gamble, i.e. the
certain monetary amount that makes her indifferent between the two.
Cumulative Prospect Theory (CPT) starts from the following model:
\[
U_{1}\left(CE_{it}\right)=U_{2}\left(X_{t}\right)\cdot g\left(p_{t}\right)+U_{2}\left(0\right)\cdot\left[1-g\left(p_{t}\right)\right],
\]
where $U_{1}$ is the utility function, $U_{2}$ is called the \textit{evaluation
functional}, which may or may not be equal to $U_{1}$, and $g$ is
the probability weighting function, that is taken to be strictly monotone
increasing in $p_{t}$. Supposing that $U_{2}\left(0\right)=0$, we
get:
\[
U_{1}\left(CE_{it}\right)=U_{2}\left(X_{t}\right)\cdot g\left(p_{t}\right).
\]
Now, the experimental elicitation of the certainty equivalent has
attracted critiques for its unreliability \citep[see, e.g.,][Section 1.2.2.1]{hershey1985probability,wakker1996eliciting,Harrison-Rutsroem-BOOK-2008,luce1999utility}.
Several authors have advocated instead to elicit the probability equivalent
$PE_{it}$ such that the individual is indifferent between $\left(X_{t},PE_{it},0\right)$
and $CE_{t}$. The previous relation can be written as:
\[
g\left(PE_{it}\right)=\frac{U_{1}\left(CE_{t}\right)}{U_{2}\left(X_{t}\right)}.
\]
We can write:
\[
PE_{it}=g^{-1}\left(U_{1}\left(CE_{t}\right)/U_{2}\left(X_{t}\right)\right)=F\left(V_{1}\left(\ln CE_{t}\right)-V_{2}\left(\ln X_{t}\right)\right).
\]
where $V_{j}\left(\cdot\right)=\ln U_{j}\left(\exp\left(\cdot\right)\right)$,
for $j=1,2$, and $F\left(\cdot\right)=g^{-1}\left[\exp\left(\cdot\right)\right]$.
Finally, supposing that the previous representation holds with some
error $\varepsilon_{it}$, this implies: 
\[
PE_{it}=F\left(V_{1}\left(\ln CE_{t}\right)-V_{2}\left(\ln X_{t}\right)\right)+\varepsilon_{it}.
\]
\begin{example}
{[}Stevens' Model{]}\label{Ex - Stevens Model - 1} In \emph{ratio
magnitude estimation}, one of the most common form of psychophysical
experiments, the aim is to evaluate the intensity of a set of stimuli
with respect to a reference stimulus whose intensity is set to $1$,
thus justifying the alternative name of magnitude estimation with
a standard \citep[see, e.g.,][]{luce1988}. In task $t$ of the experiment,
an individual $i$ is proposed two stimuli, $X_{1t}$ and $X_{2t}$,
and asked to state the ratio $p_{t}$ of their intensities. One of
the most well-known models in mathematical psychology is Stevens'
model, in which (see \citealp{stevens_psychophysics:_1975,kornbrot_human_2014,Bernasconi-Seri-JMP-2016}):
\[
p_{it}=\left(\frac{X_{1t}}{X_{2t}}\right)^{\kappa}.
\]
It is generally, but not always, the case that $X_{1t}>X_{2t}$ and
$p_{it}>1$. Taking logarithms, we get $\ln p_{it}=\kappa\ln\left(X_{1t}/X_{2t}\right)$.
In order to estimate the model, we set $Y_{it}\triangleq\ln p_{it}$
and $X_{t}\triangleq\ln\left(X_{1t}/X_{2t}\right)$ to get a regression
model without intercept.
\end{example}

\end{example}

We now rewrite the model in equation \eqref{EQ:Basic2} using matrix
notations. We form the $T\times1$-vectors $Y_{i}=\left[Y_{i1},\dots,Y_{iT}\right]^{\prime}$
and $\varepsilon_{i}=\left[\varepsilon_{i1},\dots,\varepsilon_{iT}\right]^{\prime}$.
We suppose that $\varepsilon_{i}$ has a distribution with mean $0$
and variance $\Sigma_{i}$, for every $i=1,\dots,n$. We further define
the $\left(T\times d\right)$-matrix $\mathbf{X}$ obtained stacking
the vectors $\lbrace X_{t},t=1,\dots,T\rbrace$. Finally, 
\[
Y_{i}=f\left(\mathbf{X}\right)+\varepsilon_{i},\quad\quad i=1,\dots,n,
\]
where the function $f\left(\cdot\right)$ is supposed to apply row-wise
to the matrix $\mathbf{X}$. We make the following assumption about
the vector of errors $\varepsilon_{i}$. 

\begin{assumption}\label{ASS1}~ 
\begin{itemize}
\item[(i)] The random vector $\varepsilon_{i}$ is such that $\mathbb{E}\left(\varepsilon_{i}\right)=0$,
$\mathbb{E}\left(\varepsilon_{i}\varepsilon_{i}^{\prime}\right)=\Sigma_{i}$
for all $i=1,\dots,n$, and $\mathbb{E}\left(\varepsilon_{i}\varepsilon_{i^{\prime}}^{\prime}\right)=0_{T\times T}$,
for all $i,i^{\prime}=1,\dots,n$ and $i\neq i^{\prime}$. 
\item[(ii)] Every element of $\Sigma_{i}$ is finite and the matrix $\Sigma_{i}$
is positive definite for all $i=1,\dots,n$. 
\end{itemize}
\end{assumption}

There are not noteworthy details in this assumption. We take the error
terms to be uncorrelated across individuals $i$ and we impose some
regularity conditions on the covariance matrix, which is otherwise
left unspecified.

We now structure the statistical model for the whole data. We build
the $\left(nT\times1\right)-$vectors $\mathbf{Y}$ and $\varepsilon$
by stacking respectively the $n$ vectors $Y_{1},\dots,Y_{n}$ and
$\varepsilon_{1},\dots,\varepsilon_{n}$.

We finally have: 
\begin{equation}
\mathbf{Y}=e\otimes f\left(\mathbf{X}\right)+\varepsilon,\label{Eq - Mean Model}
\end{equation}
where $e$ is a $\left(n\times1\right)$-vector of ones. $\varepsilon$
is then a vector with mean $0$ and variance $\bigoplus_{i=1}^{n}\Sigma_{i}$,
where $\bigoplus$ denotes the direct sum of matrices. That is, $\bigoplus_{i=1}^{n}\Sigma_{i}=\textrm{diag}\left(\Sigma_{1},\dots,\Sigma_{n}\right)$.
\begin{rem}
\label{Ex - Heterogeneity in Variances} The fact that every individual
is allowed to have a potentially different covariance matrix is a
crucial characteristic in our setting. Consider a random function
$f:\mathcal{X}\times\Omega\rightarrow\mathbb{R}$, and suppose that
the decision model for individual $i$ is defined by the function
$f_{i}\left(\cdot\right)=f\left(\cdot,\omega_{i}\right)$, where $\omega_{i}$
is a drawing from a random variable $\omega$, and denote $f\left(\cdot\right)\triangleq\mathbb{E}_{\omega}f\left(\cdot,\omega\right)$,
so that: 
\begin{align*}
Y_{it}= & f_{i}\left(X_{t}\right)+\eta_{it}=f\left(X_{t},\omega_{i}\right)+\eta_{it}\\
= & \mathbb{E}_{\omega}f\left(X_{t},\omega_{i}\right)+f_{i}\left(X_{t}\right)-\mathbb{E}_{\omega}f\left(X_{t},\omega_{i}\right)+\eta_{it}\\
= & f\left(X_{t}\right)+\left\{ f_{i}\left(X_{t}\right)-f\left(X_{t}\right)+\eta_{it}\right\} .
\end{align*}
Here the average $f$ is independent of the individual, but the error
term is heteroskedastic (in the sense that it depends on the regressors)
and heterogeneous (in the sense that it is different across individuals).
In this setting, part of the correlation in the residuals is induced
by the averaging across individuals, and conducting the analysis at
the level of the single individual may improve inferences. 
\end{rem}

For estimation and inference, we take an approximation of $f\left(\cdot\right)$
using a linear combination of basis functions in $\mathcal{X}$. Thus,
at $X_{t}=x$, we take 
\[
f_{P}\left(x\right)=\psi_{P}\left(x\right)\beta,
\]
where $\psi_{P}\left(x\right)=\left[\psi_{1,P}\left(x\right),\dots,\psi_{P,P}\left(x\right)\right]$
is a $1\times P$ vector of given basis functions and $\beta$ a $P\times1$
vector of unknown coefficients, with $P\rightarrow\infty$, with $n,T$.
When $d$, the dimension of the vector of stimuli, is larger than
$1$, then can be taken as a tensor product basis of total dimension
$P$. We also denote as 
\[
\PSI=\begin{bmatrix}\psi_{P}\left(X_{1}\right)\\
\vdots\\
\psi_{P}\left(X_{T}\right)
\end{bmatrix}.
\]
the $\left(T\times P\right)$-matrix that stacks the approximating
bases at every point $\lbrace X_{t},t=1,\dots,T\rbrace$. Then, we
finally have: 
\begin{equation}
\mathbf{Y}=e\otimes\PSI\beta+U,\label{Eq - Vector Form of Polynomial Regression}
\end{equation}
where 
\[
U=e\otimes\left(f(\mathbf{X})-\PSI\beta\right)+\varepsilon.
\]
The true value of the parameter $\beta$, which we denote $\beta_{0}$
is taken to satisfy 
\[
\mathbb{E}_{i}\left[\left(e\otimes\PSI\right)^{\prime}\left(\mathbf{Y}-e\otimes\PSI\beta_{0}\right)\right]=0,
\]
where the expectation is taken with respect to the distribution of
$Y$ for each individual $i$. That is, 
\[
\beta_{0}=e\otimes\left[\PSI^{\prime}\PSI\right]^{-1}\PSI^{\prime}\mathbb{E}_{i}\left[\mathbf{Y}\right].
\]
The estimation of the model is performed by least squares, under the
hypothesis that $\mathbb{E}(U)=0$. Our sieve-based
least-squares estimator is therefore given by: 
\begin{align*}
\widehat{\beta}= & \left[\left(e\otimes\PSI\right)^{\prime}\left(e\otimes\PSI\right)\right]^{-1}\left(e\otimes\PSI\right)^{\prime}\mathbf{Y}\\
= & \frac{1}{n}\left\{ e^{\prime}\otimes\left[\left(\PSI^{\prime}\PSI\right)^{-1}\PSI^{\prime}\right]\right\} \mathbf{Y},
\end{align*}
where $\otimes$ denotes a matrix Kronecker product and we denote
\[
f_{P,0}(\mathbf{X})\triangleq\PSI\beta_{0},\text{ and }\hat{f}_{P}(\mathbf{X})\triangleq\PSI\hat{\beta}.
\]
In some instances, we may omit the argument of the function, and simply
use the notations $\hat{f}_{P}$ and $f_{P,0}$. Notice that the estimator
(and the model itself) could be simply written as 
\[
\hat{\beta}=\left(\PSI^{\prime}\PSI\right)^{-1}\PSI^{\prime}\bar{\mathbf{Y}}
\]

where $\bar{\mathbf{Y}}$ is a $T\times1$vector of average individual
responses for a given task $t=1,\dots,T$. Our model could be then
treated as any other nonparametric regression model with correlated
errors. However, we believe this interpretation defies the entire
purpose of our analysis. As a matter of fact, we would like emphasize
the idea that the number of individuals performing the same task is
essential when we cannot make standard regularity assumptions about
the error term.

\section{Consistency and Convergence Rates\label{Sect - Consistency}}

We first define the norms: 
\begin{align*}
\left|f\right|_{s}\triangleq & \max_{\left|\lambda\right|\le s}\sup_{x\in\mathcal{X}}\left|\partial^{\lambda}f\left(x\right)\right|,\\
\left\Vert f\right\Vert _{\infty}\triangleq & \sup_{x\in\mathcal{X}}\left|f\left(x\right)\right|.
\end{align*}
The norm $\left|\cdot\right|_{s}$ is what is sometimes written as
$\left\Vert \cdot\right\Vert _{s,\infty}$. We do not need to consider
more general weighted Sobolev norms \citep[see, e.g.,][for a definition]{gallant1987},
since the functions we consider have nonstochastic arguments.

Throughout the paper we will also need the following quantities. For
every $s\geq0$:
\[
\zeta_{s}\left(P\right)\triangleq\max_{\left|\lambda\right|\le s}\sup_{x\in\mathcal{X}}\left\Vert \partial^{\lambda}\psi_{P}\left(x\right)\right\Vert _{F},
\]
where $\Vert\cdot\Vert_{F}$ denotes the Frobenius norm. For every
integer $s\geq0$, we define: 
\[
N_{P}\triangleq\left|f-f_{P,0}\right|_{s}.
\]
Denote $\lambda_{nT}\triangleq\lambda_{\max}\left(\bar{\Sigma}\right)$
and $\tau_{nT}\triangleq\mathrm{tr}\left(\bar{\Sigma}\right)$ to
be the largest eigenvalue and the trace, respectively, of the average
covariance matrix of the errors $\bar{\Sigma}\triangleq\frac{1}{n}\sum_{i=1}^{n}\Sigma_{i}$.

The following assumption is needed to derive, together with the previous
definitions, a uniform upper bound for the sieve estimator $\hat{f}_{P}$.

\begin{assumption}\label{ASS2}~ 
\begin{itemize}
\item[(i)] As $T\rightarrow\infty$  and any$P\ll T$ , the matrix $\frac{\PSI^{\prime}\PSI}{T}$
converges in Frobenius norm to a given positive definite matrix $Q_{P}$,
whose smallest eigenvalue is bounded away from zero. 
\item[(ii)] For every $s\geq0$, $\zeta_{s}\left(P\right)$ exists and $\zeta_{s}\left(P\right)\geq1$
for large enough $P$. 
\end{itemize}
\end{assumption}

The following assumption is needed to obtain consistency of the sieve
estimator.

\begin{assumption} \label{ASS3}~ 
\begin{itemize}
\item[(i)] We have: 
\[
\zeta_{s}\left(P\right)\left(\frac{\left(P\lambda_{nT}\right)\wedge\tau_{nT}}{nT}\right)^{1/2}\rightarrow0,\text{ as }n,T\rightarrow\infty.
\]
\item[(ii)] For $s=0$, we require $N_{P}\rightarrow0$, as $P,T\rightarrow\infty$.
For $s>0$, we require $\zeta_{s}\left(P\right)N_{P}\rightarrow0$,
as $P,T\rightarrow\infty$.
\end{itemize}
\end{assumption}

Assumption \ref{ASS2} (i) restricts the asymptotic behavior of the
matrix of design points $\PSI^{\prime}\PSI$. Notice that this assumption
is not explicitly needed, e.g., in \citet{newey1997}, \citet{jong2002}
and \citet{belloni2014} as they deal with stochastic regressors and
therefore this is guaranteed by an appeal to a Law of Large Numbers.
In particular, Assumption \ref{ASS2} implies that the eigenvalues
of $\frac{\PSI^{\prime}\PSI}{T}$ converge to the eigenvalues of $Q_{P}$,
for a fixed $P$.

\citet{newey1997} and \citet{belloni2014} derive rates of convergence
in probability for $\frac{\PSI^{\prime}\PSI}{T}$ towards the fixed
matrix $Q_{P}$. Newey's result implies that $\mathbb{E}\left\Vert \frac{\PSI^{\prime}\PSI}{T}-Q_{P}\right\Vert _{F}^{2}\le\frac{\zeta_{0}^{2}\left(P\right)P}{T}$.
However, we cannot use directly this result since our regressors are
supposed to be deterministic. However, reasoning as in \citet{reimer1997},
we can see that for any probability measure $\inf_{\left\{ X_{t}\right\} _{t=1}^{T}}\left\Vert \frac{\PSI^{\prime}\PSI}{T}-Q_{P}\right\Vert _{F}^{2}\le\mathbb{E}\left\Vert \frac{\PSI^{\prime}\PSI}{T}-Q_{P}\right\Vert _{F}^{2}$,
so that it is possible to find a point-set $\left\{ X_{t}\right\} _{t=1}^{T}$
such that $\left\Vert \frac{\PSI^{\prime}\PSI}{T}-Q_{P}\right\Vert _{F}\le\zeta_{0}\left(P\right)\sqrt{\frac{P}{T}}$.\footnote{Note that, from \citet{belloni2014} (see their Section 6.1 and Theorem
4.6), one can infer that $\left\Vert \frac{\PSI^{\prime}\PSI}{T}-Q_{P}\right\Vert _{F}\lesssim_{\mathbb{P}}\zeta_{0}\left(P\right)\cdot\sqrt{\frac{P\ln P}{T}}$
from which one gets the weaker bound $\inf_{\left\{ X_{t}\right\} _{t=1}^{T}}\left\Vert \frac{\PSI^{\prime}\PSI}{T}-Q_{P}\right\Vert _{F}^{2}\lesssim\zeta_{0}\left(P\right)\cdot\sqrt{\frac{P\ln P}{T}}$.} Better convergence rates can be obtained in special cases (see the
discussion after Theorem \ref{Th - Main Result}).
\begin{rem}
\label{Rm - Assumption 5} Define the empirical probability $\tilde{\mathbb{P}}^{\left(T\right)}\left(A\right)\triangleq\frac{1}{T}\sum_{t=1}^{T}\mathsf{1}\left\{ X_{t}\in A\right\} $.
A particular case is the one in which the points $\left\{ X_{t}\right\} _{i=1}^{T}\in\mathcal{X}$
are chosen in such a way that their empirical probability converges
to an \emph{asymptotic design measure} $\tilde{\mathbb{P}}$ on $\mathcal{X}$.
In this scenario, we can obtain explicitly $Q_{P}$ as $Q_{P}=\tilde{\mathbb{E}}\left[\psi_{P}^{\prime}(x)\psi_{P}(x)\right]$,
where $\tilde{\mathbb{E}}$ is the expectation taken with respect
to the probability $\tilde{\mathbb{P}}$. This situation is similar
to the one in \citet{cox1988}. In our case, it is unnecessary to
specify the asymptotic design measure but, when available, it can
be used to derive an explicit expression for $Q_{P}$. 
\end{rem}

Assumptions \ref{ASS2} (ii) and \ref{ASS3} (ii) are used to bound
the approximation error and to define a uniform upper bound on the
derivative of the vector of basis functions, as measured through the
Sobolev norm, both of which are standard assumptions in the sieve
literature. When the function $f\left(\cdot\right)$ is taken to be
$c$ times continuously differentiable, we can take $N_{P}=O\left(P^{-c/d}\right)$
\citep[see][]{newey1997,huang2003,chen2007,belloni2014}.

Assumption \ref{ASS3} (i) restricts the behavior of the largest eigenvalue
and the trace of the average covariance matrix of the errors $\bar{\Sigma}$.
Depending on the structure of the covariance matrix, these quantities
may or may not be uniformly bounded away from infinity. 

Under the previous assumptions, it is possible to derive an upper
bound for the convergence rate of $\hat{f}_{P}$ to $f$. 
\begin{thm}
\label{Th - Main Result} Under Assumptions \ref{ASS1} and \ref{ASS2},
the following bound can be established: 
\[
\left|\hat{f}_{P}-f\right|_{s}=O_{\mathbb{P}}\left(\zeta_{s}\left(P\right)\left(\frac{\left(P\lambda_{nT}\right)\wedge\tau_{nT}}{nT}\right)^{1/2}+\zeta_{s}\left(P\right)N_{P}\right).
\]
Under Assumption \ref{ASS3}, the bound converges to 0 and the sieve
estimator is a consistent estimator of $f$ in the $\left|\cdot\right|_{s}$
norm.
\end{thm}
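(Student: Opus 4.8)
The plan is to split the estimation error into a stochastic (variance) part and a deterministic (approximation) part, and to pass from the Euclidean norm of the coefficient error to the functional $|\cdot|_s$ norm through the quantity $\zeta_s(P)$. Writing $\hat{f}_P-f=\psi_P(\widehat{\beta}-\beta_0)+(f_{P,0}-f)$, the key elementary inequality is that for any $P\times1$ vector $v$ and any multi-index $\lambda$ with $|\lambda|\le s$, Cauchy--Schwarz gives $\sup_x|\partial^\lambda\psi_P(x)\,v|\le\zeta_s(P)\|v\|$, so that $|\psi_P v|_s\le\zeta_s(P)\|v\|$. Applied to $v=\widehat{\beta}-\beta_0$ this reduces the first term to a Euclidean bound on the coefficient error, while the second term is the approximation error, $|f_{P,0}-f|_s=N_P$ by definition.

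First I would expand the coefficient error. Since $\widehat{\beta}=(\PSI'\PSI)^{-1}\PSI'\bar{\mathbf{Y}}$ with $\bar{\mathbf{Y}}=f(\mathbf{X})+\bar\varepsilon$ and $\bar\varepsilon\triangleq n^{-1}\sum_i\varepsilon_i$, one has $\widehat{\beta}-\beta_0=(\PSI'\PSI)^{-1}\PSI'\bar\varepsilon+(\PSI'\PSI)^{-1}\PSI' r$, where $r\triangleq f(\mathbf{X})-\PSI\beta_0$ collects the approximation residuals at the design points, so that $\|r\|_\infty\le N_P$. Assumption \ref{ASS2}(i) ensures that, for $T$ large, $\lambda_{\max}((\PSI'\PSI)^{-1})\lesssim 1/T$; writing $\PSI=USV'$ in reduced singular value decomposition (so that $S$ collects the singular values $\sigma_j$ of $\PSI$, with $\sigma_j^2\gtrsim T$) gives $\PSI(\PSI'\PSI)^{-2}\PSI'=US^{-2}U'\preceq(C/T)\,UU'\preceq(C/T)I_T$, whence $\|(\PSI'\PSI)^{-1}\PSI' r\|^2=r'US^{-2}U'r\le(C/T)\|r\|^2\le C\,N_P^2$. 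This deterministic piece therefore contributes $\zeta_s(P)N_P$ to the final bound.

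The main work is the stochastic term. Invoking Assumption \ref{ASS1} (errors uncorrelated across $i$, so $\mathrm{Var}(\bar\varepsilon)=n^{-1}\bar{\Sigma}$) I would compute $\mathbb{E}\|(\PSI'\PSI)^{-1}\PSI'\bar\varepsilon\|^2=n^{-1}\mathrm{tr}[\bar{\Sigma}\,\PSI(\PSI'\PSI)^{-2}\PSI']$ and bound this trace in two complementary ways. Dominating $\bar{\Sigma}\preceq\lambda_{nT}I_T$ yields $\mathrm{tr}[\bar{\Sigma}\,\PSI(\PSI'\PSI)^{-2}\PSI']\le\lambda_{nT}\,\mathrm{tr}[(\PSI'\PSI)^{-1}]\lesssim P\lambda_{nT}/T$; alternatively, using $\PSI(\PSI'\PSI)^{-2}\PSI'\preceq(C/T)I_T$ yields $\mathrm{tr}[\bar{\Sigma}\,\PSI(\PSI'\PSI)^{-2}\PSI']\le(C/T)\,\mathrm{tr}(\bar{\Sigma})=C\tau_{nT}/T$. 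Retaining the smaller of the two gives $\mathbb{E}\|(\PSI'\PSI)^{-1}\PSI'\bar\varepsilon\|^2\lesssim\frac{(P\lambda_{nT})\wedge\tau_{nT}}{nT}$, and Markov's inequality then upgrades this to $\|(\PSI'\PSI)^{-1}\PSI'\bar\varepsilon\|=O_{\mathbb{P}}\!\left(\left(\frac{(P\lambda_{nT})\wedge\tau_{nT}}{nT}\right)^{1/2}\right)$. I expect this to be the main obstacle: both trace estimates rely on operator-norm control of $(\PSI'\PSI)^{-1}$ that must hold uniformly as $P\to\infty$ jointly with $T$, which is exactly what the ``smallest eigenvalue bounded away from zero'' clause in Assumption \ref{ASS2}(i) is designed to supply, and the interplay between the trace bound (favourable when $\bar{\Sigma}$ is low-rank-like, giving $\tau_{nT}$) and the eigenvalue bound (favourable when $\bar{\Sigma}$ is nearly spherical, giving $P\lambda_{nT}$) is what produces the minimum in the statement.

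Finally I would assemble the pieces. Multiplying the two coefficient bounds by $\zeta_s(P)$ and adding the bias $N_P$ delivers $|\hat{f}_P-f|_s=O_{\mathbb{P}}\!\left(\zeta_s(P)\left(\frac{(P\lambda_{nT})\wedge\tau_{nT}}{nT}\right)^{1/2}+\zeta_s(P)N_P\right)$, where the bare $N_P$ is absorbed into $\zeta_s(P)N_P$ because $\zeta_s(P)\ge1$ by Assumption \ref{ASS2}(ii); this establishes the bound using only Assumptions \ref{ASS1} and \ref{ASS2}. Consistency is then immediate: under Assumption \ref{ASS3}(i) the variance term vanishes, and under Assumption \ref{ASS3}(ii) the bias term $\zeta_s(P)N_P$ vanishes (with the convention that for $s=0$ only $N_P\to0$ is required), so the entire bound tends to zero and $\hat{f}_P$ is consistent in the $|\cdot|_s$ norm.
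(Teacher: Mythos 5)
Your proposal is correct and follows essentially the same route as the paper: the same decomposition into a stochastic coefficient error and an approximation residual, the same two complementary trace bounds (one via $\lambda_{\max}(\overline{\Sigma})$, one via the idempotent projector giving $\mathrm{tr}(\overline{\Sigma})$) that produce the minimum, and the same final assembly through $\zeta_{s}(P)$. The only cosmetic difference is that the paper formalizes the ``for $T$ large'' eigenvalue control with an indicator $\mathsf{1}_{T}=\mathsf{1}\{\lambda_{\min}(\PSI^{\prime}\PSI/T)>1/2\}$ rather than your SVD phrasing.
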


Assumption \ref{ASS3} ensures that the upper bound of Theorem \ref{Th - Main Result}
is $o\left(1\right)$ as $n,T\rightarrow\infty$. Notice that, if
$n$ is taken to be finite and $\lambda_{nT}$ and $\tau_{nT}$ are
uniformly bounded away from infinity, the upper bound of the variance
in Theorem \ref{Th - Main Result} is the same as in \citet{newey1997}. 

\subsection*{Remarks on $\lambda_{nT}$ and $\tau_{nT}$}

The terms $\lambda_{nT}$ and $\tau_{nT}$ that enter the formulas
have a behavior that can be clarified in some cases of interest. 

A first case arises when the answers for a given individual are supposed
to be uncorrelated to each other (even if heteroskedastic), so that
$\Sigma_{i}=\textrm{diag}\left(\sigma_{i1}^{2},\dots,\sigma_{iT}^{2}\right)$.
Define $\sigma_{nT}^{2}=\max_{1\le t\le T}\left(\frac{1}{n}\sum_{i=1}^{n}\sigma_{it}^{2}\right)$.
In this case, $\lambda_{nT}=\sigma_{nT}^{2}$, and $\tau_{nT}\le T\sigma_{nT}^{2}$,
so that the bound in Theorem \ref{Th - Main Result} yields:
\[
\left|\hat{f}_{P}-f\right|_{s}=O_{\mathbb{P}}\left(\zeta_{s}\left(P\right)\left(\frac{P}{nT}\right)^{1/2}\sigma_{nT}+\zeta_{s}\left(P\right)N_{P}\right).
\]
If $\sigma_{nT}^{2}$ is bounded, the bound does not make any difference
between $T$ and $n$. Assumption \ref{ASS2} (i) requires that $T\rightarrow\infty$,
so that also $n=1$ is sufficient to ensure consistency, provided
that $P/T\rightarrow0$. In this case our estimator reduces to the
one in \citet{newey1997}. Despite our assumptions are not comparable
with the ones in \citet{newey1997}, as we consider the case of deterministic
regressors, ours are generally weaker as we require uncorrelatedness
and boundedness of the variances of the errors whereas he requires
independence and identical distribution. Lack of dependence between
tasks and boundedness of the maximum variance allow one to estimate
consistently individual response functions. The hypothesis that the
average covariance matrix of the errors $\bar{\Sigma}$ is diagonal
could be in principle tested. However, we have not been able to find
in the literature a test valid under our conditions, i.e. for possibly
non identically distributed error vectors of increasing dimension.
We leave the development of such a test to future work.

A second case of interest arises when the errors for every individual
have a factor structure, i.e. every error term $\varepsilon_{it}$
can be written as $\varepsilon_{it}=\nu_{i}+\upsilon_{it}$ where
$\mathbb{V}\left(\nu_{i}\right)=\sigma_{\nu}^{2}$, $\mathbb{V}\left(\upsilon_{it}\right)=\sigma_{\upsilon}^{2}$,
$\textrm{Cov}\left(\upsilon_{it},\upsilon_{i^{\prime}t}\right)=0$
for $i\neq i^{\prime}$, and $\textrm{Cov}\left(\nu_{i},\upsilon_{it}\right)=0$,
for all $i$. This means that the matrix $\Sigma_{i}$ can be written
as 
\[
\Sigma_{i}=\sigma_{\nu}^{2}ee^{\prime}+\sigma_{\upsilon}^{2}I_{T},\text{ and }\bar{\Sigma}=\sigma_{\nu}^{2}ee^{\prime}+\sigma_{\upsilon}^{2}I_{T}.
\]
According to Lemma 2.1 in \citet{magnus1982}, $\lambda_{nT}=\sigma_{\nu}^{2}T+\sigma_{\upsilon}^{2}$,
and we obtain $\lambda_{nT}\asymp\tau_{nT}\asymp T$. The bound in
Theorem \ref{Th - Main Result} implies that

\[
\left|\hat{f}_{P}-f\right|_{s}=O_{\mathbb{P}}\left(\zeta_{s}\left(P\right)\cdot\left(n^{-1/2}+N_{P}\right)\right).
\]
If the number of tasks $T$ is held fixed and the number of respondents
is allowed to diverge to infinity, one could strengthen Assumption
\ref{ASS2} (i) to allow for the design matrix to be nonsingular for
any finite value of $P$ and $T$ such that $P\leq T$. However, the
finiteness of $P$ implies that the bias component does not disappear
asymptotically and thus the estimator is not consistent. Similarly,
if the number of individuals is held fixed and $T\rightarrow\infty$,
while the bias vanishes, the variance does not disappear asymptotically,
and again the estimator is not consistent. In our framework, however,
letting both $n,T\rightarrow\infty$ yields a consistent estimator
of $f\left(\cdot\right)$. Parametric rates of convergence can be
achieved for $s=0,$ when $N_{P}=O\left(n^{-1/2}\right)$, i.e. when
both $T$ and $P$ diverge sufficiently fast. Nonparametric rates
for specific choices of the sieve basis are discussed in the Appendix.

\subsection*{}

\begin{example}
{[}Stevens' Model; Example \ref{Ex - Stevens Model - 1} continued{]}\label{Ex - Stevens Model - 2}
We suppose that Stevens' model is the true model that generates the
data. We estimate the model as
\[
\ln p=\kappa\ln\left(X_{1}/X_{2}\right)+\varepsilon.
\]
This is a case of parametric estimation. Assumptions \ref{ASS1} and
\ref{ASS2} are supposed to be true. Moreover, $\zeta_{s}$ is finite
and $N_{P}$ is $0$. The rate of convergence is therefore $\left|\hat{f}_{P}-f\right|_{s}=O_{\mathbb{P}}\left(\left(\frac{\lambda_{nT}}{nT}\right)^{1/2}\right)$.
\end{example}

\subsection*{}

\begin{example}
{[}Cumulative Prospect Theory; Example \ref{Ex - CPT - 1} continued{]}\label{Ex - CPT - 2}
Consider the model of Example \ref{Ex - CPT - 1}. We consider the
estimation of a model of the form:
\[
PE_{it}=f\left(\ln CE_{t},\ln X_{t}\right)+\varepsilon_{it}
\]
using a tensor product of Legendre polynomials:
\[
f_{P}\left(\ln\left(CE_{t}\right),\ln\left(X_{t}\right)\right)=\sum_{j=0}^{J}\sum_{k=0}^{K}\beta_{jk}\cdot L_{j}\left(\ln X_{t}\right)\cdot L_{k}\left(\ln CE_{t}\right).
\]
If the parameters $\beta_{jk0}$ are chosen as in Section \ref{Sect - Model},
we denote the function as $f_{P,0}$. Let us denote as $p_{J\wedge K}$
a polynomial of order $J\wedge K$ and let $p_{J\wedge K,0}$ be the
one with parameters chosen as in Section \ref{Sect - Model}. Then,
$\left\Vert f-f_{P,0}\right\Vert _{\infty}\le\left\Vert f-p_{J\wedge K,0}\right\Vert _{\infty}$.
Using Theorem 2 in \citet{Calvi-Levenberg-JAT-2008} we get $\left\Vert f-p_{J\wedge K,0}\right\Vert _{\infty}=O\left(\sqrt{T}\inf_{p_{J\wedge K}}\left\Vert f-p_{J\wedge K}\right\Vert _{\infty}\right)$.
From \citet{Bos-Levenberg-CMFT-2018} (see also \citet{Trefethen-PAMS-2017}),
supposing that $f$ is analytic, $\inf_{p_{J\wedge K}}\left\Vert f-p_{J\wedge K}\right\Vert _{\infty}=O\left(\rho^{J\wedge K}\right)$
for $\rho<1$. At last, when $s=0$:
\[
N_{P}=\left\Vert f-f_{P,0}\right\Vert _{\infty}=O\left(\sqrt{T}\rho^{J\wedge K}\right).
\]
A similar bound clearly applies when $s>0$. Therefore, using the
fact that $P=\left(J+1\right)\left(K+1\right)\sim KJ$:
\[
\left|\hat{f}_{P}-f\right|_{s}=O_{\mathbb{P}}\left(KJ\left(\left(\frac{\left(KJ\lambda_{nT}\right)\wedge\tau_{nT}}{nT}\right)^{1/2}+T^{1/2}\rho^{J\wedge K}\right)\right).
\]
If the errors have a factor structure, then:
\[
\left|\hat{f}_{P}-f\right|_{s}=O_{\mathbb{P}}\left(JK\left(n^{-1/2}+T^{1/2}\rho^{J\wedge K}\right)\right).
\]
\begin{example}
{[}Stevens' Model; Example \ref{Ex - Stevens Model - 1} continued{]}\label{Ex - Stevens Model - 3}
In this case too, as in Example \ref{Ex - Stevens Model - 2}, we
suppose that Stevens' model holds true but we consider a more general
nonparametric model in log-log form (see \citealp{bernasconi2008}
for a justification):
\[
\ln p=f\left(\ln X_{1},\ln X_{2}\right)+\varepsilon,
\]
where $f$ is an unknown function. We approximate the function $f$
using a polynomial of order $J$ in the two variables $\ln X_{1}$
and $\ln X_{2}$: 
\[
\ln p=\beta_{0}+\beta_{1}\ln X_{1}+\beta_{2}\ln X_{2}+\beta_{3}\ln^{2}X_{1}+\beta_{4}\ln^{2}X_{2}+\beta_{5}\ln X_{1}\ln X_{2}+\cdots+\varepsilon.
\]
This polynomial regression has $P=\frac{\left(J+2\right)\left(J+1\right)}{2}$
parameters. Assumptions \ref{ASS1} and \ref{ASS2} are verified,
provided $\PSI$ is chosen correctly. Provided $P\ge3$, we have $N_{P}=0$
and Assumption \ref{ASS3} (ii) is automatically true. The rate of
convergence is:
\[
\left|\hat{f}_{P}-f\right|_{s}=O_{\mathbb{P}}\left(P^{1+2s}\left(\frac{\left(P\lambda_{nT}\right)\wedge\tau_{nT}}{nT}\right)^{1/2}\right).
\]
If the errors have a factor structure, the bound becomes $\left|\hat{f}_{P}-f\right|_{s}=O_{\mathbb{P}}\left(P^{1+2s}n^{-1/2}\right)$
and Assumption \ref{ASS3} (i) requires $P^{2+4s}n^{-1}\rightarrow0$
to ensure uniform convergence of mixed partial derivatives up to order
$s$, which, for finite $P$, is a standard condition on the sample
size.
\end{example}

\end{example}

\subsection*{}

\subsection*{Regression Models with Individual-Specific Characteristics}

To conclude this section, we briefly discuss the possibility of augmenting
the model to include subject characteristics, which we denote by $Z_{i}$.
In most experiments, these characteristics are inherently discrete
(e.g., age, treatment group, gender, etc.). Assume for simplicity
that the $j-$th element of the vector $Z_{i}$ can take values $\lbrace0,1,\dots,L_{j}\rbrace$
with strictly positive probability. Therefore, every element $Z_{ij}$
of $Z_{i}$ can be decomposed into $L_{j}$ dummy variables, each
one taking value $1$ if $Z_{ij}=l$, and $0$ otherwise, for $l=1,\dots,L_{j}$.
Without loss of generality, we can thus define $Z_{i}\in\lbrace0,1\rbrace^{q}$,
where $q\geq0$ is a positive integer, and $Z_{i}$ can also include
arbitrary interactions between the observed individual characteristics.
For such a binary random vector, we impose that the joint function
$f\left(X_{t},Z_{i}\right)$ is such that $f\left(X_{t},Z_{i}\right)=f_{0}\left(X_{t}\right)$,
whenever all the elements of $Z_{i}$ are equal to $0$. Hence, we
can write any joint function $f\left(X_{t},Z_{i}\right)=f_{0}\left(X_{t}\right)+Z_{i}^{\prime}f_{1}\left(X_{t}\right)$,
which follows from the fact that its value changes in $Z_{ij}$ only
when $Z_{ij}$ is equal to $1$, for $j=1,\dots,q$. This finally
implies the following statistical model

\[
Y_{it}=f_{0}\left(X_{t}\right)+Z_{i}^{\prime}f_{1}\left(X_{t}\right)+\varepsilon_{it},
\]
where the unknown functional coefficients depend on $X_{t}.$ This
nonparametric regression can be cast as a varying coefficient model
\citep[see, e.g.,][]{hastie1993,fan1999,fan2005}. In this flexible
semiparametric framework, we can include an arbitrary number of individual
specific covariates without incurring the curse of dimensionality.
Letting $\tilde{Z_{i}}=\left(1,Z_{i}\right)$, the vector $f=\left(f_{0},f_{1}\right)$
satisfies the following system of moment restrictions

\[
\mathbb{E}\left(\tilde{Z_{i}}Y_{it}\left|X_{t}=x\right.\right)=\mathbb{E}\left(\tilde{Z_{i}}\tilde{Z}_{i}^{\prime}\left|X_{t}=x\right.\right)f\left(x\right)=\mathbb{E}\left(\tilde{Z_{i}}\tilde{Z}_{i}^{\prime}\right)f\left(x\right),
\]
where the last equality follows from the fact that the vector $X_{t}$
should be determined independently of individual's characteristics,
and therefore all the moments of the distribution of $Z_{i}$ are
independent of $X_{t}$. For identification, we only require the additional
condition that the matrix $\mathbb{E}\left(\tilde{Z_{i}}\tilde{Z}_{i}^{\prime}\right)$
is full rank. A nonparametric sieve estimator of the functional coefficients
can be obtained by simply replacing the function $f$ with some finite
dimensional approximation on a space of basis functions, and the unknown
population moments of $Z_{i}$ with their sample counterpart. Estimation
of this model is equivalent to splitting the sample in $2^{q}$ subsamples,
and estimate the unknown regression functions for each one of these
subsamples. However, joint estimation of the vector of coefficients
is naturally more efficient, because it uses the entire sample size.
Hence, the resulting sieve estimators inherit the same properties
as above \citep[see][]{fanzhang2008}.

\section{Asymptotic Normality and Wald Tests}

In the following we investigate the asymptotic normality of functionals
of our nonparametric estimator. These are useful to study the properties
of classical statistical tests. We focus here on the properties of
the Wald test and we provide its bias and the rates of convergence
to its asymptotic distribution: we choose this strategy to be able
to evaluate accurately the interplay between the different asymptotic
parameters appearing below. We do not discuss whether it is possible
to estimate these functionals at $\sqrt{nT}$-rate. Arguably, one
could extend the results in \citet{newey1997} to our setting to provide
such results. 

In the following, we consider estimation of a functional of the function
$f$. As in \citet{andrews1991}, we write this functional as $\Gamma:\mathcal{F}\rightarrow\mathbb{R}^{R}$,
where $R>0$ denotes the number of restrictions. Note that $\Gamma$
is allowed to depend on $n$ and $T$. We provide conditions for asymptotic
normality of the quantity 
\[
W_{nT}=A_{nT}\left(\Gamma\left(\hat{f}_{P}\right)-\Gamma\left(f\right)\right)
\]
where $A_{nT}=V_{nT}^{-1/2}$ and $V_{nT}=\mathbb{V}\left(\Gamma\left(\hat{f}_{P}\right)\right)$.
First of all we consider the case in which $\Gamma$ is linear, then
we will move to the nonlinear case.

\subsection{Linear Case}

We consider both the case in which $R$ is fixed and the case in which
$R$ diverges with $n$ and $T$. The case when $R$ is allowed to
increase with $n$ and $T$ will be particularly useful to derive
the asymptotic properties of Wald tests. We suppose that when applied
to $f_{P}=\PSI\beta$ the functional $\Gamma$ yields: 
\[
\Gamma\left(f_{P}\right)=\Gamma\left(\PSI\beta\right)=\widetilde{\PSI}\beta
\]
where $\widetilde{\PSI}$ can take different values according to the
linear functional $\Gamma$. A full list of examples is in \citet[p. 310]{andrews1991},
but some very simple instances are the following ones: 
\begin{enumerate}
\item \textit{Pointwise evaluation functional}: $\Gamma\left(f\right)=f$,
and $\Gamma\left(f_{P}\right)=\PSI\beta$; 
\item \textit{Pointwise partial derivatives}: $\Gamma\left(f\right)=D^{\lambda}f$
and $\Gamma\left(f_{P}\right)=\left(D^{\lambda}\PSI\right)\beta$; 
\item \textit{Weighted average derivatives}: $\Gamma\left(f\right)=\int_{\mathcal{X}}D^{\lambda}f\left(X\right)\mathrm{d}\nu\left(X\right)$
and $\Gamma\left(f_{P}\right)=\left(\int_{\mathcal{X}}D^{\lambda}\PSI\mathrm{d}\nu\left(X\right)\right)\beta$
for a given probability distribution $\nu$ on $\mathcal{X}$. 
\end{enumerate}
All of these examples can also be considered in vector form, as in
\citet[p. 310]{andrews1991}.

The variance of the linear functional applied to an estimated function,
namely $\Gamma\left(\hat{f}_{P}\right)$, is given by: 
\begin{align*}
V_{nT}= & \mathbb{V}\left[\Gamma\left(\hat{f}_{P}\right)\right]=\frac{1}{n}\widetilde{\PSI}\left(\PSI^{\prime}\PSI\right)^{-1}\PSI^{\prime}\overline{\Sigma}\PSI\left(\PSI^{\prime}\PSI\right)^{-1}\widetilde{\PSI}^{\prime}\\
= & \frac{1}{nT}\widetilde{\PSI}\left(\frac{\PSI^{\prime}\PSI}{T}\right)^{-1}\frac{\PSI^{\prime}\overline{\Sigma}\PSI}{T}\left(\frac{\PSI^{\prime}\PSI}{T}\right)^{-1}\widetilde{\PSI}^{\prime}
\end{align*}
where $\overline{\Sigma}=n^{-1}\sum_{i=1}^{n}\Sigma_{i}$. Provided
$V_{nT}$ is symmetric positive definite, let $A_{nT}=V_{nT}^{-1/2}$
be the symmetric positive definite square root of the inverse of $V_{nT}$.

We decompose $W_{nT}$ in two parts: an error term 
\[
\overline{W}_{nT}\triangleq W_{nT}-\mathbb{E}W_{nT}=A_{nT}\left(\Gamma\left(\hat{f}_{P}\right)-\mathbb{E}\Gamma\left(\hat{f}_{P}\right)\right)=A_{nT}\widetilde{\PSI}\left(\widehat{\beta}-\mathbb{E}\widehat{\beta}\right)
\]
and a bias term 
\[
\mathbb{E}W_{nT}=A_{nT}\left(\widetilde{\PSI}\mathbb{E}\widehat{\beta}-\Gamma\left(f\right)\right).
\]
We provide conditions under which the first term converges in distribution
to a $R-$dimensional standard normal vector and the second term tends
to a null vector. The vector $W_{nT}$ enters in the formula of the
Wald test for the hypothesis $\mathsf{H}_{0}:\Gamma(f)=\Gamma_{0}$:
\begin{align*}
W^{\ast}= & \left(\Gamma\left(\hat{f}_{P}\right)-\Gamma_{0}\right)^{\prime}\left[\mathbb{V}\left(\Gamma\left(\hat{f}_{P}\right)-\Gamma_{0}\right)\right]^{-1}\left(\Gamma\left(\hat{f}_{P}\right)-\Gamma_{0}\right)\\
= & \left(\Gamma\left(\hat{f}_{P}\right)-\Gamma_{0}\right)^{\prime}V_{nT}^{-1}\left(\Gamma\left(\hat{f}_{P}\right)-\Gamma_{0}\right)\\
= & \left(A_{nT}\left(\Gamma\left(\hat{f}_{P}\right)-\Gamma_{0}\right)\right)^{\prime}\left(A_{nT}\left(\Gamma\left(\hat{f}_{P}\right)-\Gamma_{0}\right)\right)\\
= & W_{nT}^{\prime}W_{nT}.
\end{align*}
In the following we will consider a standardized version of the statistic,
namely $W=\frac{W^{\ast}-R}{\sqrt{2R}}$. This is particularly useful
when the number of constraints $R$ is allowed to increase with $n$
and $T$, as we show below.
\begin{example}
{[}Pointwise Constraints{]}\label{Ex - Pointwise Constraints-1} A
first case of interest arises when we want to constrain the function
$f$ at a point. Then $\Gamma$ is a pointwise evaluation functional.
Consider the situation in which $d=1$ and we want to test whether
the function $f$ can be constrained in a point, say $X_{\left(1\right)}$,
to take value $y_{\left(1\right)}$ where $y_{\left(1\right)}=f\left(X_{\left(1\right)}\right)$.
Consider the following regression model using a power series of order
$P$: 
\[
Y_{it}=\sum_{j=0}^{P-1}X_{t}^{j}\beta_{j}+\varepsilon_{it}.
\]
The test concerns the null hypothesis $\mathsf{H}_{0}:y_{\left(1\right)}=f\left(X_{\left(1\right)}\right)$.
For a fixed $P$, we use the Wald test statistic above with 
\[
\underbrace{\widetilde{\PSI}}_{1\times P}=\left[\begin{array}{ccccc}
1 & X_{\left(1\right)} & X_{\left(1\right)}^{2} & \cdots & X_{\left(1\right)}^{P-1}\end{array}\right],\text{ and }\underbrace{\Gamma_{0}}_{1\times1}=y_{(1)}.
\]
\begin{example}
{[}Cumulative Prospect Theory; Example \ref{Ex - CPT - 1} continued{]}\label{Ex - CPT - 3}
As customary in this literature \citep[see, e.g.,][Section 3.1]{luce1999utility},
we assume that $U_{1}\equiv U_{2}$. Moreover, one can suppose that
the utility function can adequately be described by power functions
with exponent $\gamma>0$ \citep[see, e.g.,][Section 3.3]{luce1999utility}.
Therefore:
\begin{equation}
V_{1}\left(\ln CE_{t}\right)-V_{2}\left(\ln X_{t}\right)=\gamma\cdot\left(\ln\left(CE_{t}\right)-\ln\left(X_{t}\right)\right),\label{eq:yaarirs-1-1}
\end{equation}
and the model becomes a semiparametric one with
\[
PE_{it}=F\left(\gamma\left(\ln CE_{t}-\ln X_{t}\right)\right)+\varepsilon_{it}.
\]
We would like to provide a test for the parametric restriction in
equation \eqref{eq:yaarirs-1-1}, for any $\gamma>0$. Under the null
hypothesis, we have that:
\[
f\left(\ln\left(CE_{t}\right),\ln\left(X_{t}\right)\right)=F\left(\gamma\cdot\left(\ln\left(CE_{t}\right)-\ln\left(X_{t}\right)\right)\right).
\]
This null hypothesis cannot be tested directly, as such a test would
require the estimation of the transformation function $F$. Therefore,
we proceed as follows. Under the null, we take the derivatives of
the regression function wrt $\log\left(CE_{t}\right)$ and $\log\left(X_{t}\right)$
respectively. We obtain:
\begin{align*}
\frac{\partial f\left(\ln\left(CE_{t}\right),\ln\left(X_{t}\right)\right)}{\partial\ln\left(CE_{t}\right)}= & \frac{\mathrm{d}F\left(\gamma\cdot\left(\ln\left(CE_{t}\right)-\ln\left(X_{t}\right)\right)\right)}{\mathrm{d}\left(\ln\left(CE_{t}\right)-\ln\left(X_{t}\right)\right)}\cdot\gamma,\\
\frac{\partial f\left(\ln\left(CE_{t}\right),\ln\left(X_{t}\right)\right)}{\partial\ln\left(X_{t}\right)}= & -\frac{\mathrm{d}F\left(\gamma\cdot\left(\ln\left(CE_{t}\right)-\ln\left(X_{t}\right)\right)\right)}{\mathrm{d}\left(\ln\left(CE_{t}\right)-\ln\left(X_{t}\right)\right)}\cdot\gamma.
\end{align*}
Therefore, the sum of these two derivative must be equal to $0$ almost
everywhere. Omitting the arguments of the function $f$ for simplicity,
our null hypothesis can be written as
\[
\frac{\partial f}{\partial\ln\left(X_{t}\right)}+\frac{\partial f}{\partial\ln\left(CE_{t}\right)}=0.
\]
This is a linear functional of the nonparametric estimator. To avoid
the clumsy notation, let us write $x_{1}\equiv\ln\left(X_{t}\right)$
and $x_{2}\equiv\ln\left(CE_{t}\right)$. Let $\psi_{J}\left(x_{1}\right)$
be the row vector of the first $J+1$ Legendre polynomials evaluated
in $x_{1}$. We approximate $f\left(x_{1},x_{2}\right)$ through the
tensor product function $f_{P}\left(x_{1},x_{2}\right)=\left[\psi_{J}\left(x_{1}\right)\otimes\psi_{K}\left(x_{2}\right)\right]\beta$
where $J$ and $K$ can be different. Taking the derivative with respect
to $x_{1}$ we get $\frac{\partial\psi_{J}\left(x_{1}\right)}{\partial x_{1}}=\psi_{J}\left(x_{1}\right)S_{J}$,
where $S_{J}$ is a strictly upper triangular $\left(J+1\right)\times\left(J+1\right)$-matrix
called \emph{operational matrix of differentiation} for Legendre polynomials.
Formulas for this matrix have been derived in \citet{Sparis-Mouroutsos-IJC-1986,Sparis-IEEP-1987},
using a detour through power series, and \citet{Bolek-Auto-1993},
using a direct approach based on formulas for the derivatives of Legendre
polynomials \citep[see also][]{phillips1988}. Therefore:
\begin{align*}
\frac{\partial f_{P}\left(x_{1},x_{2}\right)}{\partial x_{1}} & =\frac{\partial}{\partial x_{1}}\left[\psi_{J}\left(x_{1}\right)\otimes\psi_{K}\left(x_{2}\right)\right]\beta\\
 & =\left[\psi_{J}\left(x_{1}\right)\otimes\psi_{K}\left(x_{2}\right)\right]\left(S_{J}\otimes I_{K+1}\right)\beta.
\end{align*}
From this
\[
\left(\frac{\partial}{\partial x_{1}}+\frac{\partial}{\partial x_{2}}\right)f_{P}\left(x_{1},x_{2}\right)=\left[\psi_{J}\left(x_{1}\right)\otimes\psi_{K}\left(x_{2}\right)\right]\left(S_{J}\otimes I_{K+1}+I_{J+1}\otimes S_{K}\right)\beta
\]
and, at last:
\[
\widetilde{\PSI}=S_{J}\otimes I_{K+1}+I_{J+1}\otimes S_{K}.
\]
Matrices of this kind are sometimes called \emph{Kronecker sums} (see
\citet{Canuto-Simoncini-Verani-LAA-2014,Benzi-Simoncini-SIAMJMAA-2015})
and indicated as $S_{J}\oplus S_{K}$ (the symbol $\oplus$ is sometimes
also used for the direct sum of matrices, as in our Section \ref{Sect - Model}).
Exactly $\left(J+1\right)\wedge\left(K+1\right)$ rows and columns
of the $\left(\left(J+1\right)\left(K+1\right)\right)\times\left(\left(J+1\right)\left(K+1\right)\right)$-matrix
$\widetilde{\PSI}$ are not linearly independent of the others, thus
giving a total of $R=\left(J+1\right)\left(K+1\right)-\left(J+1\right)\wedge\left(K+1\right)=JK+J\vee K$
linearly independent restrictions.
\begin{example}
{[}Stevens' Model; Example \ref{Ex - Stevens Model - 1} continued{]}\label{Ex - Stevens Model - 4}
In order to test Stevens' model, we consider the model of Example
\ref{Ex - Stevens Model - 3}. We want to test the statistical hypothesis
$\mathsf{H}_{0}:f\left(\ln X_{1},\ln X_{2}\right)=\kappa\cdot\ln\left(X_{1}/X_{2}\right)$.
The test is then obtained imposing the constraints $\beta_{0}=\beta_{3}=\cdots=\beta_{P-1}=0$
and $\beta_{1}+\beta_{2}=0$. We use the test statistic above with:
\[
\underbrace{\widetilde{\PSI}}_{(P-1)\times P}=\left[\begin{array}{cccccc}
1 & 0 & 0 & 0 &  & 0\\
0 & 1 & 1 & 0 & \cdots & 0\\
0 & 0 & 0 & 1 &  & 0\\
 & \vdots &  &  & \ddots\\
0 & 0 & 0 & 0 &  & 1
\end{array}\right]\text{ and }\underbrace{\Gamma_{0}}_{(P-1)\times1}=\left[\begin{array}{c}
0\\
\vdots\\
0
\end{array}\right].
\]
In this case $R\asymp P$ and we would like to find conditions such
that $R$ can increase with $n$ and $T$. 
\end{example}

\end{example}

\end{example}

We make the following assumptions.

\begin{assumption} \label{ASS4}~ 
\begin{itemize}
\item[(i)] The function $f\in\mathcal{W}_{c,\infty}$. 
\item[(ii)] $\Gamma$ is a uniformly bounded sequence of linear functionals.
That is, $\Gamma$ is linear and for some constant $0<C_{3}<\infty$
(that can depend on $R$, $n$ or $T$) and integer $s\ge0$ such
that $s\leq c$, one has $\left\Vert \Gamma\left(f\right)\right\Vert _{L_{2}}\le C_{3}\left|f\right|_{s}$
for all $n,T$ and $f\in\mathcal{W}_{s,\infty}$. 
\item When applied to $f_{P}(x)=\psi_{P}(x)\beta$, it yields $\Gamma(f_{P})=\Gamma(\PSI\beta)=\widetilde{\PSI}\beta$. 
\end{itemize}
\end{assumption}

\begin{assumption} \label{ASS5}~ 
\begin{itemize}
\item[(i)] For fixed $n$ and $T$, $\lambda_{\min}\left(\overline{\Sigma}\right)>0$. 
\item[(ii)] The error term $\varepsilon$ is such that 
\[
\frac{T^{1+\delta/2}\max_{i,t}\mathbb{E}\left|\varepsilon_{it}\right|^{2+\delta}}{n^{\delta/2}\lambda_{\min}^{1+\delta/2}\left(\overline{\Sigma}\right)}\rightarrow0,
\]
for some $\delta>0$. 
\end{itemize}
\end{assumption}

\begin{assumption} \label{ASS6} For fixed $n$ and $T$, $\lambda_{\min}\left(\widetilde{\PSI}\widetilde{\PSI}^{\prime}\right)>0$.\end{assumption}
\begin{thm}
\label{Th - Asymptotic Normality}~ Let Assumptions \ref{ASS1}-\ref{ASS6}
hold.
\begin{itemize}
\item[(i)] Let $I_{R}$ be the identity matrix of dimension $R$. Then
\[
\overline{W}_{nT}=A_{nT}\left(\Gamma\left(\hat{f}_{P}\right)-\mathbb{E}\Gamma\left(\hat{f}_{P}\right)\right)\overset{\mathcal{D}}{\longrightarrow}\mathcal{N}\left(0,I_{R}\right).
\]
\item[(ii)] The following bound on the bias can be established:
\[
\left\Vert \mathbb{E}W_{nT}\right\Vert _{L_{2}}\le B_{nT}=\left(\frac{nT}{\lambda_{\min}\left(\overline{\Sigma}\right)}\right)^{1/2}\left[\left\Vert f-f_{P}\right\Vert _{\infty}+\frac{C_{3}\left|f-f_{P}\right|_{s}}{\lambda_{\min}^{1/2}\left(\widetilde{\PSI}\widetilde{\PSI}^{\prime}\right)}\right]
\]
where $s$ is the value of the index for which Assumption \ref{ASS4}
(ii) holds.
\item[(iii)] The standardized Wald test whose test statistic is given by $W=\frac{W_{nT}^{\prime}W_{nT}-R}{\sqrt{2R}}$
can be decomposed as: 
\[
W=\frac{\overline{W}_{nT}^{\prime}\overline{W}_{nT}-R}{\sqrt{2R}}+O_{\mathbb{P}}\left(B_{nT}+\frac{B_{nT}^{2}}{\sqrt{R}}\right)
\]
where the distribution of $\frac{\overline{W}_{nT}^{\prime}\overline{W}_{nT}-R}{\sqrt{2R}}$
is such that 
\[
\left|\mathbb{P}\left\{ \frac{\overline{W}_{nT}^{\prime}\overline{W}_{nT}-R}{\sqrt{2R}}\le\omega\right\} -\mathbb{P}\left\{ \frac{\chi_{R}^{2}-R}{\sqrt{2R}}\le\omega\right\} \right|\le C\frac{R^{1/4}T^{3/2}\max_{i,t}\mathbb{E}\left|\varepsilon_{it}\right|^{3}}{n^{1/2}\lambda_{\min}^{3/2}\left(\overline{\Sigma}\right)}
\]
(irrespectively of the fact that $R$ is fixed or goes to infinity),
or
\[
\left|\mathbb{P}\left\{ \frac{\overline{W}_{nT}^{\prime}\overline{W}_{nT}-R}{\sqrt{2R}}\le\omega\right\} -\mathbb{P}\left\{ Z\le\omega\right\} \right|\le C\left(\frac{R^{1/4}T^{3/2}\max_{i,t}\mathbb{E}\left|\varepsilon_{it}\right|^{3}}{n^{1/2}\lambda_{\min}^{3/2}\left(\overline{\Sigma}\right)}+\frac{1}{R^{1/2}}\right),
\]
where $Z$ is a standard normal random variable. 
\end{itemize}
\end{thm}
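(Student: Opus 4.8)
The plan is to exploit the decomposition $W_{nT}=\overline{W}_{nT}+\mathbb{E}W_{nT}$ already recorded above together with the explicit form of the estimation error. Since $\widehat{\beta}=\left(\PSI^{\prime}\PSI\right)^{-1}\PSI^{\prime}\overline{\mathbf{Y}}$ and $\overline{\mathbf{Y}}=f(\mathbf{X})+\overline{\varepsilon}$ with $\overline{\varepsilon}=n^{-1}\sum_{i=1}^{n}\varepsilon_{i}$, one has $\widehat{\beta}-\mathbb{E}\widehat{\beta}=\left(\PSI^{\prime}\PSI\right)^{-1}\PSI^{\prime}\overline{\varepsilon}$, so I would first write $\overline{W}_{nT}=A_{nT}\widetilde{\PSI}\left(\PSI^{\prime}\PSI\right)^{-1}\PSI^{\prime}\overline{\varepsilon}=\frac{1}{n}\sum_{i=1}^{n}A_{nT}M\varepsilon_{i}$, where $M\triangleq\widetilde{\PSI}\left(\PSI^{\prime}\PSI\right)^{-1}\PSI^{\prime}$. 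This exhibits $\overline{W}_{nT}$ as a normalized sum of $n$ independent, mean-zero $R$-vectors whose covariances sum exactly to $A_{nT}V_{nT}A_{nT}^{\prime}=I_{R}$ by construction of $A_{nT}=V_{nT}^{-1/2}$. For part (i), I would prove the central limit theorem by the Cram\'er--Wold device: for a fixed unit vector $c\in\mathbb{R}^{R}$, the scalar $c^{\prime}\overline{W}_{nT}=\frac{1}{n}\sum_{i}g^{\prime}\varepsilon_{i}$ with $g^{\prime}\triangleq c^{\prime}A_{nT}M$ has unit variance, so the relevant Lyapunov ratio is $\sum_{i}\mathbb{E}\left|n^{-1}g^{\prime}\varepsilon_{i}\right|^{2+\delta}$. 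Using $g^{\prime}\overline{\Sigma}g=n$ together with $\overline{\Sigma}\succeq\lambda_{\min}(\overline{\Sigma})I_{T}$ gives $\sum_{t}\left|g_{t}\right|\le\sqrt{T}\,\|g\|\le\left(Tn/\lambda_{\min}(\overline{\Sigma})\right)^{1/2}$, and a Minkowski bound on $\mathbb{E}|g^{\prime}\varepsilon_{i}|^{2+\delta}$ shows the Lyapunov ratio is dominated, uniformly in $c$, by the quantity in Assumption \ref{ASS5} (ii), which vanishes. The Lyapunov CLT and Cram\'er--Wold then deliver (i).

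For part (ii), I would use $\mathbb{E}\widehat{\beta}=\left(\PSI^{\prime}\PSI\right)^{-1}\PSI^{\prime}f(\mathbf{X})$, so that $\mathbb{E}W_{nT}=A_{nT}\bigl(\widetilde{\PSI}\mathbb{E}\widehat{\beta}-\Gamma(f)\bigr)$, and split it along a good approximant $f_{P}=\PSI\beta_{P}$ as $A_{nT}M\bigl(f-f_{P}\bigr)(\mathbf{X})+A_{nT}\Gamma\bigl(f_{P}-f\bigr)$, using $\widetilde{\PSI}(\mathbb{E}\widehat{\beta}-\beta_{P})=M(f-f_{P})(\mathbf{X})$ and $\Gamma(f_{P})=\widetilde{\PSI}\beta_{P}$. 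The first term I would control by the identity $\|A_{nT}Mv\|^{2}=n\,v^{\prime}M^{\prime}(M\overline{\Sigma}M^{\prime})^{-1}Mv$ and the majorization $(M\overline{\Sigma}M^{\prime})^{-1}\preceq\lambda_{\min}^{-1}(\overline{\Sigma})(MM^{\prime})^{-1}$; since $M^{\prime}(MM^{\prime})^{-1}M$ is an orthogonal projector and $\|(f-f_{P})(\mathbf{X})\|\le\sqrt{T}\,\|f-f_{P}\|_{\infty}$, this yields the bound $\left(nT/\lambda_{\min}(\overline{\Sigma})\right)^{1/2}\|f-f_{P}\|_{\infty}$. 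For the second term I would bound $\|A_{nT}\|_{\mathrm{op}}=\lambda_{\min}^{-1/2}(V_{nT})$ using $V_{nT}\succeq n^{-1}\lambda_{\min}(\overline{\Sigma})\,\widetilde{\PSI}(\PSI^{\prime}\PSI)^{-1}\widetilde{\PSI}^{\prime}\succeq n^{-1}\lambda_{\min}(\overline{\Sigma})\lambda_{\max}^{-1}(\PSI^{\prime}\PSI)\,\widetilde{\PSI}\widetilde{\PSI}^{\prime}$, together with $\lambda_{\max}(\PSI^{\prime}\PSI)\asymp T$ from Assumption \ref{ASS2} (i); Assumption \ref{ASS4} (ii) then supplies $\|\Gamma(f_{P}-f)\|_{L_{2}}\le C_{3}|f-f_{P}|_{s}$, producing the second summand of $B_{nT}$.

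For part (iii), I would expand $W_{nT}^{\prime}W_{nT}=\overline{W}_{nT}^{\prime}\overline{W}_{nT}+2\,\mathbb{E}W_{nT}^{\prime}\overline{W}_{nT}+\|\mathbb{E}W_{nT}\|^{2}$ and divide by $\sqrt{2R}$. Cauchy--Schwarz with $\|\overline{W}_{nT}\|=O_{\mathbb{P}}(\sqrt{R})$ (its squared norm has mean $R$) and the bias bound of part (ii) turn the cross and quadratic terms into $O_{\mathbb{P}}\bigl(B_{nT}+B_{nT}^{2}/\sqrt{R}\bigr)$. The core is then the law of $\bigl(\overline{W}_{nT}^{\prime}\overline{W}_{nT}-R\bigr)/\sqrt{2R}$, for which I would invoke a Berry--Esseen bound for the $\chi^{2}_{R}$ approximation of the squared Euclidean norm of a sum of independent random vectors, contributing a dimension factor of order $R^{1/4}$ times the Lyapunov quantity $\sum_{i}\mathbb{E}\|n^{-1}A_{nT}M\varepsilon_{i}\|^{3}$. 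Bounding $\|A_{nT}M\|_{\mathrm{op}}\le\left(n/\lambda_{\min}(\overline{\Sigma})\right)^{1/2}$ and $\mathbb{E}\|\varepsilon_{i}\|^{3}\le T^{3/2}\max_{i,t}\mathbb{E}|\varepsilon_{it}|^{3}$ shows this quantity is at most $T^{3/2}\max_{i,t}\mathbb{E}|\varepsilon_{it}|^{3}/\bigl(n^{1/2}\lambda_{\min}^{3/2}(\overline{\Sigma})\bigr)$, reproducing the stated $\chi^{2}_{R}$ bound; the normal-approximation variant follows by adding the classical rate $O(R^{-1/2})$ for $(\chi^{2}_{R}-R)/\sqrt{2R}$ against $Z$. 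The main obstacle is precisely this $\chi^{2}_{R}$ Berry--Esseen step: securing the sharp $R^{1/4}$ dimension dependence for a triangular array of independent but non-identically distributed vectors of growing dimension, and tracking how $\lambda_{\min}(\overline{\Sigma})$, $\lambda_{\max}(\PSI^{\prime}\PSI)\asymp T$ and $\lambda_{\min}(\widetilde{\PSI}\widetilde{\PSI}^{\prime})$ propagate through the third-moment bound, is the delicate part; everything else reduces to the spectral majorizations above.
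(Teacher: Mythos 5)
Your proposal is correct and follows essentially the same route as the paper's proof: Cram\'er--Wold plus a Lyapunov CLT for part (i) (your Minkowski/$\ell_{1}$--$\ell_{2}$ bound on the Lyapunov ratio is an equivalent variant of the paper's Courant--Fisher/Lo\`eve argument and lands on the same quantity from Assumption \ref{ASS5} (ii)), the same two-term bias decomposition with the same spectral majorizations for part (ii), and the same quadratic expansion plus a dimension-dependent $R^{1/4}$ Berry--Ess\'een bound for part (iii). The ``delicate'' external ingredient you correctly identify but do not name is exactly Theorem 1.1 of \citet{bentkus2004}, which the paper invokes for balls in $\mathbb{R}^{R}$ with independent, non-identically distributed summands.
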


\begin{rem}
The bounds in part (iii) of this theorem use the theoretical results
in \citet{bentkus2004}, who provides a Berry-Esséen bound for independent
non-identically distributed random variables. For i.i.d. vectors $\varepsilon_{i}$,
the first part of these bounds would not depend on the number of constraints
$R$ \citep[see][]{bentkus2003}.
\end{rem}

\begin{example}
{[}Pointwise Constraints; Example \ref{Ex - Pointwise Constraints-1}
continued{]}\label{Ex - Pointwise Constraints-2} The linear functional
is $\Gamma(f_{P})=\widetilde{\PSI}\beta$ and $\Gamma(f)=\Gamma_{0}$,
so that $\Gamma^{\prime}=\widetilde{\PSI}$. The variance is given
by 
\[
V_{nT}=\frac{1}{nT}\widetilde{\PSI}\left(\frac{\PSI^{\prime}\PSI}{T}\right)^{-1}\frac{\PSI^{\prime}\overline{\Sigma}\PSI}{T}\left(\frac{\PSI^{\prime}\PSI}{T}\right)^{-1}\widetilde{\PSI}^{\prime}.
\]
Assumptions \ref{ASS1}, \ref{ASS2}, \ref{ASS4} (i) and \ref{ASS6}
hold if $\PSI$ is well-chosen. Letting $s=0$, and supposing that
$N_{P}=\left\Vert f-f_{P}\right\Vert _{\infty}=O\left(P^{-c}\right)$
with $c>0$ and that $\lambda_{nT}\asymp\tau_{nT}\asymp T$, Assumption
\ref{ASS3} holds provided $\frac{P^{2}}{n}\rightarrow0$. Assumption
\ref{ASS4} (ii) holds since $\left\Vert \Gamma(f_{P})\right\Vert _{L_{2}}^{2}=\left[f\left(X_{\left(1\right)}\right)\right]^{2}$
and we can use the upper bound $\left|f\left(X_{\left(1\right)}\right)\right|\le\left\Vert f\right\Vert _{\infty}=\left|f\right|_{0}$
to state that $C_{3}=1$ and $s=0$. Assumption \ref{ASS5} (i) can
be supposed to be true. Then, Assumption \ref{ASS5} (ii) leads to
a constraint on the relative rate of increase of $n$ and $T$: if,
taking $\delta=1$, $\max_{i,t}\mathbb{E}\left|\varepsilon_{it}\right|^{3}$
is bounded from above and $\lambda_{\min}\left(\overline{\Sigma}\right)$
from below uniformly in $n$ and $T$, we need $\frac{T^{3}}{n}\rightarrow0$.
The test statistic is: 
\[
W=\frac{n\frac{\left(\widetilde{\PSI}\widehat{\beta}-\Gamma_{0}\right)^{2}}{\widetilde{\PSI}\left(\PSI^{\prime}\PSI\right)^{-1}\PSI^{\prime}\overline{\Sigma}\PSI\left(\PSI^{\prime}\PSI\right)^{-1}\widetilde{\PSI}^{\prime}}-1}{\sqrt{2}}.
\]
The rate of decrease of the bias is given by: 
\[
B_{nT}=O\left(\left(nT\right)^{1/2}P^{-c}\right)
\]
where we have used $\lambda_{\min}^{1/2}\left(\widetilde{\PSI}\widetilde{\PSI}^{\prime}\right)=\left\Vert \widetilde{\PSI}\right\Vert _{L_{2}}=\left(\sum_{j=0}^{P-1}X_{\left(1\right)}^{2j}\right)^{1/2}\ge1$.
This means that we need $\left(nT\right)^{\frac{1}{2c}}\ll P\ll n^{\frac{1}{2}}$
to get consistency of the sieve estimator and convergence to 0 of
the bias of the Wald test statistic, while asymptotic normality requires
$T^{3}\ll n$.
\begin{example}
{[}Cumulative Prospect Theory; Example \ref{Ex - CPT - 1} continued{]}\label{Ex - CPT - 4}
The matrix $\widetilde{\PSI}$ is the sum of the partial derivatives
of $\PSI$ with respect to $\ln X_{t}$ and $\ln CE_{t}$, respectively.
The variance can thus be written as in Example \ref{Ex - Pointwise Constraints-2}.
Assumptions \ref{ASS1}, \ref{ASS2}, \ref{ASS3}, \ref{ASS4} (i)
and \ref{ASS6} hold whenever $\PSI$ is chosen appropriately. We
assume that $f\in\mathcal{W}_{1,\infty}$, so that $\vert f\vert_{1}<\infty$.
In this example, we further consider the space $\mathcal{H}_{1}\equiv\mathcal{\mathcal{W}}_{1,2}$,
and denote as $\Vert\cdot\Vert_{\mathcal{H}_{1}}$ its Sobolev norm.
Notice that these operators are bounded in $L^{2}$, as a function
in $\mathcal{H}_{1}$ with norm equal to $1$ must admit a derivative
with finite $L^{2}$ norm. Therefore, we obtain

\[
\left\Vert \frac{\partial f}{\partial\ln\left(X\right)}+\frac{\partial f}{\partial\ln\left(CE\right)}\right\Vert _{L^{2}}\leq\left\Vert f\right\Vert _{\mathcal{H}_{1}}\leq C\left|f\right|_{1},
\]
and Assumption \ref{ASS4} (ii) holds with $s=1$. The number of restrictions
$R\sim JK$ and the test statistic can be written as
\[
W=\frac{n\widehat{\beta}^{\prime}\widetilde{\PSI}^{\prime}\left[\widetilde{\PSI}\left(\PSI^{\prime}\PSI\right)^{-1}\PSI^{\prime}\overline{\Sigma}\PSI\left(\PSI^{\prime}\PSI\right)^{-1}\widetilde{\PSI}^{\prime}\right]^{-1}\widetilde{\PSI}\widehat{\beta}-R}{\sqrt{2R}}.
\]
If $f$ is analytic, using Example \ref{Ex - CPT - 2}, the order
of the bias is
\[
B_{nT}=O\left(nTJK\rho^{J\wedge K}\right).
\]
Now, suppose that $B_{nT}=o\left(1\right)$ and that $\max_{i,t}\mathbb{E}\left|\varepsilon_{it}\right|^{3}$
($\lambda_{\min}\left(\overline{\Sigma}\right)$) is bounded uniformly
from above (below). Therefore, $W$ can be approximated by $\frac{\chi_{R}^{2}-R}{\sqrt{2R}}$
if $\frac{JKT^{6}}{n^{2}}\rightarrow0$ and by $\mathcal{N}\left(0,1\right)$
if, in addition, $J,K\rightarrow\infty$.
\begin{example}
{[}Stevens' Model; Example \ref{Ex - Stevens Model - 1} continued{]}\label{Ex - Stevens Model - 5}
The variance can be written as above. Assumptions \ref{ASS1}, \ref{ASS2},
\ref{ASS3}, \ref{ASS4} (i) and \ref{ASS6} hold whenever $\PSI$
is chosen appropriately. Now, $\left\Vert \Gamma\left(f\right)\right\Vert _{L_{2}}^{2}=\beta_{0}^{2}+\left(\beta_{1}+\beta_{2}\right)^{2}+\sum_{j=3}^{\infty}\beta_{j}^{2}$
where the $\beta_{j}$'s are the coefficients in the power series
expansion of the function $f$. Using Young's inequality, we obtain
\[
\left\Vert \Gamma\left(f\right)\right\Vert _{L_{2}}^{2}=\beta_{0}^{2}+\left(\beta_{1}+\beta_{2}\right)^{2}+\sum_{j=3}^{\infty}\beta_{j}^{2}\leq\beta_{0}^{2}+2\beta_{1}^{2}+2\beta_{2}^{2}+\sum_{j=3}^{\infty}\beta_{j}^{2}\leq2\sum_{j=0}^{\infty}\beta_{j}^{2}\leq2\Vert f\Vert_{L_{2}}^{2}\leq2C\Vert f\Vert_{\infty}^{2}.
\]
Therefore, Assumption \ref{ASS4} (ii) holds with $s=0$. Under the
null hypothesis of the test, the bias $B_{nT}$ is exactly $0$. Assumption
\ref{ASS5} (ii) leads to a constraint on the relative rate of increase
of $n$ and $T$: if, for some $\delta>0$, $\max_{i,t}\mathbb{E}\left|\varepsilon_{it}\right|^{2+\delta}$
is bounded from above and $\lambda_{\min}\left(\overline{\Sigma}\right)$
from below uniformly in $n$ and $T$ (thus respecting Assumption
\ref{ASS5} (i)), we need $\frac{T^{2+\delta}}{n^{\delta}}\rightarrow0$.
The test statistic is: 
\[
W=\frac{n\widehat{\beta}^{\prime}\widetilde{\PSI}^{\prime}\left[\widetilde{\PSI}\left(\PSI^{\prime}\PSI\right)^{-1}\PSI^{\prime}\overline{\Sigma}\PSI\left(\PSI^{\prime}\PSI\right)^{-1}\widetilde{\PSI}^{\prime}\right]^{-1}\widetilde{\PSI}\widehat{\beta}-\left(P-1\right)}{\sqrt{2\left(P-1\right)}}.
\]
This can be approximated by $\frac{\chi_{P-1}^{2}-\left(P-1\right)}{\sqrt{2\left(P-1\right)}}$
if $\frac{PT^{6}}{n^{2}}\rightarrow0$ and by $\mathcal{N}\left(0,1\right)$
if, in addition, $P\rightarrow\infty$. 
\end{example}

\end{example}

\end{example}

\subsection{Nonlinear Case}

Now we pass to consider asymptotic normality of nonlinear functionals,
also denoted as $\Gamma:\mathcal{F}\rightarrow\mathbb{R}^{R}$. Our
treatment extends the one in Theorem 2 in \citet{newey1997} to the
multivariate case. We suppose that $\Gamma$ possesses a directional
derivative $\Gamma^{\prime}$ enjoying the properties stated in Assumption
\ref{ASS4} (ii). When the second argument of $\Gamma^{\prime}$ is
the true function $f$, we simply write $\Gamma^{\prime}\left(\cdot,f\right)=\Gamma^{\prime}\left(\cdot\right)$.
In case of nonlinear functionals, we replace Assumption \ref{ASS4}
(ii) with the following. 

\begin{assumption} \label{ASS7} $\Gamma$ is a nonlinear functional
for which a function $\Gamma^{\prime}\left(f;\tilde{f}\right)$ exists
such that: 
\begin{itemize}
\item[(i)] $\Gamma^{\prime}\left(f;\tilde{f}\right)$ is linear in $f$; when
applied to $f_{P}(x)=\psi_{P}(x)\beta$, it yields $\Gamma^{\prime}(f_{P},f)=\Gamma^{\prime}(\PSI\beta)=\widetilde{\PSI}\beta$. 
\item[(ii)] For some $C_{4},C_{5},\epsilon>0$ and for all $\tilde{f}$, $\bar{f}$
such that $\left|f-\tilde{f}\right|_{s}<\epsilon$ and $\left|f-\bar{f}\right|_{s}<\epsilon$,
the inequalities: 
\begin{align*}
\left\Vert \Gamma\left(f\right)-\Gamma\left(\tilde{f}\right)-\Gamma^{\prime}\left(f-\tilde{f};\tilde{f}\right)\right\Vert _{L_{2}}\le & C_{4}\left|f-\tilde{f}\right|_{s}^{2},\\
\left\Vert \Gamma^{\prime}\left(f;\tilde{f}\right)-\Gamma^{\prime}\left(f;\bar{f}\right)\right\Vert _{L_{2}}\le & C_{5}\left|f\right|_{s}\left|\tilde{f}-\bar{f}\right|_{s},
\end{align*}
hold. 
\item[(iii)] $\left\Vert \Gamma^{\prime}\left(g;f\right)\right\Vert _{L_{2}}\le C_{6}\left|g\right|_{s}$
for all $g\in\mathcal{W}_{s,\infty}$ and $s\leq c$. 
\end{itemize}
\end{assumption}

The variance $V_{nT}$ and its square root $A_{nT}$ entering the
statement of the theorem have exactly the same definition as above,
with $\widetilde{\PSI}$ defined as in Assumption \ref{ASS7} (i).
In this case the decomposition of $W_{nT}$ into an error and a bias
term has a parallel in the decomposition into a linear part and a
remainder part (that is not strictly speaking a bias since it is random
and depends upon $\widehat{\beta}$). The linear part provides the
asymptotic distribution and the remainder has to be bounded adequately.
\begin{thm}
\label{Th - Asymptotic Normality under Nonlinearity}~ Let Assumptions
\ref{ASS1}-\ref{ASS3}, \ref{ASS4} (i), \ref{ASS5}-\ref{ASS7}
hold.
\begin{itemize}
\item[(i)] The following asymptotic distribution holds:
\[
\overline{W}_{nT}=A_{nT}\left(\Gamma^{\prime}\left(\hat{f}_{P}\right)-\mathbb{E}\Gamma^{\prime}\left(\hat{f}_{P}\right)\right)\overset{\mathcal{D}}{\longrightarrow}\mathcal{N}\left(0,I_{R}\right).
\]
\item[(ii)] The following bound on the remainder can be established:
\begin{align*}
 & \left\Vert A_{nT}\left(\Gamma\left(\hat{f}_{P}\right)-\Gamma\left(f\right)-\Gamma^{\prime}\left(\hat{f}_{P}\right)+\mathbb{E}\Gamma^{\prime}\left(\hat{f}_{P}\right)\right)\right\Vert _{L_{2}}\\
\le & B_{nT}=C\left(\frac{nT}{\lambda_{\min}\left(\overline{\Sigma}\right)}\right)^{1/2}\left[\left\Vert f-f_{P}\right\Vert _{\infty}+\frac{C_{4}\left|f-\hat{f}_{P}\right|_{s}^{2}+C_{6}\left|f-f_{P}\right|_{s}}{\lambda_{\min}^{1/2}\left(\widetilde{\PSI}\widetilde{\PSI}^{\prime}\right)}\right]
\end{align*}
where $s$ is the value of the index for which Assumption \ref{ASS7}
(iii) holds. 
\item[(iii)] The standardized Wald test whose test statistic is given by $W=\frac{W_{nT}^{\prime}W_{nT}-R}{\sqrt{2R}}$
can be decomposed as: 
\[
W=\frac{\overline{W}_{nT}^{\prime}\overline{W}_{nT}-R}{\sqrt{2R}}+O_{\mathbb{P}}\left(B_{nT}+\frac{B_{nT}^{2}}{\sqrt{R}}\right)
\]
where $\frac{\overline{W}_{nT}^{\prime}\overline{W}_{nT}-R}{\sqrt{2R}}$
respects the Berry-Esséen bounds of Theorem \ref{Th - Asymptotic Normality}
(iii).
\end{itemize}
\end{thm}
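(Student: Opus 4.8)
The plan is to reduce everything to the linear theory already established in Theorem \ref{Th - Asymptotic Normality}, treating the directional derivative $\Gamma^{\prime}\left(\cdot\right)=\Gamma^{\prime}\left(\cdot;f\right)$ as the relevant linear functional. Indeed, by Assumption \ref{ASS7} (i) the map $g\mapsto\Gamma^{\prime}\left(g;f\right)$ is linear and satisfies $\Gamma^{\prime}\left(\PSI\beta\right)=\widetilde{\PSI}\beta$, while Assumption \ref{ASS7} (iii) gives $\left\Vert \Gamma^{\prime}\left(g;f\right)\right\Vert _{L_{2}}\le C_{6}\left|g\right|_{s}$, which is exactly the content of Assumption \ref{ASS4} (ii) with $C_{3}=C_{6}$. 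Consequently $\overline{W}_{nT}=A_{nT}\left(\Gamma^{\prime}\left(\hat{f}_{P}\right)-\mathbb{E}\Gamma^{\prime}\left(\hat{f}_{P}\right)\right)=A_{nT}\widetilde{\PSI}\left(\widehat{\beta}-\mathbb{E}\widehat{\beta}\right)$ is formally identical to the error term of the linear case, with the same $V_{nT}$ and $A_{nT}$. Part (i) then follows verbatim from Theorem \ref{Th - Asymptotic Normality} (i), since its only inputs --- Assumptions \ref{ASS1}, \ref{ASS5}, \ref{ASS6} and the associated Lyapunov-type condition --- are unchanged.

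For part (ii) I would linearize $\Gamma$ around $f$ by combining the two inequalities of Assumption \ref{ASS7} (ii). Setting $\tilde{f}=\hat{f}_{P}$ in the first inequality yields $\Gamma\left(f\right)-\Gamma\left(\hat{f}_{P}\right)=\Gamma^{\prime}\left(f-\hat{f}_{P};\hat{f}_{P}\right)+r_{1}$ with $\left\Vert r_{1}\right\Vert _{L_{2}}\le C_{4}\left|f-\hat{f}_{P}\right|_{s}^{2}$; taking first argument $g=f-\hat{f}_{P}$, base points $\tilde{f}=\hat{f}_{P}$ and $\bar{f}=f$ in the second inequality gives $\Gamma^{\prime}\left(f-\hat{f}_{P};\hat{f}_{P}\right)=\Gamma^{\prime}\left(f-\hat{f}_{P};f\right)+r_{2}$ with $\left\Vert r_{2}\right\Vert _{L_{2}}\le C_{5}\left|f-\hat{f}_{P}\right|_{s}^{2}$. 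Using linearity to write $\Gamma^{\prime}\left(f-\hat{f}_{P};f\right)=\Gamma^{\prime}\left(f\right)-\Gamma^{\prime}\left(\hat{f}_{P}\right)$ and rearranging, the target remainder collapses to
\[
\Gamma\left(\hat{f}_{P}\right)-\Gamma\left(f\right)-\Gamma^{\prime}\left(\hat{f}_{P}\right)+\mathbb{E}\Gamma^{\prime}\left(\hat{f}_{P}\right)=\left[\mathbb{E}\Gamma^{\prime}\left(\hat{f}_{P}\right)-\Gamma^{\prime}\left(f\right)\right]-\left(r_{1}+r_{2}\right).
\]
The bracketed term is exactly the bias of the linear functional $\Gamma^{\prime}$, so applying $A_{nT}$ and invoking Theorem \ref{Th - Asymptotic Normality} (ii) produces the $\left\Vert f-f_{P}\right\Vert _{\infty}$ and $C_{6}\left|f-f_{P}\right|_{s}$ contributions to $B_{nT}$. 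For $r_{1}+r_{2}$ I would bound $\left\Vert A_{nT}\left(r_{1}+r_{2}\right)\right\Vert _{L_{2}}\le\lambda_{\min}^{-1/2}\left(V_{nT}\right)\left(C_{4}+C_{5}\right)\left|f-\hat{f}_{P}\right|_{s}^{2}$ and lower-bound $\lambda_{\min}\left(V_{nT}\right)$ via $\frac{\PSI^{\prime}\overline{\Sigma}\PSI}{T}\succeq\lambda_{\min}\left(\overline{\Sigma}\right)\frac{\PSI^{\prime}\PSI}{T}$ (positive semidefinite order) together with Assumption \ref{ASS2} (i), yielding $\lambda_{\min}^{-1/2}\left(V_{nT}\right)\lesssim\left(nT/\lambda_{\min}\left(\overline{\Sigma}\right)\right)^{1/2}\lambda_{\min}^{-1/2}\left(\widetilde{\PSI}\widetilde{\PSI}^{\prime}\right)$; this delivers the quadratic $C_{4}\left|f-\hat{f}_{P}\right|_{s}^{2}$ term, with $C_{5}$ and the bounded factor $\lambda_{\max}^{1/2}\left(\PSI^{\prime}\PSI/T\right)$ absorbed into $C$.

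Part (iii) is then purely algebraic and mirrors the linear case. Writing $\rho\triangleq W_{nT}-\overline{W}_{nT}$, which is precisely the quantity bounded by $B_{nT}$ in part (ii), the expansion $W_{nT}^{\prime}W_{nT}=\overline{W}_{nT}^{\prime}\overline{W}_{nT}+2\overline{W}_{nT}^{\prime}\rho+\rho^{\prime}\rho$ gives
\[
W=\frac{\overline{W}_{nT}^{\prime}\overline{W}_{nT}-R}{\sqrt{2R}}+\frac{2\overline{W}_{nT}^{\prime}\rho+\rho^{\prime}\rho}{\sqrt{2R}}.
\]
Since $\mathbb{E}\left\Vert \overline{W}_{nT}\right\Vert ^{2}=\mathrm{tr}\left(A_{nT}V_{nT}A_{nT}\right)=R$ gives $\left\Vert \overline{W}_{nT}\right\Vert =O_{\mathbb{P}}\left(\sqrt{R}\right)$, and part (ii) gives $\left\Vert \rho\right\Vert =O_{\mathbb{P}}\left(B_{nT}\right)$, the Cauchy--Schwarz inequality bounds the two correction terms by $O_{\mathbb{P}}\left(B_{nT}\right)$ and $O_{\mathbb{P}}\left(B_{nT}^{2}/\sqrt{R}\right)$ respectively. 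The Berry--Esseen control of $\left(\overline{W}_{nT}^{\prime}\overline{W}_{nT}-R\right)/\sqrt{2R}$ is inherited directly from Theorem \ref{Th - Asymptotic Normality} (iii), since $\overline{W}_{nT}$ is the same normalized linear statistic.

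The main obstacle is a localization issue in part (ii): the inequalities of Assumption \ref{ASS7} (ii) are valid only for $\tilde{f}$ within an $\epsilon$-neighborhood of $f$ in the $\left|\cdot\right|_{s}$ norm, whereas $\hat{f}_{P}$ is random. I would therefore first invoke Theorem \ref{Th - Main Result} under Assumption \ref{ASS3}, which guarantees $\left|f-\hat{f}_{P}\right|_{s}\overset{\mathbb{P}}{\longrightarrow}0$ and hence $\mathbb{P}\left(\left|f-\hat{f}_{P}\right|_{s}<\epsilon\right)\to1$; all the bounds above then hold on an event of probability tending to one, which suffices for the $O_{\mathbb{P}}$ conclusions. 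A secondary delicacy is keeping the factor $\lambda_{\max}^{1/2}\left(\PSI^{\prime}\PSI/T\right)$ in the $A_{nT}$ operator-norm bound controlled, which is supplied by the convergence of $\PSI^{\prime}\PSI/T$ to $Q_{P}$ in Assumption \ref{ASS2} (i).
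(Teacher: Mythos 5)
Your proposal is correct and follows essentially the same route as the paper: part (i) is reduced to Theorem \ref{Th - Asymptotic Normality} by noting that $\Gamma^{\prime}\left(\cdot;f\right)$ plays the role of the linear functional of Assumption \ref{ASS4} (ii) with $C_{3}=C_{6}$, the remainder in part (ii) is split into the quadratic linearization error (controlled by Assumption \ref{ASS7} (ii) together with the bound $\left\Vert A_{nT}\right\Vert _{L_{2}}^{2}\le nT\lambda_{\max}\left(\PSI^{\prime}\PSI/T\right)/\left(\lambda_{\min}\left(\widetilde{\PSI}\widetilde{\PSI}^{\prime}\right)\lambda_{\min}\left(\overline{\Sigma}\right)\right)$) plus the bias term $\mathbb{E}\Gamma^{\prime}\left(\hat{f}_{P}\right)-\Gamma^{\prime}\left(f\right)$ handled by the linear theory, and part (iii) is the same quadratic expansion with Cauchy--Schwarz. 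Your two refinements --- using the second inequality of Assumption \ref{ASS7} (ii) to move the base point of the derivative from $\hat{f}_{P}$ to $f$ (the paper applies the first inequality directly, absorbing this into $C$), and explicitly localizing to the event $\left|f-\hat{f}_{P}\right|_{s}<\epsilon$ via Theorem \ref{Th - Main Result} --- are slightly more careful than the paper's writeup but do not alter the argument.
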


\subsection{Estimation of the Variance Matrix}

In this section we provide some results about the effect of replacing
$\overline{\Sigma}$ with an estimator $\widehat{\overline{\Sigma}}$.
Wald-type tests require the replacement to occur in the test statistics
but have standard asymptotic distributions: in this case, we provide
a bound on the absolute error of the replacement in $\overline{W}_{nT}^{\prime}\overline{W}_{nT}$. 
\begin{thm}
\label{Th - Estimation of the Variance Matrix} Define the residuals
\[
\widehat{U}_{i}=Y_{i}-\PSI\widehat{\beta},
\]
and the estimated average covariance matrix $\widehat{\overline{\Sigma}}=\frac{1}{n}\sum_{i=1}^{n}\widehat{U}_{i}\widehat{U}_{i}^{\prime}$.
Consider the setting of Theorem \ref{Th - Asymptotic Normality}.
Let $\widehat{\overline{W}_{nT}^{\prime}\overline{W}_{nT}}$ be the
quantity $\overline{W}_{nT}^{\prime}\overline{W}_{nT}$ obtained replacing
$\overline{\Sigma}$ with $\widehat{\overline{\Sigma}}$. Let Assumptions
\ref{ASS1}-\ref{ASS4} or Assumptions \ref{ASS1}-\ref{ASS3}, \ref{ASS4}(i),
\ref{ASS5}-\ref{ASS7} hold. If
\[
\frac{\lambda_{\max}\left(\widetilde{\PSI}\widetilde{\PSI}^{\prime}\right)}{\lambda_{\min}\left(\widetilde{\PSI}\widetilde{\PSI}^{\prime}\right)}\frac{T\left(1\vee\ln R\right)\sqrt{\max_{i,t}\mathbb{E}\varepsilon_{it}^{4}}}{\sqrt{n}\lambda_{\min}\left(\overline{\Sigma}\right)}\rightarrow0,
\]
then
\[
\frac{\widehat{\overline{W}_{nT}^{\prime}\overline{W}_{nT}}-R}{\sqrt{2R}}=\frac{\overline{W}_{nT}^{\prime}\overline{W}_{nT}-R}{\sqrt{2R}}+O_{\mathbb{P}}\left(\frac{\lambda_{\max}\left(\widetilde{\PSI}\widetilde{\PSI}^{\prime}\right)}{\lambda_{\min}\left(\widetilde{\PSI}\widetilde{\PSI}^{\prime}\right)}\frac{T\sqrt{R}\left(1\vee\ln R\right)\sqrt{\max_{i,t}\mathbb{E}\varepsilon_{it}^{4}}}{\lambda_{\min}\left(\overline{\Sigma}\right)\sqrt{n}}\right).
\]
\end{thm}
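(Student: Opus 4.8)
The plan is to show that replacing $\overline{\Sigma}$ by $\widehat{\overline{\Sigma}}$ acts on $\overline{W}_{nT}^{\prime}\overline{W}_{nT}$ only through a multiplicative perturbation of $V_{nT}$, and then to control that perturbation by a matrix concentration argument. Writing $\widehat{V}_{nT}$ for the version of $V_{nT}$ with $\overline{\Sigma}$ replaced by $\widehat{\overline{\Sigma}}$, and using $\widetilde{\PSI}(\widehat{\beta}-\mathbb{E}\widehat{\beta})=V_{nT}^{1/2}\overline{W}_{nT}$, one obtains
\[
\widehat{\overline{W}_{nT}^{\prime}\overline{W}_{nT}}-\overline{W}_{nT}^{\prime}\overline{W}_{nT}=\overline{W}_{nT}^{\prime}\left(V_{nT}^{1/2}\widehat{V}_{nT}^{-1}V_{nT}^{1/2}-I_{R}\right)\overline{W}_{nT}.
\]
Setting $E\triangleq V_{nT}^{-1/2}(\widehat{V}_{nT}-V_{nT})V_{nT}^{-1/2}$, the central factor is $(I_{R}+E)^{-1}-I_{R}=-E+E^{2}-\cdots$, whose operator norm is at most $\|E\|(1-\|E\|)^{-1}$ whenever $\|E\|<1$; so the difference is bounded by $\|E\|(1-\|E\|)^{-1}\|\overline{W}_{nT}\|^{2}$. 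Theorem \ref{Th - Asymptotic Normality} (i) and (iii) give $\|\overline{W}_{nT}\|^{2}=\overline{W}_{nT}^{\prime}\overline{W}_{nT}=R+O_{\mathbb{P}}(\sqrt{R})=O_{\mathbb{P}}(R)$, so the unstandardised difference is $O_{\mathbb{P}}(\|E\|R)$ and, after dividing by $\sqrt{2R}$, the standardised difference is $O_{\mathbb{P}}(\|E\|\sqrt{R})$. The hypothesised convergence guarantees $\|E\|=o_{\mathbb{P}}(1)$, which legitimises the Neumann expansion; everything thus reduces to bounding $\|E\|$ by the quantity in the theorem's condition.

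Next I would make the dependence on $\widehat{\overline{\Sigma}}-\overline{\Sigma}$ explicit. With $M\triangleq\PSI(\PSI^{\prime}\PSI)^{-1}\widetilde{\PSI}^{\prime}$ one has $V_{nT}=\frac{1}{n}M^{\prime}\overline{\Sigma}M$ and $\widehat{V}_{nT}=\frac{1}{n}M^{\prime}\widehat{\overline{\Sigma}}M$, so $\widehat{V}_{nT}-V_{nT}=\frac{1}{n}M^{\prime}(\widehat{\overline{\Sigma}}-\overline{\Sigma})M$. Bounding $\|E\|\le\lambda_{\min}^{-1}(V_{nT})\,\|\widehat{V}_{nT}-V_{nT}\|$ and using $\lambda_{\min}(V_{nT})\ge\frac{1}{n}\lambda_{\min}(\overline{\Sigma})\lambda_{\min}(M^{\prime}M)$ yields
\[
\|E\|\le\frac{\|M^{\prime}(\widehat{\overline{\Sigma}}-\overline{\Sigma})M\|}{\lambda_{\min}(\overline{\Sigma})\,\lambda_{\min}(M^{\prime}M)}.
\]
Since $M^{\prime}M=\widetilde{\PSI}(\PSI^{\prime}\PSI)^{-1}\widetilde{\PSI}^{\prime}$ and Assumption \ref{ASS2} (i) gives $\PSI^{\prime}\PSI/T\to Q_{P}$ with eigenvalues bounded away from $0$ and $\infty$, the eigenvalues of $M^{\prime}M$ are of the exact order $T^{-1}$ times those of $\widetilde{\PSI}\widetilde{\PSI}^{\prime}$; in particular the condition number of $M^{\prime}M$ is comparable to $\lambda_{\max}(\widetilde{\PSI}\widetilde{\PSI}^{\prime})/\lambda_{\min}(\widetilde{\PSI}\widetilde{\PSI}^{\prime})$, which is how that ratio enters the final rate.

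The core is the bound on the $R\times R$ matrix $M^{\prime}(\widehat{\overline{\Sigma}}-\overline{\Sigma})M$. Since $\widehat{U}_{i}=Y_{i}-\PSI\widehat{\beta}=\varepsilon_{i}+\rho$ with the common vector $\rho\triangleq f(\mathbf{X})-\widehat{f}_{P}$, expanding $\widehat{U}_{i}\widehat{U}_{i}^{\prime}$ gives
\[
M^{\prime}(\widehat{\overline{\Sigma}}-\overline{\Sigma})M=\frac{1}{n}\sum_{i=1}^{n}\left(M^{\prime}\varepsilon_{i}\varepsilon_{i}^{\prime}M-M^{\prime}\Sigma_{i}M\right)+M^{\prime}\bar{\varepsilon}\rho^{\prime}M+M^{\prime}\rho\bar{\varepsilon}^{\prime}M+M^{\prime}\rho\rho^{\prime}M,
\]
with $\bar{\varepsilon}=\frac{1}{n}\sum_{i}\varepsilon_{i}$. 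The first, ``variance'', term is a normalised sum of independent, mean-zero, non-identically distributed symmetric matrices; applying a matrix Bernstein/Rosenthal inequality \emph{after} projecting by $M^{\prime}$ (so that the ambient dimension is $R$, not $T$) produces the factor $(1\vee\ln R)$, the factor $\sqrt{\max_{i,t}\mathbb{E}\varepsilon_{it}^{4}}$ from the fourth moments intrinsic to $\varepsilon_{i}\varepsilon_{i}^{\prime}$, a factor $T$ arising from $\mathbb{E}\|\varepsilon_{i}\|^{2}=\mathrm{tr}\,\Sigma_{i}\le T\sqrt{\max_{i,t}\mathbb{E}\varepsilon_{it}^{4}}$ together with $\lambda_{\max}(M^{\prime}M)$, and the factor $n^{-1/2}$. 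The remaining three terms involve only $\rho$; as $\|M^{\prime}\rho\|^{2}\le\lambda_{\max}(M^{\prime}M)\,T\,\|\widehat{f}_{P}-f\|_{\infty}^{2}$ and $\bar{\varepsilon}=O_{\mathbb{P}}(n^{-1/2})$ componentwise, they are controlled by the consistency rate of Theorem \ref{Th - Main Result} (equivalently by $B_{nT}\to0$ from Theorem \ref{Th - Asymptotic Normality} (ii)) and are of smaller order. Collecting factors, dividing by $\lambda_{\min}(\overline{\Sigma})\lambda_{\min}(M^{\prime}M)$ and using the condition-number identity gives
\[
\|E\|=O_{\mathbb{P}}\!\left(\frac{\lambda_{\max}(\widetilde{\PSI}\widetilde{\PSI}^{\prime})}{\lambda_{\min}(\widetilde{\PSI}\widetilde{\PSI}^{\prime})}\,\frac{T(1\vee\ln R)\sqrt{\max_{i,t}\mathbb{E}\varepsilon_{it}^{4}}}{\sqrt{n}\,\lambda_{\min}(\overline{\Sigma})}\right),
\]
and multiplying by $\sqrt{R}$ as in the first step reproduces exactly the stated remainder. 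I expect the matrix-concentration step to be the main obstacle: one must use a Bernstein-type inequality for sums of independent, non-identically distributed random matrices, hold the dimensional cost at $\ln R$ by projecting before concentrating, and correctly track how the fourth moments, the trace factor $T$, and $\lambda_{\max}(M^{\prime}M)$ enter the variance proxy. A secondary difficulty is verifying that the three $\rho$-terms are genuinely negligible, which is precisely where the earlier consistency and bias results are indispensable.
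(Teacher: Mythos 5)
Your overall skeleton is the right one (perturb $V_{nT}$, use $\left\Vert \overline{W}_{nT}\right\Vert _{L_{2}}^{2}=O_{\mathbb{P}}\left(R\right)$, divide by $\sqrt{2R}$, and reduce everything to controlling $\widehat{\overline{\Sigma}}-\overline{\Sigma}$ sandwiched between $M=\PSI\left(\PSI^{\prime}\PSI\right)^{-1}\widetilde{\PSI}^{\prime}$), and your Neumann-series treatment of $\widehat{V}_{nT}^{-1}$ is a legitimate alternative to what the paper does. But there are two genuine gaps. First, your source for the $\left(1\vee\ln R\right)$ factor is wrong, and the step you propose in its place is not available: a matrix Bernstein/Rosenthal inequality for $\frac{1}{n}\sum_{i}\left(M^{\prime}\varepsilon_{i}\varepsilon_{i}^{\prime}M-M^{\prime}\Sigma_{i}M\right)$ requires bounded or sub-exponential summands, whereas the theorem only assumes $\max_{i,t}\mathbb{E}\varepsilon_{it}^{4}<\infty$. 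The paper never concentrates the matrix at all; it bounds $\lambda_{\max}\left(\frac{1}{n}\sum_{i}\varepsilon_{i}\varepsilon_{i}^{\prime}-\overline{\Sigma}\right)$ by the Frobenius norm and a plain second-moment computation, obtaining $O_{\mathbb{P}}\left(T\sqrt{\max_{i,t}\mathbb{E}\varepsilon_{it}^{4}/n}\right)$ with no logarithm. The $\ln R$ in the statement instead comes from the perturbation bound of Mathias for the matrix \emph{square root}, $\left\Vert \left[H+\Delta H\right]^{1/2}H^{-1/2}-I\right\Vert _{L_{2}}\lesssim\gamma_{R}\left\Vert H^{-1/2}\Delta H\cdot H^{-1/2}\right\Vert _{L_{2}}$ with $\gamma_{R}\asymp\ln R$, which the paper needs because it perturbs $A_{nT}=V_{nT}^{-1/2}$ rather than $V_{nT}^{-1}$. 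Since your route avoids square roots entirely, carried out correctly it would yield a bound without the $\ln R$ (which is consistent with, indeed stronger than, the statement), but as written your derivation of that factor rests on an inapplicable concentration inequality.

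Second, your treatment of the residual terms is not justified under the theorem's hypotheses and misses the key algebraic simplification. Writing $\widehat{U}_{i}=S_{\PSI}f\left(\mathbf{X}\right)+\varepsilon_{i}-G_{\PSI}\overline{\varepsilon}$ with $G_{\PSI}=\PSI\left(\PSI^{\prime}\PSI\right)^{-1}\PSI^{\prime}$ and $S_{\PSI}=I_{T}-G_{\PSI}$, one has $\PSI^{\prime}S_{\PSI}=0$, hence $\PSI^{\prime}\widehat{U}_{i}=\PSI^{\prime}\left(\varepsilon_{i}-\overline{\varepsilon}\right)$ exactly: the approximation-error part of your $\rho$ is annihilated by the projection, and the three cross/square terms collapse to $-M^{\prime}\overline{\varepsilon}\,\overline{\varepsilon}^{\prime}M$, which is bounded in expectation by $\frac{T}{n}\sqrt{\max_{i,t}\mathbb{E}\varepsilon_{it}^{4}}$ times $\lambda_{\max}\left(M^{\prime}M\right)$. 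Your alternative, bounding $\left\Vert M^{\prime}\rho\right\Vert ^{2}$ via $\left\Vert \hat{f}_{P}-f\right\Vert _{\infty}$ and declaring these terms of smaller order, invokes consistency and bias rates ($B_{nT}\rightarrow0$) that are not among the theorem's assumptions; without the exact cancellation the claim does not follow. Fixing these two points --- replacing the concentration step by the Frobenius second-moment bound, and using $\PSI^{\prime}S_{\PSI}=0$ to dispose of the residual terms --- turns your outline into a valid proof.
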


\begin{rem}
It is clearly possible to regularize the matrix $\widehat{\overline{\Sigma}}$
before replacing $\overline{\Sigma}$ in $\overline{W}_{nT}^{\prime}\overline{W}_{nT}$.
However, as our asymptotic normality results generally require $T\ll n$
and under this condition it is expected that $\widehat{\overline{\Sigma}}$
has full rank, the regularization should have a limited impact on
the performance of the tests. 
\end{rem}

\begin{example}
{[}Pointwise Constraints; Example \ref{Ex - Pointwise Constraints-1}
continued{]}\label{Ex - Pointwise Constraints-3} Suppose that $\lambda_{\min}\left(\overline{\Sigma}\right)$
is bounded from below and $\max_{i,t}\mathbb{E}\varepsilon_{it}^{4}$
from above uniformly in $n$ and $T$. Since $R=1$, $\lambda_{\max}\left(\widetilde{\PSI}\widetilde{\PSI}^{\prime}\right)=\lambda_{\min}\left(\widetilde{\PSI}\widetilde{\PSI}^{\prime}\right)$
and we have: 
\[
n\frac{\left(\widetilde{\PSI}\widehat{\beta}-\Gamma_{0}\right)^{2}}{\widetilde{\PSI}\left(\PSI^{\prime}\PSI\right)^{-1}\PSI^{\prime}\widehat{\overline{\Sigma}}\PSI\left(\PSI^{\prime}\PSI\right)^{-1}\widetilde{\PSI}^{\prime}}=n\frac{\left(\widetilde{\PSI}\widehat{\beta}-\Gamma_{0}\right)^{2}}{\widetilde{\PSI}\left(\PSI^{\prime}\PSI\right)^{-1}\PSI^{\prime}\overline{\Sigma}\PSI\left(\PSI^{\prime}\PSI\right)^{-1}\widetilde{\PSI}^{\prime}}+O_{\mathbb{P}}\left(\frac{T}{\sqrt{n}}\right).
\]
Since the first term on the right hand side is $O_{\mathbb{P}}\left(1\right)$,
whenever $T/\sqrt{n}\rightarrow0$ the replacement has no asymptotic
effect on the asymptotic distribution of the Wald test. Note that
$T/\sqrt{n}\rightarrow0$ is automatically respected whenever the
condition for asymptotic normality in Example \ref{Ex - Pointwise Constraints-2},
i.e. $\frac{T^{3}}{n}\rightarrow0$, is satisfied.
\begin{example}
{[}Cumulative Prospect Theory; Example \ref{Ex - CPT - 1} continued{]}\label{Ex - CPT - 5}
If $\max_{i,t}\mathbb{E}\left|\varepsilon_{it}\right|^{4}$ ($\lambda_{\min}\left(\overline{\Sigma}\right)$)
is bounded uniformly from above (below), the condition becomes:
\[
\frac{\lambda_{\max}\left(\widetilde{\PSI}\widetilde{\PSI}^{\prime}\right)}{\lambda_{\min}\left(\widetilde{\PSI}\widetilde{\PSI}^{\prime}\right)}\frac{T\ln\left(JK\right)}{\sqrt{n}}\rightarrow0.
\]
Here $\frac{\lambda_{\max}\left(\widetilde{\PSI}\widetilde{\PSI}^{\prime}\right)}{\lambda_{\min}\left(\widetilde{\PSI}\widetilde{\PSI}^{\prime}\right)}$
is the condition number of the Gram matrix $\widetilde{\PSI}$. 

Our derivations contained in the Appendix imply that

\[
\frac{\lambda_{\max}\left(\widetilde{\PSI}\widetilde{\PSI}^{\prime}\right)}{\lambda_{\min}\left(\widetilde{\PSI}\widetilde{\PSI}^{\prime}\right)}\leq J^{4}K^{4}\left(K^{3}+J^{3}\right)72{}^{J+K}.
\]
\begin{example}
{[}Stevens' Model; Example \ref{Ex - Stevens Model - 1} continued{]}\label{Ex - Stevens Model - 6}
Suppose that $\lambda_{\min}\left(\overline{\Sigma}\right)$ is bounded
from below uniformly in $n$ and $T$ and $\max_{i,t}\mathbb{E}\varepsilon_{it}^{4}$
is bounded from above. In this case $\widetilde{\PSI}$ is the matrix
described in Example \ref{Ex - Stevens Model - 4} and $\widetilde{\PSI}\widetilde{\PSI}^{\prime}=\mathrm{diag}\left(1,2,1,\dots1\right)$
so that $\lambda_{\max}\left(\widetilde{\PSI}\widetilde{\PSI}^{\prime}\right)=2$
and $\lambda_{\min}\left(\widetilde{\PSI}\widetilde{\PSI}^{\prime}\right)=1$.
If $\frac{T\ln P}{\sqrt{n}}\rightarrow0$: 
\[
\frac{\widehat{\overline{W}_{nT}^{\prime}\overline{W}_{nT}}-R}{\sqrt{2R}}=\frac{\overline{W}_{nT}^{\prime}\overline{W}_{nT}-R}{\sqrt{2R}}+O_{\mathbb{P}}\left(\frac{T\sqrt{P}\ln P}{\sqrt{n}}\right).
\]
\end{example}

\end{example}

\end{example}

In some cases, it can be of interest to estimate the conditional variance
as a function of observable characteristics of the stimuli and/or
the individuals \citep[see, e.g.,][]{butler2007}. We will investigate
this alternative procedure rather informally. Suppose that the following
model holds:
\[
\mathbb{V}\left(\varepsilon_{it}\right)=\mathbb{E}\varepsilon_{it}^{2}=h\left(\tilde{X}_{it}\right),\quad\quad i=1,\dots,n,\quad t=1,\dots,T,
\]
where the vector $\tilde{X}_{it}$ shares some of the features of
$X_{t}$ and it may also coincides with it. The latter implies, among
other things, that the variance is the same across individuals but
different across questions. 
\begin{example}
{[}Remark \ref{Ex - Heterogeneity in Variances} continued{]} In this
case: 
\begin{align*}
h\left(\tilde{X}_{it}\right)= & \mathbb{V}\left(\varepsilon_{it}\right)=\mathbb{V}\left(f_{i}\left(X_{t}\right)-f\left(X_{t}\right)+\eta_{it}\right)\\
= & \mathbb{E}\left(f\left(X_{t},\omega\right)-f\left(X_{t}\right)\right)^{2}+\mathbb{V}\left(\eta_{it}\right).
\end{align*}
We can write: 
\[
\varepsilon_{it}^{2}=h\left(\tilde{X}_{it}\right)+\nu_{it},\quad\quad i=1,\dots,n,\quad t=1,\dots,T,
\]
where $\nu_{it}=\varepsilon_{it}^{2}-\mathbb{E}\left(\varepsilon_{it}^{2}\right)$.
Unfortunately, the error $\varepsilon_{it}$ is not available but
can be replaced by the residual $\widehat{U}_{it}$. Therefore, we
can use the equation
\end{example}

\[
\widehat{U}_{it}^{2}=h\left(\widetilde{X}_{it}\right)+\nu_{it},\quad\quad i=1,\dots,n,\quad t=1,\dots,T.
\]
If the objective is to estimate the entire structure of the matrix
$\overline{\Sigma}$ as a function $\tilde{X_{it}}$, we are left
with the problem of estimating covariances. A potential solution is
to suppose that errors are equicorrelated, in which case the correlations
can be estimated from the standardized residuals. We do not pursue
this topic here.

\section{Applications\label{Sect - Applications}}

We now apply our estimation procedure to two simple examples introduced
above in Economics and Psychology. In both examples, we estimate the
unknown function nonparametrically, and then we use the Wald statistics
to test meaningful restrictions either on the function itself or on
its derivatives.

The function is approximated using tensor products of Legendre polynomials,
which make the estimation and testing procedures straightforward and
intuitive. We denote as $L_{j}\left(x\right)$, the Legendre polynomial
of order $j$, with $j=0,1,2,\dots$. The order of the polynomial
is chosen by cross-validation \citep{hansen2012}.

\subsection{Cumulative Prospect Theory}

We now use the results in Examples \ref{Ex - CPT - 1}, \ref{Ex - CPT - 2},
\ref{Ex - CPT - 3}, \ref{Ex - CPT - 4} and \ref{Ex - CPT - 5} to
analyze the data of an economic experiment. We employ a classical
experimental design to elicit the preference of an individual in choice
under uncertainty. The elicitation procedure is known as the probability
equivalence method and is dual to the certainty equivalence method.
It works as follows. In a sequence of pairwise comparisons $t$ for
$t=1,\dots,T$, an individual is asked to state the probability $p_{t}$
that would make the individual indifferent between receiving the sure
amount of money $CE_{t}$ or a lottery giving the monetary prize $X_{t}$
with probability $p_{t}$ and $0$ otherwise.

Thus, in the probability equivalence method $CE_{t}$ and $X_{t}$
are the stimuli and $p_{t}$ the response, whereas in the certainty
equivalence method $X_{t}$ and $p_{t}$ are the stimuli and $CE_{t}$
the response. We also remark that, though both methods are in principle
capable to elicit the preferences of the individual, since their early
applications it is known that both methods can lead to various types
of inconsistencies. For this reasons, various proposals of revising
the basic elicitation procedures have been made in the literature
in attempt to control for the inconsistencies (references and discussion
in, e.g., \citealp{hershey1985probability,wakker1996eliciting,abdellaoui2000parameter}).

Both elicitation procedures are nevertheless very simple and are useful
for the purpose of illustrating our nonparametric method of estimation
and inference. We in particular conducted the probability equivalence
experiment with 98 participants ($n=98$). Each of them gave the responses
$p_{t}$'s to 100 questions ($T=100$). The 100 questions employed
monetary prizes $X_{t}$ distributed uniformly between a minimum of
15 Euro and a maximum of 66 Euro, with the sure amount $CE_{t}$ varying
between a minimum of 4 Euro and a maximum of 57 Euro. The experiment
was run individually and was computerized: questions were presented
sequentially to each individual on a computer screen, with the order
of the questions randomized independently for each participant. Each
participant had the opportunity to reconsider the decision to each
individual questions several times before confirming it; but once
confirmed, the computer moved the participant to another question
and previous choices couldn't be any longer revised or accessed. At
the end of each individual experiment one question was randomly selected
for each participant and each participant was paid according to the
choice he or she made in the selected question. In particular, participants
were incentivized to give correct answers by the use of the standard
\citet{becker1964measuring} payment method.

A summary of the distribution of $X_{t}$, $CE_{t}$, and the sample
mean of $p_{t}$ across individuals is reported in the Table \ref{Tab - SumStat CPT}.

\begin{table}[ht]
\centering{}%
\begin{tabular}{lccc}
\hline 
 & $X$ & $CE$ & Av. Prob. \tabularnewline
\hline 
Mean  & 45.81  & 23.82  & 0.63 \tabularnewline
St.Dev  & 14.10  & 14.09  & 0.19 \tabularnewline
Min  & 13.00  & 4.00  & 0.26 \tabularnewline
Max  & 66.00  & 57.00  & 0.92 \tabularnewline
\hline 
\end{tabular}\caption{Summary statistics}
\label{Tab - SumStat CPT}
\end{table}

With these data, we first estimate the following fully nonparametric
regression model:
\[
p_{it}=f\left(\ln\left(CE_{t}\right),\ln\left(X_{t}\right)\right)+\varepsilon_{it},
\]
using a tensor product of polynomials of order $2$ in $\ln\left(CE_{t}\right)$
and polynomials of order $3$ in $\ln\left(X_{t}\right)$. Figure
\ref{fig:yaarifct} depicts the nonparametric estimator of the regression
function $g$.

\begin{figure}[!h] 
\centering 
\includegraphics[scale=0.6,trim = {0 200 0 200}]{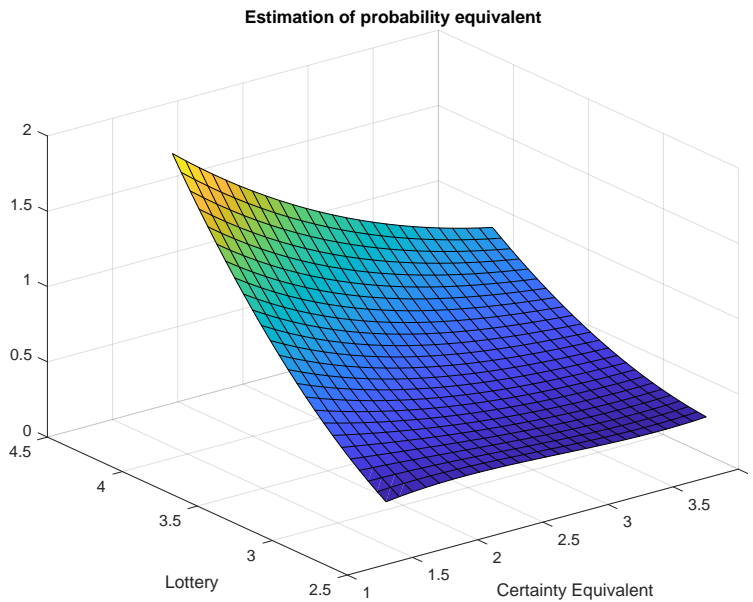} 
\caption{Plot of nonparametric regression for probability equivalent} 
\label{fig:yaarifct} 
\end{figure} 


To implement the testing procedure, we slightly undersmooth compared
to the estimation above, and take a cubic polynomial in $\ln CE_{t}$
and a quartic polynomial in $\ln X_{t}$ (see Theorem \ref{Th - Asymptotic Normality}).
Therefore, we have a total of $R=18$ restrictions, but only 16 of
them are linearly independent. As $K=3$ and $J=4$, these restrictions
are: 
\[
\beta_{1,3}=\beta_{2,2}=\beta_{2,3}=\beta_{3,1}=\beta_{3,2}=\beta_{3,3}=\beta_{4,0}=\beta_{4,1}=\beta_{4,2}=\beta_{4,3}=0
\]
and:
\[
\beta_{0,1}+\beta_{1,0}=\beta_{0,2}-\beta_{2,0}=\beta_{0,3}+\beta_{3,0}=\beta_{1,2}+\beta_{2,1}=3\beta_{0,2}+\beta_{1,1}=5\beta_{0,3}+\beta_{1,2}=0.
\]
The value of the test statistic is $557.865$, which leads to a rejection
of the null hypothesis, with a $p-$value strictly lower than $0.01$.
This implies that the usual power specification of the utility function
may not be justified in our setting, and such an assumption could
lead to an inconsistent estimation of the probability weighting function.

\subsection{Stevens' Model}

Now we consider an experimental application related to the model in
Examples \ref{Ex - Stevens Model - 1}, \ref{Ex - Stevens Model - 2},
\ref{Ex - Stevens Model - 3}, \ref{Ex - Stevens Model - 4}, \ref{Ex - Stevens Model - 5}
and \ref{Ex - Stevens Model - 6}. The only difference with respect
to the examples is that we use Legendre polynomials, but this has
no impact on the conditions. In the experiment, $n=69$ individuals
were asked to give their estimates of ratios of distances of pairs
of Italian cities from a reference city. Participants were undergraduate
students in economics from the University of Insubria in Italy. We
presented to the subjects 10 pairs of Italian cities and we asked
them to estimate the ratio of their distances with respect to Milan:
the $T=10$ pairs were given by all the possible combinations out
of the five cities Turin, Venice, Rome, Naples and Palermo. The range
of the stimuli goes from 124 to 885 km and the range of the real distance
ratios from 2 to 7.137. We asked participants, first, to state for
each comparison which of the two items they thought was larger, and
then to quantify the relative dominance of the two items, i.e. how
many times the city that they considered more distant from Milan was,
according to them, actually more distant from Milan than the city
they considered less distant. All the experiments were performed in
a random order. The data were already analyzed, with different aims
and techniques, in \citet{Bernasconi-Choirat-Seri-MS-2010,Bernasconi-Choirat-Seri-JMP-2011,Bernasconi-Choirat-Seri-EJOR-2014}.

A summary of the magnitude of the stimuli and of the ratio reported
by the individual is given in Table \ref{sumstat_stevens}.

\begin{table}[ht]
\centering{}%
\begin{tabular}{lccc}
\hline 
 & $\ln X_{1}$  & $\ln X_{2}$  & $\ln p$ \tabularnewline
\hline 
Mean  & 6.45  & 5.47  & 1.20 \tabularnewline
St.Dev  & 0.40  & 0.65  & 0.42 \tabularnewline
Min  & 5.51  & 4.82  & 0.64 \tabularnewline
Max  & 6.79  & 6.49  & 1.85 \tabularnewline
\hline 
\end{tabular}\caption{Summary statistics}
\label{sumstat_stevens} 
\end{table}

A nonparametric series estimator of the function $f$ of Example \ref{Ex - Stevens Model - 3}
is reported in Figure \ref{fig:fct2}. We take quadratic polynomials
in both $\ln X_{1t}$ and $\ln X_{2t}$.

\begin{figure}[!h]
\centering
\includegraphics[scale=0.6,trim = {0 200 0 200}]{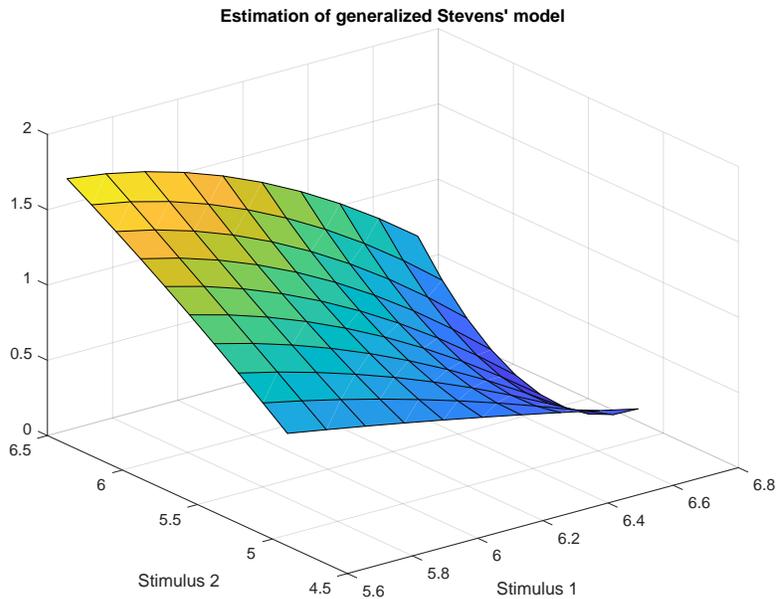}
\caption{Plot of nonparametric regression for generalized Stevens' model}
\label{fig:fct2}
\end{figure}

We test Stevens' model (see Examples \ref{Ex - Stevens Model - 1}
and \ref{Ex - Stevens Model - 2}), in which $f$ is a linear function
of the difference of the log between the two stimuli. That is, our
null hypothesis is:
\[
H_{0}:f\left(\ln X_{1},\ln X_{2}\right)=\kappa\cdot\left(\ln X_{1}-\ln X_{2}\right),
\]
for some real $\kappa$. The test for this hypothesis has been considered
in Examples \ref{Ex - Stevens Model - 4}, \ref{Ex - Stevens Model - 5}
and \ref{Ex - Stevens Model - 6}. Under the null hypothesis, the
intercept of the model is equal to zero; the two slopes sum up to
zero; and all other higher-order coefficients are equal to zero. We
therefore tests $R=8$ restrictions on the vector of estimated parameters.
We remark that the number of individuals $n$ is much larger than
the number of questions $T$, thus providing some support for the
conditions in Example \ref{Ex - Stevens Model - 5}.

In this case, we do not undersmooth to implement our testing procedure,
since the quadratic tensor product basis already exhausts the degrees
of freedom in our model. The value of the test statistic is $1781.218$,
which leads to a rejection of the null hypothesis, with a $p-$value
strictly lower than $0.01$.

\section{Conclusions}

We present in this paper a set of statistical tools useful for the
analysis of some experimental data when the researcher aims at estimating
and testing the average individual response function nonparametrically.
In lack of a theory of errors in either economic or psychophysical
experiments, we allow our regression errors to be correlated within
individual tasks, and to feature heteroskedasticity between different
individuals. In particular, and differently from a large body of literature
on nonparametric regressions, we do not assume that variance of the
error term is uniformly bounded away from infinity. This approach
makes the tools we suggest robust to several structures of the error
covariance matrix, for which theory often provides scarce or contradictory
information. We finally point out that our results can be considered
a worst-case scenario. That is, we only consider the case when the
same tasks are submitted to all individuals. We conjecture that better
asymptotic properties could be obtained if one allows for different
individuals to perform different tasks. We defer the analysis of such
a case to further research.

\newpage{}

\bibliographystyle{agsm}
\bibliography{SMEE}

\newpage{}

\section{Appendix}

\subsection{Rates of convergence for specific choices of the sieve basis in Theorem
\ref{Th - Main Result}}

\subsection*{Parametric estimation}

We briefly consider the case of parametric estimation. In this case,
we have $N_{P}=0$, and $P$ and $\zeta_{s}\left(P\right)$ fixed.
It can be interesting to remark that in order to have consistency
in this context Assumption \ref{ASS3} (ii) is not even necessary.
The convergence rate is: 
\[
\left|\hat{f}_{P}-f\right|_{s}=O_{\mathbb{P}}\left(\left(\frac{\lambda_{nT}}{nT}\right)^{1/2}\right).
\]
It is further possible to show that this is near to the correct rate
of convergence. Indeed, we have: 
\begin{align*}
\mathbb{V}\left(\hat{f}_{P}\left(x\right)-f\left(x\right)\right)= & \mathbb{V}\left(\hat{f}_{P}\left(x\right)-\mathbb{E}\hat{f}_{P}\left(x\right)\right)=\mathbb{V}\left(\psi_{P}\left(x\right)\left(\widehat{\beta}-\beta_{0}\right)\right)\\
= & \frac{1}{\left(nT\right)^{2}}\psi_{P}\left(x\right)\left(\frac{\PSI^{\prime}\PSI}{T}\right)^{-1}\left[\PSI^{\prime}\left(\sum_{i=1}^{n}\Sigma_{i}\right)\PSI\right]\left(\frac{\PSI^{\prime}\PSI}{T}\right)^{-1}\psi_{P}^{\prime}\left(x\right)\\
\ge & \frac{1}{nT}\psi_{P}\left(x\right)\left(\frac{\PSI^{\prime}\PSI}{T}\right)^{-1}\psi_{P}^{\prime}\left(x\right)\lambda_{\min}\left(\overline{\Sigma}\right)
\end{align*}
and, using the assumptions stated above, the only difference is the
replacement of $\lambda_{\min}\left(\overline{\Sigma}\right)$ with
$\lambda_{nT}$.

\subsection*{Fully nonparametric models - Power series\label{subsec:power-series-1}}

Let us look at what happens for power series. In this case, the order
of the polynomial $J$ is linked to $P$ through the relation $P=\frac{\left(d+J\right)!}{d!J!}\asymp J^{d}$,
where $d$ is the number of regressors and the asymptotic equivalence
holds for $J\rightarrow\infty$: this means that it would be possible
to obtain bounds in $J$ from bounds in $P$. In this case \citep[see][p. 157]{newey1997},
$\zeta_{s}\left(P\right)=O\left(P^{1+2s}\right)$, while the two best
known results for $N_{P}$ are $N_{P}=O\left(P^{-c/d}\right)$ for
$s=0$, and $N_{P}=O\left(P^{-\left(c-s\right)}\right)$ for $d=1$,
where $c$ is the number of continuous derivatives of $f$.

When $s=0$, the bound becomes: 
\[
\left|\hat{f}_{P}-f\right|_{0}=O_{\mathbb{P}}\left(P\left(\frac{\left(P\lambda_{nT}\right)\wedge\left(\tau_{nT}\right)}{nT}\right)^{1/2}+P^{1-c/d}\right).
\]
Suppose first that $P=o\left(\frac{\tau_{nT}}{\lambda_{nT}}\right)$.
Then the bound is $O_{\mathbb{P}}\left(P^{\frac{3}{2}}\left(\frac{\lambda_{nT}}{nT}\right)^{1/2}+P^{1-c/d}\right)$,
and in order for this to converge to $0$, we need at least $c>d$
and $P=o\left(\left(\frac{nT}{\lambda_{nT}}\right)^{\frac{1}{3}}\right)$.
The best convergence rate can be obtained when $P^{\star}\asymp\left(\frac{nT}{\lambda_{nT}}\right)^{\frac{d}{d+2c}}$.
In this case the bound is: 
\[
\left|\hat{f}_{P^{\star}}-f\right|_{0}=O_{\mathbb{P}}\left(\left(\frac{\lambda_{nT}}{nT}\right)^{\frac{c-d}{d+2c}}\right).
\]
Suppose now that $\frac{\tau_{nT}}{\lambda_{nT}}=o\left(P\right)$.
Then the bound becomes $O_{\mathbb{P}}\left(P\left(\frac{\tau_{nT}}{nT}\right)^{1/2}+P^{1-c/d}\right)$,
and we need at least $c>d$ and $P=o\left(\left(\frac{nT}{\tau_{nT}}\right)^{1/2}\right)$
to get consistency. The best convergence rate is obtained when $P^{\star}\asymp\left(\frac{nT}{\tau_{nT}}\right)^{\frac{d}{2c}}$,
yielding: 
\[
\left|\hat{f}_{P^{\star}}-f\right|_{0}=O_{\mathbb{P}}\left(\left(\frac{\tau_{nT}}{nT}\right)^{\frac{c-d}{2c}}\right).
\]
As concerns $N_{P}$ for analytic functions, we consider only the
case $d=1$ and $s=0$ (see Example \ref{Ex - CPT - 2} for the case
$d>1$): it is well known that $N_{P}=O\left(\rho^{P}\right)$, where
$\rho$ can be explicitly characterized. Moreover, supposing that
$\frac{\tau_{nT}}{nT}=o\left(1\right)$, the rate of convergence for
the optimal $P^{\star}$ is: 
\[
\left|\hat{f}_{P^{\star}}-f\right|_{0}=\begin{cases}
O_{\mathbb{P}}\left(\left(\frac{\lambda_{nT}}{nT}\right)^{1/2}\sqrt{\ln\left(\frac{nT}{\lambda_{nT}}\right)}\right) & \textrm{when }P=o\left(\frac{\tau_{nT}}{\lambda_{nT}}\right)\\
O_{\mathbb{P}}\left(\left(\frac{\tau_{nT}}{nT}\right)^{1/2}\ln\left(\frac{nT}{\tau_{nT}}\right)\right) & \textrm{when }\frac{\tau_{nT}}{\lambda_{nT}}=o\left(P\right)
\end{cases}
\]
where the first bound holds for $P^{\star}=-\frac{\ln\ln\left(\frac{2\ln\left(1/\rho\right)nT}{\lambda_{nT}}\right)+\ln\left(\frac{2\ln\left(1/\rho\right)nT}{\lambda_{nT}}\right)}{2\ln\rho}\left(1+o\left(1\right)\right)$
and the second for $P^{\star}=-\frac{\ln\left(\frac{nT}{\tau_{nT}}\right)}{2\ln\rho}\left(1+o\left(1\right)\right)$.
These are close to the rates of convergence for the parametric case.

We notice that in the case of power series regression, the quantity
$\left\Vert \left(\frac{\PSI^{\prime}\PSI}{T}\right)-Q_{P}\right\Vert _{F}$
can be bounded in a more efficient way than the one after Assumption
\ref{ASS3}: this also provides some hints about how an experiment
can be designed in order to efficiently approximate $Q_{P}$. Suppose
that we can take the compact space $\mathcal{X}$ as the unit hypercube
$\left[0,1\right]^{d}$ and the asymptotic design measure $\tilde{\mathbb{P}}$
(see Remark \ref{Rm - Assumption 5}) as the uniform probability measure
on $\left[0,1\right]^{d}$. Then we can use the results of \citet{klinger1997}
on the polynomial discrepancy of sequences. In order to do so, we
write as $X_{t,k}$ the $k$-th element of $X_{t}$ and as $\psi_{j,P}\left(X_{t}\right)$
the $j$-th element of $\psi_{P}\left(X_{t}\right)$. Therefore, every
element $\psi_{j,P}\left(X_{t}\right)$ can be written as $\psi_{j,P}\left(X_{t}\right)=X_{t}^{\lambda}\triangleq\prod_{k=1}^{d}\left(X_{t,k}\right)^{\lambda_{k}}$
for a certain multi-index $\lambda=\left(\lambda_{1},\dots,\lambda_{d}\right)$
of non-negative integers. Using the definition of the Frobenius norm,
we have: 
\begin{align*}
\left\Vert \left(\frac{\PSI^{\prime}\PSI}{T}\right)-Q_{P}\right\Vert _{F}^{2}= & \sum_{j=1}^{P}\sum_{j^{\prime}=1}^{P}\left\{ \frac{1}{T}\sum_{t=1}^{T}\psi_{j,P}\left(X_{t}\right)\psi_{j^{\prime},P}\left(X_{t}\right)-\tilde{\mathbb{E}}X_{t}^{\lambda}X_{t}^{\lambda^{\prime}}\right\} ^{2}\\
\le & P^{2}\sup_{\lambda,\lambda^{\prime}\in\mathbb{N}^{K}}\left\{ \frac{1}{T}\sum_{t=1}^{T}X_{t}^{\lambda}X_{t}^{\lambda^{\prime}}-\tilde{\mathbb{E}}X_{t}^{\lambda}X_{t}^{\lambda^{\prime}}\right\} ^{2}\\
= & P^{2}\sup_{\lambda+\lambda^{\prime}\in\mathbb{N}^{K}}\left\{ \frac{1}{T}\sum_{t=1}^{T}X_{t}^{\lambda+\lambda^{\prime}}-\tilde{\mathbb{E}}X_{t}^{\lambda+\lambda^{\prime}}\right\} ^{2}\\
\le & P^{2}\left[D\left(\mathcal{P}_{T}\right)\right]^{2}\le P^{2}4^{d}\left[D^{\star}\left(\mathcal{P}_{T}\right)\right]^{2}
\end{align*}
where $D\left(\mathcal{P}_{T}\right)$ and $D^{\star}\left(\mathcal{P}_{T}\right)$
are the unanchored and the star discrepancies of the sample of points.
Therefore: 
\[
\left\Vert \left(\frac{\PSI^{\prime}\PSI}{T}\right)-Q_{P}\right\Vert _{F}\le P2^{d}D^{\star}\left(\mathcal{P}_{T}\right).
\]
Remark that if $\mathcal{P}_{T}$ is a low-discrepancy sequence, the
discrepancy $D^{\star}\left(\mathcal{P}_{T}\right)$ can be made to
converge to $0$ as fast as $\frac{\left(\log T\right)^{d-1}}{T}$,
so that $\left\Vert \left(\frac{\PSI^{\prime}\PSI}{T}\right)-Q_{P}\right\Vert _{F}=O\left(\frac{P(\log T)^{d-1}}{T}\right)$.
On the other hand, random sampling provides a rate of $\left\Vert \frac{\PSI^{\prime}\PSI}{T}-Q_{P}\right\Vert _{F}=O_{\mathbb{P}}\left(\frac{P^{3/2}}{T^{1/2}}\right)$,
if one considers the Frobenius norm \citep[see][pp. 161-162]{newey1997},
or $\left\Vert \left(\frac{\PSI^{\prime}\PSI}{T}\right)-Q_{P}\right\Vert _{L_{2}}=O_{\mathbb{P}}\left(\frac{P\sqrt{\log T}}{T^{1/2}}\right)$,
where $\Vert\cdot\Vert_{L_{2}}$ denotes the spectral norm \citep[see,][Theorem 4.6]{belloni2014}.
Our bound thus improves over existing results, at least in this particular
case.

\subsection*{Fully nonparametric models - Regression splines}

In the case of regression splines \citep[see][p. 160]{newey1997},
$\zeta_{s}\left(P\right)=O\left(P^{\frac{1}{2}+s}\right)$, while
the only two known results for $N_{P}$ are $N_{P}=O\left(P^{-c/d}\right)$
for $s=0$, and $N_{P}=O\left(P^{-\left(c-s\right)}\right)$ for $d=1$,
where $c$ is the number of continuous derivatives of $f$.

We focus on the case $s=0$. Suppose first that $P=o\left(\frac{\tau_{nT}}{\lambda_{nT}}\right)$.
In this case, the bound becomes: 
\[
\left|\hat{f}_{P}-f\right|_{0}=O_{\mathbb{P}}\left(P\left(\frac{\lambda_{nT}}{nT}\right)^{1/2}+P^{\frac{1}{2}-\frac{c}{d}}\right)
\]
In order for this to converge to $0$, we need at least $2c>d$ and
$P=o\left(\left(\frac{nT}{\lambda_{nT}}\right)^{1/2}\right)$. The
best convergence rate can be obtained when $P^{\star}\asymp\left(\frac{nT}{\lambda_{nT}}\right)^{\frac{d}{d+2c}}$.
In this case the bound is: 
\[
\left|\hat{f}_{P^{\star}}-f\right|_{0}=O_{\mathbb{P}}\left(\left(\frac{\lambda_{nT}}{nT}\right)^{\frac{2c-d}{2d+4c}}\right).
\]
Suppose now that $\frac{\tau_{nT}}{\lambda_{nT}}=o\left(P\right)$.
The bound becomes: 
\[
\left|\hat{f}_{P}-f\right|_{0}=O_{\mathbb{P}}\left(P^{\frac{1}{2}}\left(\frac{\tau_{nT}}{nT}\right)^{\frac{1}{2}}+P^{\frac{1}{2}-\frac{c}{d}}\right).
\]
In this case we need $2c>d$ and $P=o\left(\frac{nT}{\tau_{nT}}\right)$.
The best convergence rate can be obtained when $P^{\star}\asymp\left(\frac{\tau_{nT}}{nT}\right)^{-\frac{d}{2c}}$.
In this case, the bound is: 
\[
\left|\hat{f}_{P^{\star}}-f\right|_{0}=O_{\mathbb{P}}\left(\left(\frac{\tau_{nT}}{nT}\right)^{\frac{2c-d}{4c}}\right).
\]

\subsection{Proof of Example \ref{Ex - CPT - 5}}
\begin{example}
In order to get an upper bound on this number, we first provide an
upper bound on the largest eigenvalue $\lambda_{\max}\left(\widetilde{\PSI}\widetilde{\PSI}^{\prime}\right)$.
Recall from Example \ref{Ex - CPT - 3}that $\widetilde{\PSI}=S_{J}\oplus S_{K}$.
Thus
\begin{align*}
\left\Vert \widetilde{\PSI}\right\Vert _{F}^{2} & =\sum_{k=0}^{K-1}\left\lfloor \frac{K+1-k}{2}\right\rfloor +\sum_{j=0}^{J-1}\left\lfloor \frac{J+1-j}{2}\right\rfloor +\sum_{k=0}^{K-1}\sum_{j=0}^{J-1}\left\{ \left(2k+1\right)^{2}\left\lfloor \frac{K+1-k}{2}\right\rfloor +\left(2j+1\right)^{2}\left\lfloor \frac{J+1-j}{2}\right\rfloor \right\} \\
\le & \sum_{k=0}^{K-1}\left(\frac{K+1-k}{2}\right)+\sum_{j=0}^{J-1}\left(\frac{J+1-j}{2}\right)+\sum_{k=0}^{K-1}\sum_{j=0}^{J-1}\left\{ \left(2k+1\right)^{2}\left(\frac{K+1-k}{2}\right)+\left(2j+1\right)^{2}\left(\frac{J+1-j}{2}\right)\right\} \\
= & \frac{1}{4}\left(K^{2}+3K\right)+\frac{1}{4}\left(J^{2}+3J\right)+\frac{1}{12}JK\left(2K^{3}+12K^{2}+K-3\right)+\frac{1}{12}JK\left(2J^{3}+12J^{2}+J-3\right)\\
= & O\left(JK\left(K^{3}+J^{3}\right)\right).
\end{align*}
Let $\psi_{J}^{\mathrm{Leg}}\left(x_{1}\right)$ ($\psi_{J}^{\mathrm{pow}}\left(x_{1}\right)$)
be the $(J+1)\times1$vector of Legendre polynomials (power series)
evaluated at $x_{1}$, that can be written as $\psi_{J}^{\mathrm{Leg}}\left(x_{1}\right)=\psi_{J}^{\mathrm{pow}}\left(x_{1}\right)A_{J}$.
The tensor product of the Legendre polynomials can be thus written
as 
\[
\psi_{J}^{\mathrm{Leg}}\left(x_{1}\right)\otimes\psi_{K}^{\mathrm{Leg}}\left(x_{2}\right)=\left[\psi_{J}^{\mathrm{pow}}\left(x_{1}\right)\otimes\psi_{K}^{\mathrm{pow}}\left(x_{2}\right)\right]\left(A_{J}\otimes A_{K}\right)
\]
and
\[
g_{P}\left(x_{1},x_{2}\right)=\left[\psi_{J}^{\mathrm{Leg}}\left(x_{1}\right)\otimes\psi_{K}^{\mathrm{Leg}}\left(x_{2}\right)\right]\beta=\left[\psi_{J}^{\mathrm{pow}}\left(x_{1}\right)\otimes\psi_{K}^{\mathrm{pow}}\left(x_{2}\right)\right]\left(A_{J}\otimes A_{K}\right)\beta.
\]
Taking the derivative with respect to $x_{1}$ we get $\frac{\partial\psi_{J}^{\mathrm{pow}}\left(x_{1}\right)}{\partial x_{1}}=\psi_{J}^{\mathrm{pow}}\left(x_{1}\right)B_{J}$,
with 

\[
\underbrace{B_{J}}_{\left(J+1\right)\times\left(J+1\right)}=\begin{bmatrix}\begin{array}{ccccc}
0 & 1 & \dots & \dots & 0\\
0 & 0 & 2 & \dots & \vdots\\
\vdots & 0 & 0 & \ddots & 0\\
\vdots & \dots & \ddots & \ddots & J\\
0 & \dots & \dots & 0 & 0
\end{array}\end{bmatrix},
\]
 a super-diagonal matrix. We thus have

\begin{align*}
\frac{\partial f_{J,K}\left(x_{1},x_{2}\right)}{\partial x_{1}} & =\frac{\partial}{\partial x_{1}}\left[\psi_{J}^{\mathrm{pow}}\left(x_{1}\right)\otimes\psi_{K}^{\mathrm{pow}}\left(x_{2}\right)\right]\left(A_{J}\otimes A_{K}\right)\beta\\
 & =\left[\psi_{J}^{\mathrm{pow}}\left(x_{1}\right)\otimes\psi_{K}^{\mathrm{pow}}\left(x_{2}\right)\right]\left(B_{J}\otimes I_{K}\right)\left(A_{J}\otimes A_{K}\right)\beta\\
 & =\left[\psi_{J}^{\mathrm{Leg}}\left(x_{1}\right)\otimes\psi_{K}^{\mathrm{Leg}}\left(x_{2}\right)\right]\left(A_{J}^{-1}\otimes A_{K}^{-1}\right)\left(B_{J}\otimes I_{K}\right)\left(A_{J}\otimes A_{K}\right)\beta,
\end{align*}
where, to avoid notational cluttering, we let $I_{J}$ and $I_{K}$
be the identity matrices of dimension $J+1$ and $K+1$, respectively.
From this
\[
\left(\frac{\partial}{\partial x_{1}}+\frac{\partial}{\partial x_{2}}\right)f_{J,K}\left(x_{1},x_{2}\right)=\left[\psi_{J}^{\mathrm{Leg}}\left(x_{1}\right)\otimes\psi_{K}^{\mathrm{Leg}}\left(x_{2}\right)\right]\left(A_{J}^{-1}\otimes A_{K}^{-1}\right)\left(B_{J}\otimes I_{K}+I_{J}\otimes B_{K}\right)\left(A_{J}\otimes A_{K}\right)\beta
\]
and:
\[
\widetilde{\PSI}=\left(A_{J}^{-1}\otimes A_{K}^{-1}\right)\left(B_{J}\otimes I_{K}+I_{J}\otimes B_{K}\right)\left(A_{J}\otimes A_{K}\right).
\]
This means that the matrix $S_{J}$ defined in Example \ref{Ex - CPT - 3}
is given by $S_{J}=A_{J}^{-1}B_{J}A_{J}$ (see Section 3 in \citet{Sparis-Mouroutsos-IJC-1986}).
The matrix $\widetilde{\PSI}$ has rank $R=JK+J\vee K$, so that we
have:
\begin{align*}
\lambda_{JK+J\vee K}\left(\widetilde{\PSI}\widetilde{\PSI}^{\prime}\right) & =\lambda_{JK+J\vee K}\left(\left(A_{J}^{-1}\otimes A_{K}^{-1}\right)\left(B_{J}\oplus B_{K}\right)\left(A_{J}A_{J}^{\prime}\otimes A_{K}A_{K}^{\prime}\right)\right.\\
 & \qquad\left.\left(B_{J}\oplus B_{K}\right)^{\prime}\left(A_{J}^{-1}\otimes A_{K}^{-1}\right)^{\prime}\right)\\
 & \ge\lambda_{JK+J\vee K}\left(\left(A_{J}^{-1}\otimes A_{K}^{-1}\right)\left(B_{J}\oplus B_{K}\right)\left(B_{J}\oplus B_{K}\right)^{\prime}\left(A_{J}^{-1}\otimes A_{K}^{-1}\right)^{\prime}\right)\\
 & \qquad\cdot\lambda_{\min}\left(A_{J}A_{J}^{\prime}\otimes A_{K}A_{K}^{\prime}\right)\\
 & \ge\lambda_{JK+J\vee K}\left(\left(B_{J}\oplus B_{K}\right)\left(B_{J}\oplus B_{K}\right)^{\prime}\right)\\
 & \qquad\cdot\lambda_{\min}\left[\left(A_{J}^{-1}\otimes A_{K}^{-1}\right)^{\prime}\left(A_{J}^{-1}\otimes A_{K}^{-1}\right)\right]\lambda_{\min}\left(A_{J}A_{J}^{\prime}\otimes A_{K}A_{K}^{\prime}\right)\\
 & \ge\lambda_{JK+J\vee K}\left(\left(B_{J}\oplus B_{K}\right)\left(B_{J}\oplus B_{K}\right)^{\prime}\right)\cdot\frac{\lambda_{\min}\left(A_{J}A_{J}^{\prime}\right)\lambda_{\min}\left(A_{K}A_{K}^{\prime}\right)}{\lambda_{\max}\left(A_{J}^{\prime}A_{J}\right)\lambda_{\max}\left(A_{K}^{\prime}A_{K}\right)}\\
 & =\frac{\lambda_{JK+J\vee K}\left(\left(B_{J}\oplus B_{K}\right)\left(B_{J}\oplus B_{K}\right)^{\prime}\right)}{\kappa_{2}^{2}\left(A_{J}\right)\kappa_{2}^{2}\left(A_{K}\right)}.
\end{align*}
We note that, by Theorem 2 in \citet{Zielke-LAA-1988}:
\[
\kappa_{2}\left(A_{J}\right)=\sqrt{\frac{\lambda_{\max}\left(A_{J}^{\prime}A_{J}\right)}{\lambda_{\min}\left(A_{J}A_{J}^{\prime}\right)}}\le J\cdot\kappa_{1}\left(A_{J}\right).
\]
Now we characterize the matrices $A_{J}$. As we aim at using the
results of \citet{Farouki-CAGD-1991,Farouki-JCAM-2000}, we introduce
the matrices of the following transformations, namely the basis-transformation
matrix $A_{J}^{\star}$ from Bernstein to Legendre polynomials $\psi_{J}^{\mathrm{Leg}}\left(x_{1}\right)=A_{J}^{\star}\psi_{J}^{\mathrm{Ber}}\left(x_{1}\right)$
and the equivalent matrix $A_{J}^{\star\star}$ from power to Bernstein
polynomials $\psi_{J}^{\mathrm{Ber}}\left(x_{1}\right)=A_{J}^{\star\star}\psi_{J}^{\mathrm{pow}}\left(x_{1}\right)$.
Clearly, $A_{J}=A_{J}^{\star}A_{J}^{\star\star}$. Therefore:
\[
\kappa_{2}\left(A_{J}\right)\le J\cdot\kappa_{1}\left(A_{J}^{\star}\right)\kappa_{1}\left(A_{J}^{\star\star}\right)
\]
where $\kappa_{1}\left(A_{J}^{\star}\right)=2^{J}$ (\citet{Farouki-JCAM-2000})
and $\kappa_{1}\left(A_{J}^{\star\star}\right)=\left(J+1\right)\tbinom{J}{\left\lfloor \frac{2\left(J+1\right)}{3}\right\rfloor }2^{\left\lfloor \frac{2\left(J+1\right)}{3}\right\rfloor }\sim\frac{3^{J+1}\sqrt{J}}{2\sqrt{\pi}}$
(\citet{Farouki-CAGD-1991}). Then:
\begin{align*}
\lambda_{JK+J\vee K}\left(\widetilde{\PSI}\widetilde{\PSI}^{\prime}\right) & \ge\frac{\lambda_{JK+J\vee K}\left(\left(B_{J}\oplus B_{K}\right)\left(B_{J}\oplus B_{K}\right)^{\prime}\right)}{J^{2}K^{2}\cdot\kappa_{1}^{2}\left(A_{J}^{\star}\right)\kappa_{1}^{2}\left(A_{J}^{\star\star}\right)\kappa_{1}^{2}\left(A_{K}^{\star}\right)\kappa_{1}^{2}\left(A_{K}^{\star\star}\right)}\\
 & \asymp\frac{\lambda_{JK+J\vee K}\left(\left(B_{J}\oplus B_{K}\right)\left(B_{J}\oplus B_{K}\right)^{\prime}\right)}{J^{3}K^{3}\cdot36^{J+K}}.
\end{align*}
Now, $B_{J}\oplus B_{K}$ is a strictly upper triangular $\left(\left(J+1\right)\left(K+1\right)\right)\times\left(\left(J+1\right)\left(K+1\right)\right)$-matrix.
The first column and the last row of $B_{J}\oplus B_{K}$ are filled
with zeros and can be removed to get an upper triangular $\left(\left(J+1\right)\left(K+1\right)-1\right)\times\left(\left(J+1\right)\left(K+1\right)-1\right)$-matrix
$\tilde{B}$, where the non-zero singular values of these two matrices
are the same. With standard inequalities, it is hard to get a lower
bound on this eigenvalue that is not negative. However, through some
numerical experiments, we have obtained
\[
\lambda_{JK+J\vee K}\left(\tilde{B}\tilde{B}^{\prime}\right)\ge2^{-(J+K)},
\]

which finally implies our result. 
\end{example}

\subsection{Proof of Theorem \ref{Th - Main Result}}

We start remarking that under Assumption \ref{ASS2}: 
\[
\lim_{T\rightarrow\infty}\frac{\PSI^{\prime}\PSI}{T}=Q_{P}
\]
and wlog the matrix $Q_{P}$ can be taken to be the identity matrix
of dimension $P$, $I_{P}$. Therefore, $\lambda_{\min}\left(\frac{\PSI^{\prime}\PSI}{T}\right)$
converges to $1$. We define the indicator function $1_{T}=\mathsf{1}\left\{ \lambda_{\min}\left(\frac{\PSI^{\prime}\PSI}{T}\right)>1/2\right\} $.
Clearly, $\lim_{T\rightarrow\infty}1_{T}=1$. Moreover, under Assumption
\ref{ASS1}, we write $\widehat{\beta}$ as $\widehat{\beta}=\left(\PSI^{\prime}\PSI\right)^{-1}\PSI^{\prime}\overline{\mathbf{Y}}$,
where $\overline{\mathbf{Y}}=\frac{1}{n}\sum_{i=1}^{n}Y_{i}$, and
$\overline{\mathbf{Y}}=f\left(\mathbf{X}\right)+\overline{\varepsilon}$,
where $\overline{\varepsilon}=\frac{1}{n}\sum_{i=1}^{n}\varepsilon_{i}$.

The following lemma is useful in the proof of the main theorem. 
\begin{lem}
\label{Lm - Convergence of Beta} Under Assumptions \ref{ASS1} and
\ref{ASS2}: 
\[
\mathsf{1}_{T}\left\Vert \widehat{\beta}-\beta_{0}\right\Vert =O_{\mathbb{P}}\left(\left(\frac{\left(P\lambda_{nT}\right)\wedge\left(\tau_{nT}\right)}{nT}\right)^{1/2}\right)+O\left(N_{P}\right).
\]
\end{lem}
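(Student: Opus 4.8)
The plan is to use the closed form $\widehat{\beta}=(\PSI'\PSI)^{-1}\PSI'\overline{\mathbf{Y}}$ with $\overline{\mathbf{Y}}=f(\mathbf{X})+\overline{\varepsilon}$ (as recalled just before the lemma) and to split the error into a stochastic part and a deterministic approximation part. Since $(\PSI'\PSI)^{-1}\PSI'\PSI\beta_0=\beta_0$, one may write
\[
\widehat{\beta}-\beta_0=(\PSI'\PSI)^{-1}\PSI'\overline{\varepsilon}+(\PSI'\PSI)^{-1}\PSI'\bigl(f(\mathbf{X})-\PSI\beta_0\bigr).
\]
The entire argument is conducted on the event $\{\mathsf{1}_T=1\}$, on which $\lambda_{\min}(\PSI'\PSI/T)>1/2$; this yields the two elementary bounds $\lambda_{\max}\bigl((\PSI'\PSI/T)^{-1}\bigr)\le2$ and $\lambda_{\min}(\PSI'\PSI)>T/2$, the latter giving that the largest singular value of $(\PSI'\PSI)^{-1}\PSI'$ is at most $(2/T)^{1/2}$. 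This indicator is precisely the finite-$T$ surrogate for the limiting normalization $\PSI'\PSI/T\to Q_P=I_P$ furnished by Assumption \ref{ASS2}.

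For the approximation part, I would note that $f(\mathbf{X})-\PSI\beta_0=f(\mathbf{X})-f_{P,0}(\mathbf{X})$ has each entry bounded in absolute value by $\Vert f-f_{P,0}\Vert_\infty\le\vert f-f_{P,0}\vert_s=N_P$, so that its Euclidean norm is at most $T^{1/2}N_P$. Multiplying by the singular-value bound $(2/T)^{1/2}$ shows that, on $\{\mathsf{1}_T=1\}$, this part is $O(N_P)$ deterministically.

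The heart of the proof is the stochastic part. Setting $G=(\PSI'\PSI/T)^{-1}$ and invoking Assumption \ref{ASS1} (errors uncorrelated across individuals, $\mathbb{E}\varepsilon_i\varepsilon_i'=\Sigma_i$), which gives $\mathbb{E}\,\overline{\varepsilon}\,\overline{\varepsilon}'=n^{-1}\overline{\Sigma}$, I would compute
\[
\mathbb{E}\bigl\Vert(\PSI'\PSI)^{-1}\PSI'\overline{\varepsilon}\bigr\Vert^2=\frac{1}{nT^2}\,\mathrm{tr}\bigl(\PSI G^2\PSI'\overline{\Sigma}\bigr).
\]
The key step is to bound this trace in two complementary ways, exploiting that both $\PSI G^2\PSI'$ and $\overline{\Sigma}$ are positive semidefinite. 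Spending the spectral control on $\overline{\Sigma}$ gives $\mathrm{tr}(\PSI G^2\PSI'\overline{\Sigma})\le\lambda_{\max}(\overline{\Sigma})\,\mathrm{tr}(\PSI G^2\PSI')=\lambda_{nT}\,T\,\mathrm{tr}(G)\le2PT\lambda_{nT}$, where I used $\mathrm{tr}(\PSI G^2\PSI')=\mathrm{tr}(G^2\PSI'\PSI)=T\,\mathrm{tr}(G)$ together with $\mathrm{tr}(G)\le2P$ on $\{\mathsf{1}_T=1\}$; spending it instead on $\PSI G^2\PSI'$ gives $\mathrm{tr}(\PSI G^2\PSI'\overline{\Sigma})\le\lambda_{\max}(\PSI G^2\PSI')\,\mathrm{tr}(\overline{\Sigma})=T\lambda_{\max}(G)\,\tau_{nT}\le2T\tau_{nT}$. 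Taking the smaller of the two yields $\mathbb{E}\Vert(\PSI'\PSI)^{-1}\PSI'\overline{\varepsilon}\Vert^2\le2\bigl((P\lambda_{nT})\wedge\tau_{nT}\bigr)/(nT)$, and because this second-moment bound is uniform in $n$ and $T$, Markov's inequality converts it into the asserted $O_{\mathbb{P}}$ rate. Adding the two parts gives the lemma.

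The step I expect to be most delicate is the two-sided trace bound, since it is exactly what produces the wedge $(P\lambda_{nT})\wedge\tau_{nT}$ and forces the choice of whether to control $\overline{\Sigma}$ or $\PSI G^2\PSI'$ spectrally; everything else is routine once the conditioning event $\mathsf{1}_T$ is in place. A secondary point worth care is that $N_P$ is an approximation error measured uniformly over $\mathcal{X}$ through $\vert\cdot\vert_s$, whereas the bias term only sees the design points, so the passage through the entrywise $\Vert\cdot\Vert_\infty\le N_P$ bound is what legitimately links the two.
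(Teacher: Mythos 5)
Your proposal is correct and follows essentially the same route as the paper: the same decomposition of $\widehat{\beta}-\beta_{0}$ into a stochastic and an approximation term, the same second-moment computation reducing to a trace against $\overline{\Sigma}$, and the same two complementary spectral bounds (controlling $\overline{\Sigma}$ by $\lambda_{nT}$ on one side and by $\tau_{nT}$ together with the projection's idempotence on the other) that produce the wedge, finished off with Markov's inequality. The only differences are cosmetic bookkeeping in how the factor $\lambda_{\max}\left(\left(\PSI^{\prime}\PSI/T\right)^{-1}\right)$ is handled on the event $\left\{ \mathsf{1}_{T}=1\right\} $.
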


\begin{proof} We have: 
\begin{align*}
 & \widehat{\beta}-\beta_{0}\\
= & \left(\PSI^{\prime}\PSI\right)^{-1}\PSI^{\prime}\overline{\mathbf{Y}}-\beta_{0}\\
= & \left(\PSI^{\prime}\PSI\right)^{-1}\PSI^{\prime}\left(\overline{\mathbf{Y}}-f\left(\mathbf{X}\right)\right)+\left(\PSI^{\prime}\PSI\right)^{-1}\PSI^{\prime}\left(f\left(\mathbf{X}\right)-f_{P,0}\left(\mathbf{X}\right)\right)\\
= & \frac{1}{T}\left(\frac{\PSI^{\prime}\PSI}{T}\right)^{-1}\PSI^{\prime}\overline{\varepsilon}+\frac{1}{T}\left(\frac{\PSI^{\prime}\PSI}{T}\right)^{-1}\PSI^{\prime}\left(f\left(\mathbf{X}\right)-f_{P,0}\left(\mathbf{X}\right)\right)
\end{align*}
from which: 
\[
\mathsf{1}_{T}\left\Vert \widehat{\beta}-\beta_{0}\right\Vert \le\mathsf{1}_{T}\frac{1}{T}\left\Vert \left(\frac{\PSI^{\prime}\PSI}{T}\right)^{-1}\PSI^{\prime}\overline{\varepsilon}\right\Vert +\mathsf{1}_{T}\frac{1}{T}\left\Vert \left(\frac{\PSI^{\prime}\PSI}{T}\right)^{-1}\PSI^{\prime}\left(f\left(\mathbf{X}\right)-f_{P,0}\left(\mathbf{X}\right)\right)\right\Vert .
\]
For the first term in the sum, we get two different bounds. First,
we proceed as follows: 
\begin{align*}
\mathbb{E}\left[\mathsf{1}_{T}\left\Vert \left(\frac{\PSI^{\prime}\PSI}{T}\right)^{-1}\PSI^{\prime}\overline{\varepsilon}\right\Vert ^{2}\right]= & \mathsf{1}_{T}\mathbb{E}\left[\overline{\varepsilon}^{\prime}\PSI\left(\frac{\PSI^{\prime}\PSI}{T}\right)^{-1}\left(\frac{\PSI^{\prime}\PSI}{T}\right)^{-1}\PSI^{\prime}\overline{\varepsilon}\right]\\
\le & \mathsf{1}_{T}\mathbb{E}\textrm{tr}\left[\overline{\varepsilon}^{\prime}\PSI\left(\frac{\PSI^{\prime}\PSI}{T}\right)^{-1}\PSI^{\prime}\overline{\varepsilon}\right]\lambda_{\max}\left(\left(\frac{\PSI^{\prime}\PSI}{T}\right)^{-1}\right)\\
= & \mathsf{1}_{T}\textrm{tr}\left[\PSI\left(\frac{\PSI^{\prime}\PSI}{T}\right)^{-1}\PSI^{\prime}\frac{1}{n}\sum_{i=1}^{n}\mathbb{E}\left(\varepsilon_{i}\varepsilon_{i}^{\prime}\right)\right]\frac{1}{\lambda_{\min}\left(\frac{\PSI^{\prime}\PSI}{T}\right)}\\
= & \mathsf{1}_{T}\frac{1}{n}\textrm{tr}\left[\PSI\left(\frac{\PSI^{\prime}\PSI}{T}\right)^{-1}\PSI^{\prime}\overline{\Sigma}\right]\frac{1}{\lambda_{\min}\left(\frac{\PSI^{\prime}\PSI}{T}\right)}.
\end{align*}
For the first version of the bound, we use the following majorization:
\begin{align*}
\mathbb{E}\left[\mathsf{1}_{T}\left\Vert \left(\frac{\PSI^{\prime}\PSI}{T}\right)^{-1}\PSI^{\prime}\overline{\varepsilon}\right\Vert ^{2}\right]\le & \mathsf{1}_{T}\frac{1}{n}\textrm{tr}\left[\PSI\left(\frac{\PSI^{\prime}\PSI}{T}\right)^{-1}\PSI^{\prime}\overline{\Sigma}\right]\frac{1}{\lambda_{\min}\left(\frac{\PSI^{\prime}\PSI}{T}\right)}\\
= & \mathsf{1}_{T}\frac{1}{n}\textrm{tr}\left[\left(\frac{\PSI^{\prime}\PSI}{T}\right)^{-1/2}\left(\PSI^{\prime}\overline{\Sigma}\PSI\right)\left(\frac{\PSI^{\prime}\PSI}{T}\right)^{-1/2}\right]\frac{1}{\lambda_{\min}\left(\frac{\PSI^{\prime}\PSI}{T}\right)}\\
\le & \mathsf{1}_{T}\frac{T}{n}\textrm{tr}\left[\left(\frac{\PSI^{\prime}\PSI}{T}\right)^{-1}\left(\frac{\PSI^{\prime}\PSI}{T}\right)\right]\frac{\lambda_{\max}\left(\overline{\Sigma}\right)}{\lambda_{\min}\left(\frac{\PSI^{\prime}\PSI}{T}\right)}\\
= & \mathsf{1}_{T}\frac{T\textrm{tr}\left(I_{P}\right)\lambda_{nT}}{n\lambda_{\min}\left(\frac{\PSI^{\prime}\PSI}{T}\right)}=O\left(\frac{TP\lambda_{nT}}{n}\right).
\end{align*}
This implies, from Markov's inequality, that:
\[
\mathsf{1}_{T}\frac{1}{T}\left\Vert \left(\frac{\PSI^{\prime}\PSI}{T}\right)^{-1}\PSI^{\prime}\overline{\varepsilon}\right\Vert =O_{\mathbb{P}}\left(\sqrt{\frac{P\lambda_{nT}}{nT}}\right).
\]
For the second version of the bound, we write: 
\begin{align*}
\mathbb{E}\left[\mathsf{1}_{T}\left\Vert \left(\frac{\PSI^{\prime}\PSI}{T}\right)^{-1}\PSI^{\prime}\overline{\varepsilon}\right\Vert ^{2}\right]\le & \mathsf{1}_{T}\frac{1}{n}\textrm{tr}\left[\PSI\left(\frac{\PSI^{\prime}\PSI}{T}\right)^{-1}\PSI^{\prime}\overline{\Sigma}\right]\frac{1}{\lambda_{\min}\left(\frac{\PSI^{\prime}\PSI}{T}\right)}\\
= & \mathsf{1}_{T}\frac{T}{n}\textrm{tr}\left[\overline{\Sigma}^{1/2}\PSI\left(\PSI^{\prime}\PSI\right)^{-1}\PSI^{\prime}\overline{\Sigma}^{1/2}\right]\frac{1}{\lambda_{\min}\left(\frac{\PSI^{\prime}\PSI}{T}\right)}\\
\leq & \frac{T}{n}\textrm{tr}\left(\overline{\Sigma}\right)\frac{\lambda_{\max}\left(\PSI\left(\PSI^{\prime}\PSI\right)^{-1}\PSI^{\prime}\right)}{\lambda_{\min}\left(\frac{\PSI^{\prime}\PSI}{T}\right)}=O\left(\frac{T\tau_{nT}}{n}\right)
\end{align*}
where the last inequality comes from idempotence of $\PSI\left(\PSI^{\prime}\PSI\right)^{-1}\PSI^{\prime}$.
From Markov's inequality, we finally get:
\[
\mathsf{1}_{T}\frac{1}{T}\left\Vert \left(\frac{\PSI^{\prime}\PSI}{T}\right)^{-1}\PSI^{\prime}\overline{\varepsilon}\right\Vert =O_{\mathbb{P}}\left(\sqrt{\frac{\tau_{nT}}{nT}}\right).
\]
Similarly, using the idempotence of $\PSI\left(\PSI^{\prime}\PSI\right)^{-1}\PSI^{\prime}$,
we have that 
\begin{align*}
 & \mathsf{1}_{T}\left\Vert \left(\frac{\PSI^{\prime}\PSI}{T}\right)^{-1}\PSI^{\prime}\left(f\left(\mathbf{X}\right)-f_{P,0}\left(\mathbf{X}\right)\right)\right\Vert ^{2}\\
= & \mathsf{1}_{T}\left[\left(f\left(\mathbf{X}\right)-f_{P,0}\left(\mathbf{X}\right)\right)^{\prime}\PSI\left(\frac{\PSI^{\prime}\PSI}{T}\right)^{-1}\left(\frac{\PSI^{\prime}\PSI}{T}\right)^{-1}\PSI^{\prime}\left(f\left(\mathbf{X}\right)-f_{P,0}\left(\mathbf{X}\right)\right)\right]\\
\le & \mathsf{1}_{T}T\left[\left(f\left(\mathbf{X}\right)-f_{P,0}\left(\mathbf{X}\right)\right)^{\prime}\PSI\left(\PSI^{\prime}\PSI\right)^{-1}\PSI^{\prime}\left(f\left(\mathbf{X}\right)-f_{P,0}\left(\mathbf{X}\right)\right)\right]\lambda_{\max}\left(\left(\frac{\PSI^{\prime}\PSI}{T}\right)^{-1}\right)\\
\le & \mathsf{1}_{T}T\left(f\left(\mathbf{X}\right)-f_{P,0}\left(\mathbf{X}\right)\right)^{\prime}\left(f\left(\mathbf{X}\right)-f_{P,0}\left(\mathbf{X}\right)\right)\frac{\lambda_{\max}\left(\PSI\left(\PSI^{\prime}\PSI\right)^{-1}\PSI^{\prime}\right)}{\lambda_{\min}\left(\frac{\PSI^{\prime}\PSI}{T}\right)}\\
= & \frac{\mathsf{1}_{T}}{\lambda_{\min}\left(\frac{\PSI^{\prime}\PSI}{T}\right)}T\left(f\left(\mathbf{X}\right)-f_{P,0}\left(\mathbf{X}\right)\right)^{\prime}\left(f\left(\mathbf{X}\right)-f_{P,0}\left(\mathbf{X}\right)\right)\\
= & O\left(T^{2}N_{P}^{2}\right).
\end{align*}
The result of the lemma follows. \end{proof}

The bound stated in the theorem comes from the majorization:
\begin{align*}
\mathsf{1}_{T}\left|\hat{f}_{P}-f\right|_{s}= & \mathsf{1}_{T}\left|\psi_{P}\left(x\right)\left(\widehat{\beta}-\beta_{0}\right)\right|_{s}+\left|\psi_{P}\left(x\right)\beta_{0}-f\right|_{s}\\
\le & \zeta_{s}\left(P\right)\mathsf{1}_{T}\left\Vert \widehat{\beta}-\beta_{0}\right\Vert +N_{P}\\
= & O_{\mathbb{P}}\left(\zeta_{s}\left(P\right)\left(\frac{\left(P\lambda_{nT}\right)\wedge\left(\tau_{nT}\right)}{nT}\right)^{1/2}\right)+O\left(\left(\zeta_{s}\left(P\right)+1\right)N_{P}\right).
\end{align*}
The theorem follows from Lemma \ref{Lm - Convergence of Beta} and
Assumption \ref{ASS3}.

\subsection{Asymptotic Normality and Wald Tests}

\emph{Proof of Theorem \ref{Th - Asymptotic Normality}.} \emph{(i)}
First of all we show that $A_{nT}$ is well defined. We have $V_{nT}=\frac{1}{n}\widetilde{\PSI}\left(\PSI^{\prime}\PSI\right)^{-1}\PSI^{\prime}\overline{\Sigma}\PSI\left(\PSI^{\prime}\PSI\right)^{-1}\widetilde{\PSI}^{\prime}$,
and: 
\begin{align}
\lambda_{\min}\left(V_{nT}\right)= & \lambda_{\min}\left(\frac{1}{n}\widetilde{\PSI}\left(\PSI^{\prime}\PSI\right)^{-1}\PSI^{\prime}\overline{\Sigma}\PSI\left(\PSI^{\prime}\PSI\right)^{-1}\widetilde{\PSI}^{\prime}\right)\nonumber \\
\ge & \frac{1}{nT}\lambda_{\min}\left(\widetilde{\PSI}\left(\frac{\PSI^{\prime}\PSI}{T}\right)^{-1}\widetilde{\PSI}^{\prime}\right)\lambda_{\min}\left(\overline{\Sigma}\right)\nonumber \\
\ge & \frac{1}{nT}\lambda_{\min}\left(\widetilde{\PSI}\widetilde{\PSI}^{\prime}\right)\lambda_{\min}\left(\left(\frac{\PSI^{\prime}\PSI}{T}\right)^{-1}\right)\lambda_{\min}\left(\overline{\Sigma}\right)\nonumber \\
\ge & \frac{\lambda_{\min}\left(\widetilde{\PSI}\widetilde{\PSI}^{\prime}\right)\lambda_{\min}\left(\overline{\Sigma}\right)}{nT\lambda_{\max}\left(\frac{\PSI^{\prime}\PSI}{T}\right)}>0\label{Eq - Lower Bound on V_nT}
\end{align}
from Assumptions \ref{ASS5} (i) and \ref{ASS6}. Therefore $A_{nT}=V_{nT}^{-1/2}$
is well defined. From Assumption \ref{ASS4} (ii), we have $\Gamma\left(\hat{f}_{P}\right)=\widetilde{\PSI}\left(\PSI^{\prime}\PSI\right)^{-1}\PSI^{\prime}\overline{\mathbf{Y}}$
and $\mathbb{E}\Gamma\left(\hat{f}_{P}\right)=\widetilde{\PSI}\left(\PSI^{\prime}\PSI\right)^{-1}\PSI^{\prime}f\left(\mathbf{X}\right)$.
Therefore: 
\begin{align*}
\Gamma\left(\hat{f}_{P}\right)-\mathbb{E}\Gamma\left(\hat{f}_{P}\right)= & \widetilde{\PSI}\left(\PSI^{\prime}\PSI\right)^{-1}\PSI^{\prime}\overline{\varepsilon}\\
= & \frac{1}{n}\sum_{i=1}^{n}\widetilde{\PSI}\left(\PSI^{\prime}\PSI\right)^{-1}\PSI^{\prime}\varepsilon_{i}.
\end{align*}
We use the Cramér-Wold device (with $v$ such that $\left\Vert v\right\Vert _{L_{2}}=1$)
applied to $\overline{W}_{nT}$:
\begin{align}
v^{\prime}\overline{W}_{nT}= & v^{\prime}A_{nT}\left(\Gamma\left(\hat{f}_{P}\right)-\mathbb{E}\Gamma\left(\hat{f}_{P}\right)\right)\nonumber \\
= & \frac{1}{n}\sum_{i=1}^{n}v^{\prime}A_{nT}\widetilde{\PSI}\left(\PSI^{\prime}\PSI\right)^{-1}\PSI^{\prime}\varepsilon_{i}\nonumber \\
= & \sum_{i=1}^{n}\frac{1}{n^{1/2}}v^{\prime}\left(\theta_{nT}^{\prime}\theta_{nT}\right)^{-1/2}\theta_{nT}^{\prime}\overline{\Sigma}^{-1/2}\varepsilon_{i}\label{Eq - Formulation of W_nT}
\end{align}
where $\theta_{nT}\triangleq\overline{\Sigma}^{1/2}\PSI\left(\PSI^{\prime}\PSI\right)^{-1}\widetilde{\PSI}^{\prime}$.
We verify Lyapunov condition:
\begin{align*}
 & \sum_{i=1}^{n}\mathbb{E}\left|\frac{1}{n^{1/2}}v^{\prime}\left(\theta_{nT}^{\prime}\theta_{nT}\right)^{-1/2}\theta_{nT}^{\prime}\overline{\Sigma}^{-1/2}\varepsilon_{i}\right|^{2+\delta}\\
\le & \frac{1}{n^{1+\delta/2}}\sum_{i=1}^{n}\mathbb{E}\left\{ \left\Vert v\right\Vert _{L_{2}}^{2+\delta}\left\Vert \left(\theta_{nT}^{\prime}\theta_{nT}\right)^{-1/2}\theta_{nT}^{\prime}\overline{\Sigma}^{-1/2}\varepsilon_{i}\right\Vert _{L_{2}}^{2+\delta}\right\} \\
= & \frac{1}{n^{1+\delta/2}}\sum_{i=1}^{n}\mathbb{E}\left|\lambda_{\max}\left[\left(\theta_{nT}^{\prime}\theta_{nT}\right)^{-1/2}\theta_{nT}^{\prime}\overline{\Sigma}^{-1/2}\varepsilon_{i}\varepsilon_{i}^{\prime}\overline{\Sigma}^{-1/2}\theta_{nT}\left(\theta_{nT}^{\prime}\theta_{nT}\right)^{-1/2}\right]\right|^{1+\delta/2}\\
\le & \frac{1}{n^{1+\delta/2}}\sum_{i=1}^{n}\mathbb{E}\left\{ \left|\lambda_{\max}\left(\left(\theta_{nT}^{\prime}\theta_{nT}\right)^{-1/2}\theta_{nT}^{\prime}\theta_{nT}\left(\theta_{nT}^{\prime}\theta_{nT}\right)^{-1/2}\right)\cdot\lambda_{\max}\left(\overline{\Sigma}^{-1/2}\varepsilon_{i}\varepsilon_{i}^{\prime}\overline{\Sigma}^{-1/2}\right)\right|^{1+\delta/2}\right\} \\
= & \frac{1}{n^{1+\delta/2}}\sum_{i=1}^{n}\mathbb{E}\left|\lambda_{\max}\left(\overline{\Sigma}^{-1/2}\varepsilon_{i}\varepsilon_{i}^{\prime}\overline{\Sigma}^{-1/2}\right)\right|^{1+\delta/2}
\end{align*}
where the first and second steps come from properties of norms, the
third from the Courant-Fisher variational property of eigenvalues,
and the fourth from idempotence. An upper bound on this function can
be obtained as follows:
\begin{align}
 & \frac{\sum_{i=1}^{n}\mathbb{E}\left|\lambda_{\max}\left(\overline{\Sigma}^{-1/2}\varepsilon_{i}\varepsilon_{i}^{\prime}\overline{\Sigma}^{-1/2}\right)\right|^{1+\delta/2}}{n^{1+\delta/2}}\nonumber \\
\le & \frac{\sum_{i=1}^{n}\mathbb{E}\left|\varepsilon_{i}^{\prime}\overline{\Sigma}^{-1}\varepsilon_{i}\right|^{1+\delta/2}}{n^{1+\delta/2}}\le\frac{\sum_{i=1}^{n}\mathbb{E}\left|\varepsilon_{i}^{\prime}\varepsilon_{i}\lambda_{\max}\left(\overline{\Sigma}^{-1}\right)\right|^{1+\delta/2}}{n^{1+\delta/2}}\nonumber \\
= & \frac{\sum_{i=1}^{n}\mathbb{E}\left|\varepsilon_{i}^{\prime}\varepsilon_{i}\right|^{1+\delta/2}}{n^{1+\delta/2}\lambda_{\min}^{1+\delta/2}\left(\overline{\Sigma}\right)}\le\frac{\max_{i}\mathbb{E}\left|\sum_{t=1}^{T}\varepsilon_{it}^{2}\right|^{1+\delta/2}}{n^{\delta/2}\lambda_{\min}^{1+\delta/2}\left(\overline{\Sigma}\right)}\nonumber \\
\le & \frac{\max_{i}c_{1+\delta/2}\sum_{t=1}^{T}\mathbb{E}\left|\varepsilon_{it}^{2}\right|^{1+\delta/2}}{n^{\delta/2}\lambda_{\min}^{1+\delta/2}\left(\overline{\Sigma}\right)}\le\frac{T^{1+\delta/2}\max_{i,t}\mathbb{E}\left|\varepsilon_{it}\right|^{2+\delta}}{n^{\delta/2}\lambda_{\min}^{1+\delta/2}\left(\overline{\Sigma}\right)},\label{Eq - Upper Bound on Lyapunov}
\end{align}
where the last step comes from Loéve's $c_{r}-$inequality (we recall
that this is the inequality $\mathbb{E}\left|\sum_{t=1}^{T}X_{t}\right|^{r}\leq c_{r}\sum_{t=1}^{T}\mathbb{E}\left|X_{t}\right|^{r}$
where $c_{r}=1$ if $0<r\leq1$ and $c_{r}=T^{r-1}$ if $r>1$).

\emph{(ii)} Take $\Delta f\triangleq f-f_{P,0}$. By Assumption \ref{ASS4},
we have:
\begin{align*}
\mathbb{E}\Gamma\left(\hat{f}_{P}\right)= & \widetilde{\PSI}\left(\PSI^{\prime}\PSI\right)^{-1}\PSI^{\prime}f\left(\mathbf{X}\right)\\
= & \widetilde{\PSI}\left(\PSI^{\prime}\PSI\right)^{-1}\PSI^{\prime}\left[f_{P,0}\left(\mathbf{X}\right)+\Delta f\left(\mathbf{X}\right)\right]\\
= & \widetilde{\PSI}\left(\PSI^{\prime}\PSI\right)^{-1}\PSI^{\prime}\left[\PSI\beta_{0}+\Delta f\left(\mathbf{X}\right)\right]\\
= & \widetilde{\PSI}\beta_{0}+\widetilde{\PSI}\left(\PSI^{\prime}\PSI\right)^{-1}\PSI^{\prime}\Delta f\left(\mathbf{X}\right)\\
\Gamma\left(f\right)= & \Gamma\left(f_{P,0}+\Delta f\right)=\Gamma\left(f_{P,0}\right)+\Gamma\left(\Delta f\right)=\widetilde{\PSI}\beta_{0}+\Gamma\left(\Delta f\right)
\end{align*}
and:
\[
A_{nT}\left(\mathbb{E}\Gamma\left(\hat{f}_{P}\right)-\Gamma\left(f\right)\right)=A_{nT}\widetilde{\PSI}\left(\PSI^{\prime}\PSI\right)^{-1}\PSI^{\prime}\Delta f\left(\mathbf{X}\right)-A_{nT}\Gamma\left(\Delta f\right).
\]
Let $v$ be a vector such that $\left\Vert v\right\Vert _{L_{2}}=1$.
Using the Courant-Fisher variational property of eigenvalues, we obtain
the inequality:
\begin{align*}
 & \left|v^{\prime}A_{nT}V_{nT}A_{nT}^{\prime}v\right|\\
= & \left|\frac{1}{n}v^{\prime}A_{nT}\widetilde{\PSI}\left(\PSI^{\prime}\PSI\right)^{-1}\PSI^{\prime}\overline{\Sigma}\PSI\left(\PSI^{\prime}\PSI\right)^{-1}\widetilde{\PSI}^{\prime}A_{nT}^{\prime}v\right|\\
\ge & \left|\frac{1}{n}v^{\prime}A_{nT}\widetilde{\PSI}\left(\PSI^{\prime}\PSI\right)^{-1}\PSI^{\prime}\PSI\left(\PSI^{\prime}\PSI\right)^{-1}\widetilde{\PSI}^{\prime}A_{nT}^{\prime}v\right|\\
 & \qquad\cdot\lambda_{\min}\left(\overline{\Sigma}\right)\\
= & \left|\frac{1}{n}v^{\prime}A_{nT}\widetilde{\PSI}\left(\PSI^{\prime}\PSI\right)^{-1}\widetilde{\PSI}^{\prime}A_{nT}^{\prime}v\right|\cdot\lambda_{\min}\left(\overline{\Sigma}\right)
\end{align*}
and from this, using $A_{nT}V_{nT}A_{nT}^{\prime}=I_{R}$:
\begin{equation}
\left|\frac{1}{n}v^{\prime}A_{nT}\widetilde{\PSI}\left(\PSI^{\prime}\PSI\right)^{-1}\widetilde{\PSI}^{\prime}A_{nT}^{\prime}v\right|\le\frac{\left|v^{\prime}A_{nT}V_{nT}A_{nT}^{\prime}v\right|}{\lambda_{\min}\left(\overline{\Sigma}\right)}\le\frac{1}{\lambda_{\min}\left(\overline{\Sigma}\right)}.\label{Eq - Intermediate Inequality}
\end{equation}
We have:
\begin{align*}
 & \left|v^{\prime}A_{nT}\widetilde{\PSI}\left(\PSI^{\prime}\PSI\right)^{-1}\PSI^{\prime}\Delta f\left(\mathbf{X}\right)\right|\\
\le & n^{1/2}\left|\frac{1}{n}v^{\prime}A_{nT}\widetilde{\PSI}\left(\PSI^{\prime}\PSI\right)^{-1}\widetilde{\PSI}^{\prime}A_{nT}^{\prime}v\right|^{1/2}\left|\Delta f\left(\mathbf{X}\right)^{\prime}\PSI\left(\PSI^{\prime}\PSI\right)^{-1}\PSI^{\prime}\Delta f\left(\mathbf{X}\right)\right|^{1/2}\\
\le & \frac{n^{1/2}}{\lambda_{\min}^{1/2}\left(\overline{\Sigma}\right)}\cdot\left|\Delta f\left(\mathbf{X}\right)^{\prime}\Delta f\left(\mathbf{X}\right)\right|^{1/2}\lambda_{\max}^{1/2}\left(\PSI\left(\PSI^{\prime}\PSI\right)^{-1}\PSI^{\prime}\right)\\
\le & \frac{n^{1/2}}{\lambda_{\min}^{1/2}\left(\overline{\Sigma}\right)}\cdot\left|\Delta f\left(\mathbf{X}\right)^{\prime}\Delta f\left(\mathbf{X}\right)\right|^{1/2}\le\frac{\left(nT\right)^{1/2}\left\Vert \Delta f\right\Vert _{\infty}}{\lambda_{\min}^{1/2}\left(\overline{\Sigma}\right)}
\end{align*}
where the first inequality is Cauchy-Schwarz' and the second inequality
uses (\ref{Eq - Intermediate Inequality}) and the idempotence of
$\PSI\left(\PSI^{\prime}\PSI\right)^{-1}\PSI^{\prime}$. Similarly,
\begin{align*}
\left|v^{\prime}A_{nT}\Gamma\left(\Delta f\right)\right|\le & \left|v^{\prime}A_{nT}A_{nT}^{\prime}v\right|^{1/2}\left|\Gamma\left(\Delta f\right)^{\prime}\Gamma\left(\Delta f\right)\right|^{1/2}\\
\le & \left\Vert \Gamma\left(\Delta f\right)\right\Vert _{L_{2}}\lambda_{\max}^{1/2}\left(V_{nT}^{-1}\right)\le\frac{\left\Vert \Gamma\left(\Delta f\right)\right\Vert _{L_{2}}}{\lambda_{\min}^{1/2}\left(V_{nT}\right)}\\
\le & C_{3}\frac{\left|\Delta f\right|_{s}}{\lambda_{\min}^{1/2}\left(V_{nT}\right)}\le\frac{C_{3}\left|\Delta f\right|_{s}\left(nT\right)^{1/2}\lambda_{\max}^{1/2}\left(\frac{\PSI^{\prime}\PSI}{T}\right)}{\lambda_{\min}^{1/2}\left(\widetilde{\PSI}\widetilde{\PSI}^{\prime}\right)\lambda_{\min}^{1/2}\left(\overline{\Sigma}\right)}
\end{align*}
where the first inequality is Cauchy-Schwarz', the second inequality,
i.e. $\left|v^{\prime}A_{nT}A_{nT}^{\prime}v\right|\le\lambda_{\max}\left(V_{nT}^{-1}\right)$,
comes from the Courant-Fisher variational property of eigenvalues,
and the last inequality from (\ref{Eq - Lower Bound on V_nT}). 

\emph{(iii)} We can write: 
\[
W_{nT}^{\prime}W_{nT}=\overline{W}_{nT}^{\prime}\overline{W}_{nT}+2\overline{W}_{nT}^{\prime}\mathbb{E}W_{nT}+\mathbb{E}W_{nT}^{\prime}\mathbb{E}W_{nT},
\]
where 
\[
\left|2\overline{W}_{nT}^{\prime}\mathbb{E}W_{nT}+\mathbb{E}W_{nT}^{\prime}\mathbb{E}w_{nT}\right|\le2\left\Vert \overline{W}_{nT}\right\Vert _{L_{2}}\left\Vert \mathbb{E}W_{nT}\right\Vert _{L_{2}}+\left\Vert \mathbb{E}W_{nT}\right\Vert _{L_{2}}^{2}.
\]
First of all, we want to find conditions under which $\overline{W}_{nT}$
approaches a Gaussian random vector when $n$ and $T$ (and sometimes
$R$) diverge to infinity. We take
\[
\mathbb{P}\left\{ \overline{W}_{nT}^{\prime}\overline{W}_{nT}\le\omega\right\} =\mathbb{P}\left\{ \overline{W}_{nT}\in\mathsf{B}_{\sqrt{\omega}}\right\} ,
\]
where $\mathsf{B}_{\sqrt{\omega}}$ is the ball of radius $\sqrt{\omega}$
in $\mathbb{R}^{R}$. For a standard Gaussian $\left(R\times1\right)-$vector
$g_{R}$ we have 
\[
\Delta_{nT}=\sup_{\omega}\left|\mathbb{P}\left\{ \overline{W}_{nT}^{\prime}\overline{W}_{nT}\le\omega\right\} -\mathbb{P}\left\{ g_{R}^{\prime}g_{R}\le\omega\right\} \right|=\sup_{\omega}\left|\mathbb{P}\left\{ \overline{W}_{nT}\in\mathsf{B}_{\sqrt{\omega}}\right\} -\mathbb{P}\left\{ g_{r}\in\mathsf{B}_{\sqrt{\omega}}\right\} \right|.
\]
To bound this quantity, we express $\overline{W}_{nT}$ as in (\ref{Eq - Formulation of W_nT})
and we use the Berry-Esséen bound of Theorem 1.1 in \citet{bentkus2004}:
\[
\Delta_{nT}\le CR^{1/4}\frac{\sum_{i=1}^{n}\mathbb{E}\left\Vert \left(\theta_{nT}^{\prime}\theta_{nT}\right)^{-1/2}\theta_{nT}^{\prime}\overline{\Sigma}^{-1/2}\varepsilon_{i}\right\Vert _{L_{2}}^{3}}{n^{3/2}}.
\]
Notice that, since $\mathbb{V}\left(W_{nT}\right)=I_{R}$, the normalization
condition of the theorem is respected. The right-hand side of the
previous equation can be majorized using (\eqref{Eq - Upper Bound on Lyapunov})
with $\delta=1$. We get

\begin{equation}
\Delta_{nT}\le CR^{\frac{1}{4}}\frac{T^{3/2}\max_{i,t}\mathbb{E}\left|\varepsilon_{it}\right|^{3}}{n^{1/2}\lambda_{\min}^{3/2}\left(\overline{\Sigma}\right)}.\label{Eq - Berry-Esseen Bound}
\end{equation}
Then, respectively from part \emph{(ii)} and part \emph{(i)} of the
present theorem: 
\begin{align*}
\left\Vert \mathbb{E}W_{nT}\right\Vert _{L_{2}}^{2}\le & B_{nT}^{2}\\
\left\Vert \overline{W}_{nT}\right\Vert _{L_{2}}^{2}= & O_{\mathbb{P}}\left(R\right).
\end{align*}
Summing up: 
\begin{align*}
W=\frac{W_{nT}^{\prime}W_{nT}-R}{\sqrt{2R}}= & \frac{\overline{W}_{nT}^{\prime}\overline{W}_{nT}-R+2\overline{W}_{nT}^{\prime}\mathbb{E}W_{nT}+\mathbb{E}W_{nT}^{\prime}\mathbb{E}W_{nT}}{\sqrt{2R}}\\
= & \frac{\overline{W}_{nT}^{\prime}\overline{W}_{nT}-R}{\sqrt{2R}}+O\left(\frac{\left\Vert \overline{W}_{nT}\right\Vert _{L_{2}}\left\Vert \mathbb{E}W_{nT}\right\Vert _{L_{2}}+\left\Vert \mathbb{E}W_{nT}\right\Vert _{L_{2}}^{2}}{\sqrt{R}}\right)\\
= & \frac{\overline{W}_{nT}^{\prime}\overline{W}_{nT}-R}{\sqrt{2R}}+O_{\mathbb{P}}\left(B_{nT}+\frac{B_{nT}^{2}}{\sqrt{R}}\right).
\end{align*}
The approximation of $\frac{\chi_{M}^{2}-R}{\sqrt{2R}}$ through a
standard normal random variable proceeds using the classical Berry-Esséen
bound for sums of independent identically distributed random variables.

\emph{Proof of Theorem \ref{Th - Asymptotic Normality under Nonlinearity}.}
First of all, we decompose the quantity $\Gamma\left(\hat{f}_{P}\right)-\Gamma\left(f\right)$
as follows: 
\begin{align*}
\Gamma\left(\hat{f}_{P}\right)-\Gamma\left(f\right)= & \Gamma\left(\hat{f}_{P}\right)-\Gamma\left(f\right)-\Gamma^{\prime}\left(\hat{f}_{P}\right)+\Gamma^{\prime}\left(f\right)+\Gamma^{\prime}\left(\hat{f}_{P}\right)-\mathbb{E}\Gamma^{\prime}\left(\hat{f}_{P}\right)+\mathbb{E}\Gamma^{\prime}\left(\hat{f}_{P}\right)-\Gamma^{\prime}\left(f\right)\\
= & \Gamma\left(\hat{f}_{P}\right)-\Gamma\left(f\right)-\Gamma^{\prime}\left(\hat{f}_{P}\right)+\Gamma^{\prime}\left(f\right)+\widetilde{\PSI}\left(\PSI^{\prime}\PSI\right)^{-1}\PSI^{\prime}\left(\overline{\mathbf{Y}}-f\left(\mathbf{X}\right)\right)\\
 & +\mathbb{E}\Gamma^{\prime}\left(\hat{f}_{P}\right)-\Gamma^{\prime}\left(f\right)\\
= & \Gamma\left(\hat{f}_{P}\right)-\Gamma\left(f\right)-\Gamma^{\prime}\left(\hat{f}_{P}\right)+\Gamma^{\prime}\left(f\right)+\widetilde{\PSI}\left(\PSI^{\prime}\PSI\right)^{-1}\PSI^{\prime}\overline{\varepsilon}+\mathbb{E}\Gamma^{\prime}\left(\hat{f}_{P}\right)-\Gamma^{\prime}\left(f\right).
\end{align*}
We will show in the following that the dominating term is $\widetilde{\PSI}\left(\PSI^{\prime}\PSI\right)^{-1}\PSI^{\prime}\overline{\varepsilon}$
while all the others converge to $0$. Therefore, we compute the variance
of the term, that we call 
\begin{align*}
V_{nT}= & \frac{1}{nT}\widetilde{\PSI}\left(\frac{\PSI^{\prime}\PSI}{T}\right)^{-1}\frac{\PSI^{\prime}\overline{\Sigma}\PSI}{T}\left(\frac{\PSI^{\prime}\PSI}{T}\right)^{-1}\widetilde{\PSI}^{\prime}
\end{align*}
where $\overline{\Sigma}=n^{-1}\sum_{i=1}^{n}\Sigma_{i}$. Provided
$V_{nT}$ is symmetric positive definite, let $A_{nT}=V_{nT}^{-1/2}$
be the symmetric positive definite square root of the inverse of $V_{nT}$.
The positive definiteness of $V_{nT}$ is checked as in the proof
of Theorem \ref{Th - Asymptotic Normality} under Assumptions \ref{ASS5}
(i) and \ref{ASS6}. At last, we can decompose the normalized vector
in a linear part (yielding the asymptotic distributional result):
\[
A_{nT}\left(\Gamma^{\prime}\left(\hat{f}_{P}\right)-\mathbb{E}\Gamma^{\prime}\left(\hat{f}_{P}\right)\right)=A_{nT}\widetilde{\PSI}\left(\PSI^{\prime}\PSI\right)^{-1}\PSI^{\prime}\overline{\varepsilon}
\]
and a nonlinear part (that does not contribute to the asymptotic distribution):
\begin{align*}
A_{nT}\left(\mathbb{E}\Gamma\left(\hat{f}_{P}\right)-\Gamma\left(f\right)\right)= & A_{nT}\left\{ \Gamma\left(\hat{f}_{P}\right)-\Gamma\left(f\right)-\Gamma^{\prime}\left(\hat{f}_{P}\right)+\Gamma^{\prime}\left(f\right)\right\} \\
 & +A_{nT}\left\{ \mathbb{E}\Gamma^{\prime}\left(\hat{f}_{P}\right)-\Gamma^{\prime}\left(f\right)\right\} .
\end{align*}
As concerns the linear part, asymptotic normality of $A_{nT}\widetilde{\PSI}\left(\PSI^{\prime}\PSI\right)^{-1}\PSI^{\prime}\overline{\varepsilon}$
is verified as in Theorem \ref{Th - Asymptotic Normality}.

Now we provide an upper bound for the nonlinear part. We start remarking
that:
\begin{align*}
\left\Vert A_{nT}\right\Vert _{L_{2}}^{2}= & \lambda_{\max}\left(V_{nT}^{-1}\right)=\frac{1}{\lambda_{\min}\left(V_{nT}\right)}\\
= & \frac{1}{\lambda_{\min}\left(\frac{1}{n}\widetilde{\PSI}\left(\PSI^{\prime}\PSI\right)^{-1}\PSI^{\prime}\overline{\Sigma}\PSI\left(\PSI^{\prime}\PSI\right)^{-1}\widetilde{\PSI}^{\prime}\right)}\\
\le & \frac{n}{\lambda_{\min}\left(\widetilde{\PSI}\left(\PSI^{\prime}\PSI\right)^{-1}\widetilde{\PSI}^{\prime}\right)\lambda_{\min}\left(\overline{\Sigma}\right)}\\
\le & \frac{nT\lambda_{\max}\left(\frac{\PSI^{\prime}\PSI}{T}\right)}{\lambda_{\min}\left(\widetilde{\PSI}\widetilde{\PSI}^{\prime}\right)\lambda_{\min}\left(\overline{\Sigma}\right)}.
\end{align*}
The first term is: 
\begin{align*}
 & \left\Vert A_{nT}\left\{ \Gamma\left(\hat{f}_{P}\right)-\Gamma\left(f\right)-\Gamma^{\prime}\left(\hat{f}_{P}\right)+\Gamma^{\prime}\left(f\right)\right\} \right\Vert _{L_{2}}\\
\le & \left\Vert A_{nT}\right\Vert _{L_{2}}\left\Vert \Gamma\left(\hat{f}_{P}\right)-\Gamma\left(f\right)-\Gamma^{\prime}\left(\hat{f}_{P}\right)+\Gamma^{\prime}\left(f\right)\right\Vert _{L_{2}}\\
\le & \left\Vert A_{nT}\right\Vert _{L_{2}}C_{4}\left|\hat{f}_{P}-f\right|_{s}^{2}\le C_{4}\frac{\left(nT\right)^{1/2}\left|\hat{f}_{P}-f\right|_{s}^{2}\lambda_{\max}^{1/2}\left(\frac{\PSI^{\prime}\PSI}{T}\right)}{\lambda_{\min}^{1/2}\left(\widetilde{\PSI}\widetilde{\PSI}^{\prime}\right)\lambda_{\min}^{1/2}\left(\overline{\Sigma}\right)}.
\end{align*}
The second term can be upper bounded as in Theorem \ref{Th - Asymptotic Normality},
replacing Assumption \ref{ASS4} with Assumption \ref{ASS7} (i) and
(iii).

\subsection{Estimation of the Variance Matrix}

\emph{Proof of Theorem \ref{Th - Estimation of the Variance Matrix}.}
First of all, we derive some general results about $\widehat{\overline{\Sigma}}$
that will be needed for the Wald tests. We will need to consider the
quadratic form $\PSI^{\prime}\widehat{\overline{\Sigma}}\PSI$, that
can be written as follows:

\[
\PSI^{\prime}\widehat{\overline{\Sigma}}\PSI=\PSI^{\prime}\left(\frac{1}{n}\sum_{i=1}^{n}\widehat{U}_{i}\widehat{U}_{i}^{\prime}\right)\PSI=\frac{1}{n}\sum_{i=1}^{n}\left(\PSI^{\prime}\widehat{U}_{i}\right)\left(\PSI^{\prime}\widehat{U}_{i}\right)^{\prime}.
\]
Defining $G_{\PSI}=\PSI\left(\PSI^{\prime}\PSI\right)^{-1}\PSI^{\prime}$
and $S_{\PSI}=I_{T}-G_{\PSI}$, we write $\widehat{U}_{i}$ as: 
\begin{align*}
\widehat{U}_{i}= & Y_{i}-\PSI\widehat{\beta}=Y_{i}-G_{\PSI}\overline{\mathbf{Y}}\\
= & f\left(\mathbf{X}\right)+\varepsilon_{i}-G_{\PSI}\left(f\left(\mathbf{X}\right)+\overline{\varepsilon}\right)\\
= & S_{\PSI}f\left(\mathbf{X}\right)+\left\{ \varepsilon_{i}-G_{\PSI}\overline{\varepsilon}\right\} .
\end{align*}
Therefore: 
\[
\PSI^{\prime}\widehat{U}_{i}=\PSI^{\prime}S_{\PSI}f\left(\mathbf{X}\right)+\PSI^{\prime}\left\{ \varepsilon_{i}-G_{\PSI}\overline{\varepsilon}\right\} =\PSI^{\prime}\left(\varepsilon_{i}-\overline{\varepsilon}\right)
\]
and: 
\begin{equation}
\PSI^{\prime}\left(\widehat{\overline{\Sigma}}-\overline{\Sigma}\right)\PSI=\PSI^{\prime}\left(\frac{1}{n}\sum_{i=1}^{n}\varepsilon_{i}\varepsilon_{i}^{\prime}-\overline{\varepsilon}\overline{\varepsilon}^{\prime}-\overline{\Sigma}\right)\PSI.\label{Eq - Simplified Variance Matrix}
\end{equation}
We need $\lambda_{\max}\left(\frac{1}{n}\sum_{i=1}^{n}\varepsilon_{i}\varepsilon_{i}^{\prime}-\overline{\varepsilon}\overline{\varepsilon}^{\prime}-\overline{\Sigma}\right)$
and we can majorize it as: 
\[
\lambda_{\max}\left(\frac{1}{n}\sum_{i=1}^{n}\varepsilon_{i}\varepsilon_{i}^{\prime}-\overline{\varepsilon}\overline{\varepsilon}^{\prime}-\overline{\Sigma}\right)\le\lambda_{\max}\left(\frac{1}{n}\sum_{i=1}^{n}\varepsilon_{i}\varepsilon_{i}^{\prime}-\overline{\Sigma}\right)+\lambda_{\max}\left(\overline{\varepsilon}\overline{\varepsilon}^{\prime}\right).
\]
Here
\begin{align*}
\mathbb{E}\lambda_{\max}\left(\overline{\varepsilon}\overline{\varepsilon}^{\prime}\right)= & \mathbb{E}\overline{\varepsilon}^{\prime}\overline{\varepsilon}=\sum_{t=1}^{T}\mathbb{V}\left(\frac{1}{n}\sum_{i=1}^{n}\varepsilon_{it}\right)\\
= & \frac{1}{n^{2}}\sum_{t=1}^{T}\sum_{i=1}^{n}\mathbb{V}\left(\varepsilon_{it}\right)\le\frac{T}{n}\max_{i,t}\mathbb{V}\left(\varepsilon_{it}\right)\\
\le & \frac{T}{n}\sqrt{\max_{i,t}\mathbb{E}\varepsilon_{it}^{4}}.
\end{align*}
Now we majorize the other term. In the general case, we have $\lambda_{\max}\left(\frac{1}{n}\sum_{i=1}^{n}\varepsilon_{i}\varepsilon_{i}^{\prime}-\overline{\Sigma}\right)\le\left\Vert \frac{1}{n}\sum_{i=1}^{n}\varepsilon_{i}\varepsilon_{i}^{\prime}-\overline{\Sigma}\right\Vert _{F}$.
Now:
\begin{align*}
\mathbb{E}\left\Vert \frac{1}{n}\sum_{i=1}^{n}\varepsilon_{i}\varepsilon_{i}^{\prime}-\overline{\Sigma}\right\Vert _{F}^{2}= & \mathbb{E}\left\Vert \frac{1}{n}\sum_{i=1}^{n}\left(\varepsilon_{i}\varepsilon_{i}^{\prime}-\Sigma_{i}\right)\right\Vert _{F}^{2}\\
= & \frac{1}{n^{2}}\sum_{j,k=1}^{T}\mathbb{E}\left(\sum_{i=1}^{n}\left(\varepsilon_{ij}\varepsilon_{ik}-\Sigma_{i,jk}\right)\right)^{2}\\
= & \frac{1}{n^{2}}\sum_{j,k=1}^{T}\sum_{i=1}^{n}\mathbb{E}\left(\varepsilon_{ij}\varepsilon_{ik}-\Sigma_{i,jk}\right)^{2}\\
\le & \frac{T^{2}}{n}\max_{i,t}\mathbb{E}\varepsilon_{it}^{4}
\end{align*}
where we have used the Cauchy-Schwarz' inequality to bound the terms
with $j\neq k$. Therefore:
\[
\lambda_{\max}\left(\frac{1}{n}\sum_{i=1}^{n}\varepsilon_{i}\varepsilon_{i}^{\prime}-\overline{\varepsilon}\overline{\varepsilon}^{\prime}-\overline{\Sigma}\right)=O_{\mathbb{P}}\left(\sqrt{\frac{T^{2}\max_{i,t}\mathbb{E}\varepsilon_{it}^{4}}{n}}\right).
\]
This is the bound we will use in the following proofs.

\emph{(i)} We are led to study $\frac{\overline{W}_{nT}^{\prime}\overline{W}_{nT}-R}{\sqrt{2R}}$.
We want to obtain: 
\begin{align*}
\left|\overline{W}_{nT}^{\prime}\overline{W}_{nT}-\widehat{\overline{W}}_{nT}^{\prime}\widehat{\overline{W}}_{nT}\right|= & \left|2\left(\overline{W}_{nT}-\widehat{\overline{W}}_{nT}\right)^{\prime}\overline{W}_{nT}-\left(\overline{W}_{nT}-\widehat{\overline{W}}_{nT}\right)^{\prime}\left(\overline{W}_{nT}-\widehat{\overline{W}}_{nT}\right)\right|\\
\le & 2\left\Vert \overline{W}_{nT}-\widehat{\overline{W}}_{nT}\right\Vert _{L_{2}}\left\Vert \overline{W}_{nT}\right\Vert _{L_{2}}+\left\Vert \overline{W}_{nT}-\widehat{\overline{W}}_{nT}\right\Vert _{L_{2}}^{2}.
\end{align*}
Now:
\begin{align*}
\left\Vert \overline{W}_{nT}-\widehat{\overline{W}}_{nT}\right\Vert _{L_{2}}= & \left\Vert \left(A_{nT}-\widehat{A}_{nT}\right)\Gamma^{\prime}\left(\PSI^{\prime}\PSI\right)^{-1}\PSI^{\prime}\overline{\varepsilon}\right\Vert _{L_{2}}\\
= & \left\Vert \left(I-\widehat{A}_{nT}A_{nT}^{-1}\right)A_{nT}\Gamma^{\prime}\left(\PSI^{\prime}\PSI\right)^{-1}\PSI^{\prime}\overline{\varepsilon}\right\Vert _{L_{2}}\\
\le & \left\Vert I-\widehat{A}_{nT}A_{nT}^{-1}\right\Vert _{L_{2}}\left\Vert \overline{W}_{nT}\right\Vert _{L_{2}}\\
\left|\overline{W}_{nT}^{\prime}\overline{W}_{nT}-\widehat{\overline{W}}_{nT}^{\prime}\widehat{\overline{W}}_{nT}\right|\le & \left\{ 2\left\Vert I-\widehat{A}_{nT}A_{nT}^{-1}\right\Vert _{L_{2}}+\left\Vert I-\widehat{A}_{nT}A_{nT}^{-1}\right\Vert _{L_{2}}^{2}\right\} \left\Vert \overline{W}_{nT}\right\Vert _{L_{2}}^{2}
\end{align*}
where $\left\Vert I-\widehat{A}_{nT}A_{nT}^{-1}\right\Vert _{L_{2}}=\left\Vert V_{nT}^{1/2}\widehat{V}_{nT}^{-1/2}-I\right\Vert _{L_{2}}$
and $\widehat{V}_{nT}$ is a perturbation of $V_{nT}$. Here $\left\Vert \overline{W}_{nT}\right\Vert _{L_{2}}^{2}=\overline{W}_{nT}^{\prime}\overline{W}_{nT}=O_{\mathbb{P}}\left(R\right)$.

In \citet[Th. 2]{mathias1997}, the following bound for $\left(n\times n\right)-$matrices
can be found:
\[
\left\Vert \left[H+\eta\Delta H\right]^{1/2}H^{-1/2}-I\right\Vert _{L_{2}}\le\frac{1}{2}\left(\gamma_{n}-1\right)\eta+O\left(\eta^{2}\right)
\]
where $\left\Vert H^{-1/2}\Delta H\cdot H^{-1/2}\right\Vert _{L_{2}}=1$
and $\gamma_{n}\triangleq\frac{1}{n}\sum_{j=1}^{n}\left|\cot\left(2j-1\right)\pi/2n\right|=\frac{2}{n}\sum_{j=1}^{\left[n/2\right]}\cot\left(2j-1\right)\pi/2n$.
Remark that the case $n=1$ leads directly to the majorization $\left\Vert \left[H+\eta\Delta H\right]^{1/2}H^{-1/2}-I\right\Vert _{L_{2}}=\frac{1}{2}\eta+O\left(\eta^{2}\right)$
and that $\gamma_{n}/\ln n\rightarrow2/\pi$. Alternatively, we can
write it as: 
\[
\left\Vert \left[H+\Delta H\right]^{1/2}H^{-1/2}-I\right\Vert _{L_{2}}\le\frac{1}{2}\left(\gamma_{n}-1\right)\left\Vert H^{-1/2}\Delta H\cdot H^{-1/2}\right\Vert _{L_{2}}+O\left(\left\Vert H^{-1/2}\Delta H\cdot H^{-1/2}\right\Vert _{L_{2}}^{2}\right).
\]
In our case we have: 
\[
\left\Vert H^{-1/2}\Delta H\cdot H^{-1/2}\right\Vert _{L_{2}}\le\left\Vert \Delta H\right\Vert _{L_{2}}\left\Vert H^{-1}\right\Vert _{L_{2}}\le\frac{\lambda_{\max}\left(\Delta H\right)}{\lambda_{\min}\left(H\right)}
\]
where: 
\begin{align*}
\lambda_{\min}\left(H\right)= & \lambda_{\min}\left(\frac{1}{n}\widetilde{\PSI}\left(\PSI^{\prime}\PSI\right)^{-1}\PSI^{\prime}\widehat{\overline{\Sigma}}\PSI\left(\PSI^{\prime}\PSI\right)^{-1}\widetilde{\PSI}^{\prime}\right)\\
\ge & \frac{\lambda_{\min}\left(\widetilde{\PSI}\widetilde{\PSI}^{\prime}\right)\lambda_{\min}\left(\widehat{\overline{\Sigma}}\right)}{nT\lambda_{\max}\left(\frac{\PSI^{\prime}\PSI}{T}\right)}
\end{align*}
and: 
\begin{align*}
\lambda_{\max}\left(\Delta H\right)= & \lambda_{\max}\left(\frac{1}{n}\widetilde{\PSI}\left(\PSI^{\prime}\PSI\right)^{-1}\PSI^{\prime}\left(\widehat{\overline{\Sigma}}-\overline{\Sigma}\right)\PSI\left(\PSI^{\prime}\PSI\right)^{-1}\widetilde{\PSI}^{\prime}\right)\\
\le & \frac{\lambda_{\max}\left(\widetilde{\PSI}\widetilde{\PSI}^{\prime}\right)\lambda_{\max}\left(\widehat{\overline{\Sigma}}-\overline{\Sigma}\right)}{nT\lambda_{\min}\left(\frac{\PSI^{\prime}\PSI}{T}\right)},
\end{align*}
that is: 
\[
\left\Vert H^{-1/2}\Delta H\cdot H^{-1/2}\right\Vert _{L_{2}}\le\frac{\lambda_{\max}\left(\widetilde{\PSI}\widetilde{\PSI}^{\prime}\right)}{\lambda_{\min}\left(\widetilde{\PSI}\widetilde{\PSI}^{\prime}\right)}\frac{\lambda_{\max}\left(\frac{\PSI^{\prime}\PSI}{T}\right)}{\lambda_{\min}\left(\frac{\PSI^{\prime}\PSI}{T}\right)}\frac{\lambda_{\max}\left(\widehat{\overline{\Sigma}}-\overline{\Sigma}\right)}{\lambda_{\min}\left(\widehat{\overline{\Sigma}}\right)}.
\]
The term $\frac{\lambda_{\max}\left(\frac{\PSI^{\prime}\PSI}{T}\right)}{\lambda_{\min}\left(\frac{\PSI^{\prime}\PSI}{T}\right)}$
is bounded under Assumption \ref{ASS2}. If $\frac{T\sqrt{\max_{i,t}\mathbb{E}\varepsilon_{it}^{4}}}{\lambda_{\min}\left(\overline{\Sigma}\right)\sqrt{n}}\rightarrow0$:
\begin{align*}
\lambda_{\max}\left(\widehat{\overline{\Sigma}}-\overline{\Sigma}\right)= & O_{\mathbb{P}}\left(T\sqrt{\frac{\max_{i,t}\mathbb{E}\varepsilon_{it}^{4}}{n}}\right)\\
\lambda_{\min}\left(\widehat{\overline{\Sigma}}\right)\ge & \lambda_{\min}\left(\overline{\Sigma}\right)+\lambda_{\min}\left(\widehat{\overline{\Sigma}}-\overline{\Sigma}\right)\\
= & \lambda_{\min}\left(\overline{\Sigma}\right)\cdot\left[1+\frac{\lambda_{\min}\left(\widehat{\overline{\Sigma}}-\overline{\Sigma}\right)}{\lambda_{\min}\left(\overline{\Sigma}\right)}\right]\\
= & \lambda_{\min}\left(\overline{\Sigma}\right)\cdot\left[1+O\left(\frac{\lambda_{\max}\left(\widehat{\overline{\Sigma}}-\overline{\Sigma}\right)}{\lambda_{\min}\left(\overline{\Sigma}\right)}\right)\right]\\
= & \lambda_{\min}\left(\overline{\Sigma}\right)\cdot\left[1+O_{\mathbb{P}}\left(\frac{T\sqrt{\max_{i,t}\mathbb{E}\varepsilon_{it}^{4}}}{\lambda_{\min}\left(\overline{\Sigma}\right)\sqrt{n}}\right)\right]
\end{align*}
where we have used the previously derived bound on $\lambda_{\max}\left(\widehat{\overline{\Sigma}}-\overline{\Sigma}\right)$
and the Weyl inequality for the smallest eigenvalue. Now:
\begin{align*}
\left\Vert H^{-1/2}\Delta H\cdot H^{-1/2}\right\Vert _{L_{2}}\le & O_{\mathbb{P}}\left(\frac{\lambda_{\max}\left(\widetilde{\PSI}\widetilde{\PSI}^{\prime}\right)}{\lambda_{\min}\left(\widetilde{\PSI}\widetilde{\PSI}^{\prime}\right)}\frac{T\sqrt{\max_{i,t}\mathbb{E}\varepsilon_{it}^{4}}}{\lambda_{\min}\left(\overline{\Sigma}\right)\sqrt{n}}\right)\\
\left\Vert \left[H+\Delta H\right]^{1/2}H^{-1/2}-I\right\Vert _{L_{2}}\le & O_{\mathbb{P}}\left(\ln R\left(\frac{\lambda_{\max}\left(\widetilde{\PSI}\widetilde{\PSI}^{\prime}\right)}{\lambda_{\min}\left(\widetilde{\PSI}\widetilde{\PSI}^{\prime}\right)}\frac{T\sqrt{\max_{i,t}\mathbb{E}\varepsilon_{it}^{4}}}{\lambda_{\min}\left(\overline{\Sigma}\right)\sqrt{n}}\right)\right).
\end{align*}
 Therefore, provided $\frac{\lambda_{\max}\left(\widetilde{\PSI}\widetilde{\PSI}^{\prime}\right)}{\lambda_{\min}\left(\widetilde{\PSI}\widetilde{\PSI}^{\prime}\right)}\frac{T\left(1\vee\ln R\right)\sqrt{\max_{i,t}\mathbb{E}\varepsilon_{it}^{4}}}{\sqrt{n}\lambda_{\min}\left(\overline{\Sigma}\right)}\rightarrow0$:
\begin{align*}
\frac{\left|\overline{W}_{nT}^{\prime}\overline{W}_{nT}-\widehat{\overline{W}}_{nT}^{\prime}\widehat{\overline{W}}_{nT}\right|}{\sqrt{2R}}\le & \frac{\left\{ 2\left\Vert I-\widehat{A}_{nT}A_{nT}^{-1}\right\Vert _{L_{2}}+\left\Vert I-\widehat{A}_{nT}A_{nT}^{-1}\right\Vert _{L_{2}}^{2}\right\} \left\Vert \overline{W}_{nT}\right\Vert _{L_{2}}^{2}}{\sqrt{2R}}\\
= & O_{\mathbb{P}}\left(\frac{\lambda_{\max}\left(\widetilde{\PSI}\widetilde{\PSI}^{\prime}\right)}{\lambda_{\min}\left(\widetilde{\PSI}\widetilde{\PSI}^{\prime}\right)}\frac{T\sqrt{R}\left(1\vee\ln R\right)\sqrt{\max_{i,t}\mathbb{E}\varepsilon_{it}^{4}}}{\lambda_{\min}\left(\overline{\Sigma}\right)\sqrt{n}}\right).
\end{align*}
The result of the theorem follows.

\end{document}